\documentclass{article}
\usepackage{amsmath}
\usepackage{amssymb}
\usepackage{amsthm}
\usepackage{hyperref}
\usepackage[margin=1.0in]{geometry}
\usepackage{multicol}
\usepackage{graphicx}
\usepackage{float}
\usepackage{mathrsfs}
\usepackage{tikz}
\usepackage{shuffle}


\newtheorem{thm}{Theorem}[section]
\newtheorem{lem}{Lemma}[section]
\newtheorem*{cor}{Corollary}


\newcommand{\SM}{\setminus}

\newcommand{\I}{\infty}

\newcommand{\af}{\alpha}
\newcommand{\bt}{\beta}
\newcommand{\gm}{\gamma}

\newcommand{\ep}{\varepsilon}

\newcommand{\et}{\eta}

\newcommand{\te}{\theta}
\newcommand{\ld}{\lambda}
\newcommand{\sm}{\sigma}

\newcommand{\ph}{\varphi}

\newcommand{\rh}{\rho}
\newcommand{\om}{\omega}
\newcommand{\ta}{\tau}

\newcommand{\Dt}{\Delta}

\newcommand{\Ld}{\Lambda}

\newcommand{\Om}{\Omega}

\newcommand{\Z}{{\mathbb{Z}}}
\newcommand{\R}{{\mathbb{R}}}

\newcommand{\id}{{\mathrm{id}}}

\newcommand{\sgn}{{\mathrm{sgn}}}

\newcommand{\ul}{\underline}
\newcommand{\mft}{\mathfrak{t}}

\newcommand{\rmn}[1]{{\rm{#1}}}

\setlength{\parindent}{5mm}
\setlength{\parskip}{5mm}


\bibliographystyle{plain}


\title{The Partition Function of Log-Gases with Multiple Odd Charges}
\author{Elisha D. Wolff  \and Jonathan M. Wells}
\date{\today}
\newcommand{\keywords}{Partition function, Berezin integral, Pfaffian, Hyperpfaffian, Grand canonical ensemble, Shuffle algebra}

\begin{document}
	
	\maketitle

	\begin{abstract} We use techniques in the shuffle algebra to present a formula for the partition function of a one-dimensional log-gas comprised of particles of (possibly) different integer charges at certain inverse temperature $\bt$ in terms of the Berezin integral of an associated non-homogeneous alternating tensor. This generalizes previously known results by removing the restriction on the number of species of odd charge. Our methods provide a unified framework extending the de Bruijn integral identities from classical $\beta$-ensembles ($\beta = 1, 2, 4$) to multicomponent ensembles, as well as to iterated integrals of more general determinantal integrands.
	\end{abstract}

	\noindent {\bf Keywords:} \keywords
	
	\section{Introduction}
	\label{sec:intro}
	Suppose a finite number of charged particles are placed on an infinite wire represented by the real line. The charges of the particles are assumed to be integers with the same sign, and the particles are assumed to repel each other with logarithmic interactions. We assume any two particles of the same charge, which we will call same \emph{species}, are indistinguishable. The wire is imbued with a potential which discourages the particles from escaping to infinity in either direction, and heat is applied to the system according to a parameter we call the inverse temperature $\bt$.
	
	We consider two ensembles:
	\begin{enumerate}
		\item \emph{The Canonical Ensemble}, in which the number of particles of each species is fixed; in this case, we say fixed population.
		\item \emph{The Isocharge Grand Canonical Ensemble}, in which the sum of the charges of the particles is fixed, but the number of particles of each species is allowed to vary; in this case, we say the total charge of the system is fixed.
	\end{enumerate}
	
	In contrast, the \emph{Grand Canonical Ensemble} traditionally refers to the ensemble in which the total number of particles is not fixed. For computational purposes, it is beneficial to group configurations which share the same total charge. The Grand Canonical Ensemble is then a disjoint union (over all possible sums of charges) of our Isocharge ensembles.

	In 2012, Sinclair \cite{Sinclair2012} provided a closed form of the partition function for both ensembles in terms of Berezin integrals (see \autoref{ssec:berezin}) of alternating tensors, but only for certain $\bt$ and only for ensembles with at most one species of odd charge. Here, we provide an alternative framework which allows us to generalize the result to ensembles with arbitrary mix of odd and even charges, albeit with the same limitations on $\beta$. These Berezin integral expressions are analogous to the (Pfaffian) de Bruijn integral identities \cite{deBruijn1955} (for classical $\bt=1$ and $\bt=4$ ensembles). Our methods are analogous to algebraic methods first used by Thibon and Luque \cite{Luque2002} in the classical single species cases. In \autoref{ssec:debruijn}, we demonstrate how these methods can also be applied to iterated integrals of more general determinantal integrands. 
	
	By first conditioning on the number of particles of each species, the partition function for the Isocharge Grand Canonical Ensemble is built up from the partition functions of the Canonical type, revealing the former to be a generating function of the latter as a function of the \emph{fugacities} of each species (roughly, the probability of the occurrence of any one particle of a given charge). In \autoref{sec:circular}, we produce analogous results for charged particles placed on the unit circle in the complex plane.

	\subsection{Historical context}
	\label{sec:hist}
	
	
	The $\bt$-ensembles are a well-studied collection of random matrices whose eigenvalue densities take a common form, indexed by a non-negative, real parameter $\bt$. Suppose $\mu$ is a continuous probability measure on $\R$ with Radon-Nikodym derivative $\frac{d \mu}{d x} = w(x)$. For each $\bt\in \R_{> 0}$, consider the $N$-point process specified by the joint probability density
	\[
	\rh_N(x_1,\ldots,x_N)  =\frac{1}{Z_{N}(\bt)N!}\prod_{i<j}\left|x_j-x_i\right|^\bt \prod_i w(x_i)
	\]
	where $Z_N(\bt)$, which denotes the \emph{partition function} of $\bt$, and $Z_{N}(\bt)N!$ is the normalizing constant required for $\rh_N$ to be a probability density function. Explicitly,
	$$
Z_N(\bt) = \frac{1}{N!} \int_{\R^N} 	\prod_{i<j}\left|x_j-x_i\right|^\bt \prod_i w(x_i) \, dx_1 \dots, dx_N.
	$$
The integral that appears above is closely related to the Selberg integral \cite{Selberg1944} and its generalization, the Aomoto integral \cite{Aomoto1987}. In \cite{Luque2003} and \cite{Luque2004}, Luque and Thibon presented an evaluation of these integrals in terms of hyperdeterminants, which were first introduced by Cayley in \cite{Cayley1843}.

Moreover, the eigenvalue density function $\rho_N$ above can be identified with the Boltzmann factor of the previously discussed log-gas particles, as first observed by Dyson \cite{Dyson1962}, and further developed by Forrester in \cite{Forrester2010}. 
	
	The \emph{classical} $\bt$-ensembles are those with $\bt=1,2,4$ and $w(x) = e^{-x^2/2}$ and correspond to Hermitian matrices with real, complex, or quaternionic Gaussian entries (respectively). The $\beta = 1$ case was first investigated in the 1950s by Wigner in the context of nuclear physics \cite{wig}, following Wigner's discovery of a similar ensemble of real-valued matrices used by Wishart in the 1920s in the field of multivariate statistics \cite{wish}. In the subsequent decade, Dyson and Mehta \cite{Dyson1963} unified a previously disparate collection of random matrix models by demonstrating that the three classic $\bt$-ensembles are each variations of a single action on random Hermitian matrices (representing the three associative division algebras over $\R$). In \cite{Dumitriu2002}, Dumitriu and Edelman provide tridiagonal matrix models for $\bt$-ensembles of arbitrary positive $\bt$, which are then used by Ram\'irez, Rider, and Vir\'ag in \cite{Ramirez2011} to obtain the asymptotic distribution of the largest eigenvalue.   
	
	For each $1\leq n\leq N$, define the $n^{\rm{th}}$ correlation function by 
	\[
	R_n(x_1,\ldots,x_n)=\frac{N!}{(N-n)!}\int_{\R^{N-n}}\rh_N(x_1,\ldots,x_n,y_1,\ldots,y_{N-n})\,dy_1\cdots dy_{N-n}.
	\]
	It turns out that the correlation function for the classic $\beta$-ensembles takes a particularly nice algebraic form. For example, when $\bt=2$, it can be shown using only elementary matrix operations and Fubini's Theorem that
	\[
	R_n(x_1,\ldots,x_n)=\det(K(x_i,x_j)_{1\leq i,j\leq n}),
	\]
	where the \emph{kernel} $K(x,y)$ is a certain square integrable function $\R\times \R\to \R$ that can most easily be expressed in terms of a family of polynomials which are orthogonal with respect to the measure $\mu$. For this reason, we say the classical $\bt=2$ ensemble is an example of a \emph{determinantal} point process. The details of this derivation are given in \cite{Mehta2004}. Similarly, when $\bt=1$ or $4$, 
	\[
	R_n(x_1,\ldots,x_n)={\rm{Pf}}(K_\beta(x_i,x_j)_{1\leq i,j\leq n}),
	\]
	where ${\rm{Pf}}(A)=\sqrt{\det(A)}$ denotes the Pfaffian of an antisymmetric matrix $A$, and where $K(x,y)$ is a certain $2\times 2$ matrix-valued function whose entries are square-integrable, and which satisfies $K(x,y)^T = -K(y,x)$. We then say the classical $\bt=1$ and $\bt=4$ ensembles are examples of \emph{Pfaffian} point processes. This result was first shown for circular ensembles by Dyson in \cite{Dyson1962}, then for Gaussian ensembles by Mehta in \cite{Mehta2004} and then for general weights ($\mu$) by Mehta and Mahoux in \cite{mahoux}, except for the case $\bt=1$ and $N$ odd. Finally, the last remaining case was given by Adler, Forrester, and Nagao in \cite{adler}. An investigation of \emph{hyperdeterminantal point processes}, another generalization of the determinantal point process, tracing its roots to Cayley's hyperdeterminants, can be found in \cite{Evans2009}.
	

	\subsection{Hyperpfaffian partition functions}
	\label{sec:hpf}
	
	Derivations of the determinantal and Pfaffian expressions of the correlation functions have been presented in numerous ways over the past several decades. Of particular note is the method of Tracy and Widom \cite{wid}, who first show that the partition function is determinantal or Pfaffian, and then use matrix identities and generating functions to obtain a corresponding form for the correlation functions.	
	
	But recognizing the partition function $Z_N(\bt)$ as the determinant or Pfaffian of a matrix of integrals of appropriately chosen orthogonal polynomials is essential and nontrivial. One way to do this is to apply the Andreief determinant identity \cite{Andreief1886} to the iterated integral which defines $Z_N(\bt)$. This is immediate when $\bt=2$, and viewing the Pfaffian as the square root of a determinant, this identity can also be applied (with some additional finesse) when $\bt=1$ or $4$. However, viewing the Pfaffian in the context of the exterior algebra allows us to extend the Andreief determinant identity to analogous Pfaffian identities (referred to as the de Bruijn integral identities \cite{deBruijn1955}). 
	
	In 2002, Luque and Thibon \cite{Luque2002} used techniques in the shuffle algebra to show that when $\bt=L^2$ is an even square integer, the partition function $Z_N(\bt)$ can be written as a Hyperpfaffian of an $L$-form whose coefficients are integrals of Wronskians of suitable polynomials. Then in 2011, Sinclair \cite{Sinclair2011} used other combinatorial methods to show that the result also holds when $\bt=L^2$ is an odd square integer. 
	
	Here, we consider \emph{multicomponent} ensembles in which the joint probability density functions generally have the form
	\[
	\rh_N(x_1,\ldots,x_N)=\frac{1}{Z_{f}(\bt)N!}\prod_{i<j}\left|x_j-x_i\right|^{\bt f(i,j)}\prod_i w_i(x_i),
	\]
	where $f:\R^{2}\to\R$ specifies (possibly) different exponents for each factor in the product. This setup was first investigated for particles of charge $1$ and $2$ on complex unit circle by Forrester in \cite{Forrester1984Int} and \cite{Forrester1984Exact}, and then considered on the real line by Rider, Sinclair and Xu in \cite{Rider2010}. This model is closely associated to the eigenvalue densities for the real Ginibre ensembles as discussed by Forrester and Nagao in \cite{Forrester2007}, and then by Borodin and Sinclair in \cite{Borodin2009}. The limiting behavior for the two species model in the circular case was later studied by Shum and Sinclair in \cite{Shum2014}. A recent paper by Forrester and Li \cite{Forrester2021} extends these results further to express the skew orthogonal polynomials for classical weight functions in terms of hypergeometric polynomials.
	
	In 2012, Sinclair showed (using his same methods) the partition function $Z_f(\bt)$ has a Berezin integral expression when each $\sqrt{\bt}L_j$ is an even integer. Herein, we extend to arbitrary positive integers $\sqrt{\bt}L_j\in\Z_{>0}$ using  shuffle algebra techniques, analogous to the methods of Thibon and Luque. Note, single-component $\bt$-ensembles are a subset of multicomponent ensembles in which $f(i,j)=1$, and the Berezin integral is a generalization of the Hyperpfaffian. Thus, one consequence of this work is a new, streamlined derivation of the Hyperpfaffian partition functions for the single-component $\bt$-ensembles in which $\bt=L^2$ is any square integer, even or odd. 
	
	Though multicomponent ensembles are the focus of this volume, our methods further generalize to other ensembles. We explore many of these possibilities in a forthcoming volume. One key observation is that $\rh_N$ can be written as a determinant (typically easier than expressing $Z_N$ as a determinant). More generally, we can replace the partition function $Z_N$ (or $Z_f$) with an iterated integral of any determinant fitting certain criteria. In \autoref{ssec:debruijn}, we describe how to modify the statement of the main results of this volume, producing a vast generalization of the de Bruijn integral identities.

	
	

	\subsection{The multicomponent setup}
	\label{sec:setup}
	
	Let $J\in \Z_{> 0}$ be a positive integer, the maximum number of distinct charges in the system. Let $\vec{L}=(L_1,L_2,\ldots,L_J)\in (\Z_{> 0})^J$ be a vector of distinct positive integers which we will call the \emph{charge vector} of the system. Let $\vec{M}\in (\Z_{\geq 0})^J$ be a vector of non-negative integers which we will call the \emph{population vector} of the system. Each $M_j$ gives the number (possibly zero) of indistinguishable particles of charge $L_j$. Let 
	\[
	{\textbf{x}}=(\textbf{x}^1,\textbf{x}^2,\ldots,\textbf{x}^J)\in \R^{M_1}\times \R^{M_2}\times\cdots\times\R^{M_J}
	\]
	so that $\textbf{x}^j=(x_1^j,x_2^j,\ldots,x_{M_j}^j)\in\R^{M_j}$ for each $j$. We call ${\textbf{x}}$ the \emph{location vector} of the system in which each $x_m^j\in\R$ gives the location of a particle of charge $L_j$. We call $\textbf{x}^j$ the \emph{location vector} for the species with charge $L_j$. If some $M_j=0$, then we take $\textbf{x}^j$ to be the empty vector. 
	
	The particles are assumed to interact logarithmically on an infinite wire so that the contribution of energy to the system by two particles of charge $L_j$ and $L_k$ at locations $x_m^j$ and $x_n^k$ respectively is given by $-L_jL_k\log |x_n^k-x_m^j|$. If $U$ is the potential on the system, then at inverse temperature $\bt$, the total potential energy of the system is given by
	\[
	E_{\vec{M}}({\textbf{x}})=\bt\sum_{j=1}^JL_j\sum_{m=1}^{M_j}U(x_m^j)-\bt\sum_{j=1}^JL_j^2\sum_{m<n}\log|x_n^j-x_m^j|-\bt\sum_{j<k}L_jL_k\sum_{m=1}^{M_j}\sum_{n=1}^{M_k}\log|x_n^k-x_m^j|.
	\]
	The first type of iterated sum accounts for the potential $U$, the second type of iterated sum accounts for interactions between particles of the same charge $L_j$, and the third type of iterated sum accounts for the interactions between particles of distinct charges $L_j$ and $L_k$. 
	
	With this setup, the relative density of states (corresponding to varying location vectors ${\textbf{x}}$) is given by the Boltzmann factor
	\begin{align*}
		\Om_{\vec{M}}({\textbf{x}})&=\exp(-E_{\vec{M}}({\textbf{x}}))\\&=\prod_{j=1}^J\prod_{m=1}^{M_j}\exp\left(-\bt L_jU(x_m^j)\right)\times \prod_{j=1}^J\prod_{m<n}\left|x_n^j-x_m^j\right|^{\bt L_j^2}\times \prod_{j<k}\prod_{m=1}^{M_j}\prod_{n=1}^{M_k}\left|x_n^k-x_m^j\right|^{\bt L_jL_k}.
	\end{align*}
	Later, it will be convenient to write $W_{\vec{M}}({\textbf{x}})$ in place of the first of the three iterated products above. In the case when $\sqrt{\bt}L_j\in \Z$ for all $j$, we will also write $\left|\det V^{\vec{L},\vec{M}}({\textbf{x}})\right|$ in place of the product of the remaining two iterated products above. In \autoref{ssec:confluent}, we construct the matrix $V^{\vec{L},\vec{M}}({\textbf{x}})$ of which this is the determinant. Then the probability of finding the system in a state corresponding to a location vector ${\textbf{x}}$ is given by the joint probability density function
	\[
	\rh_{\vec{M}}({\textbf{x}})=\frac{\Om_{\vec{M}}({\textbf{x}})}{Z_{\vec{M}}M_1!M_2!\cdots M_J!}=\frac{W_{\vec{M}}({\textbf{x}})\left|\det V^{\vec{L},\vec{M}}({\textbf{x}})\right|}{Z_{\vec{M}}M_1!M_2!\cdots M_J!},
	\]
	where the \emph{partition function} (of the Canonical Ensemble) $Z_{\vec{M}}$ is the normalization constant given by
	\begin{align*}
		Z_{\vec{M}}&=\frac{1}{M_1!M_2!\cdots M_J!}\int_{\R^{M_1}}\cdots \int_{\R^{M_J}}\Om_{\vec{M}}({\textbf{x}})\,d\nu^{M_1}(\textbf{x}^1)\,d\nu^{M_2}(\textbf{x}^2)\cdots d\nu^{M_J}(\textbf{x}^J)\\&=\frac{1}{M_1!M_2!\cdots M_J!}\int_{\R^{M_1}}\cdots \int_{\R^{M_J}}\left|\det V^{\vec{L},\vec{M}}({\textbf{x}})\right|\,d\mu_1^{M_1}(\textbf{x}^1)\,d\mu_2^{M_2}(\textbf{x}^2)\cdots d\mu_J^{M_J}(\textbf{x}^J),
	\end{align*}
	with Lebesgue measure $\nu^{M_j}$ on $\R^{M_j}$ and $d\mu_j (x)=w_j(x)\,dx=\exp\left(-\bt L_jU(x)\right)dx$. 
	
	Note, the factorial denominators appear since particles of the same charge are indistinguishable, giving many different representatives for each state. In particular, the integrand is invariant under permutation of $\{x_1^j,x_2^j,\ldots,x_{M_j}^j\}$ for any $j$ fixed. At this point, it is necessary to assume the potential $U$ is one for which $Z_{\vec{M}}$ is finite. Also, replacing $\bt$ with $\bt'=\bt/b^2$ and replacing each $L_j$ with $L_j'=bL_j$ leaves $\left|\det V^{\vec{L},\vec{M}}({\textbf{x}})\right|$ unchanged. Then replacing $U$ with $U'=bU$ leaves $W_{\vec{M}}({\textbf{x}})$ unchanged. Thus, for computational purposes, we can always assume $\bt=1$. 
	
	Next, allowing the number of particles of each species to vary, let $P(\vec{M})$ be the probability of finding the system with population vector $\vec{M}$. Let $\vec{z}=(z_1,\dots,z_J)\in (\R_{> 0})^J$ be a vector of positive real numbers called the \emph{fugacity vector}. Classically, the probability $P(\vec{M})$ is given by
	\[
	P(\vec{M})=z_1^{M_1}z_2^{M_2}\cdots z_{J}^{M_J}\frac{Z_{\vec{M}}}{Z_N},
	\]
	where $Z_N$ is the partition function of the Isocharge Grand Canonical Ensemble (corresponding to fixed total charge $N$) given by
	\[
	Z_N=\sum_{\vec{L}\cdot \vec{M}=N}z_1^{M_1}z_2^{M_2}\cdots z_{J}^{M_J}Z_{\vec{M}}.
	\]
	In the above expression, the vector $\vec{L}$ of allowed charges is fixed, so we are summing over allowed population vectors $\vec{M}$. A population vector is valid only when the sum of the charges $\sum_{j=1}^JL_jM_j$ is equal to the prescribed total charge $N$. 
	
	This $Z_N$ is the primary object of interest to us. In addition to the obvious dependence on charge vector $\vec{L}$ and total charge $N$, this expression also varies with potential $U$ and inverse temperature $\bt$. The potential $U$ dictates the external forces experienced by each particle individually, affecting the measures against which we are integrating. The inverse temperature $\bt$ influences the strength of the interactions between the particles, affecting the exponents on the interaction terms in the Boltzmann factor. Taking the fugacity vector $\vec{z}$ to be a vector of indeterminants, $Z_N$ is a polynomial in these indeterminants which generates the partition functions of the Canonical Ensembles.

	Recall, each $Z_{\vec{M}}$ is an iterated integral in many variables. Our goal here is not to compute these integrals for any particular choice of several parameters. Instead, we will show, in general, how to reduce the difficulty of the computation to (sums of products of) only single or double integrals. Sinclair (2012) showed for certain values of $\bt$ (for which $\sqrt{\bt}L_j\in \Z$ for all $j$), $Z_N$ can be expressed as the Berezin integral (with respect to the volume form on $\R^N$) of the exponential of an explicit element (alternating tensor, also \emph{form}) in the exterior algebra $\bigwedge(\R^N)$ in the case when at most one of the $L_j$ is odd. In the case when exactly one of the $L_j$ is odd, say $L_1$, this required the additional restriction that $N$ be even, which would force $M_1$ to be even. We will show his expression can be extended to arbitrary $\vec{L}$ (for which any number of the $L_j$ may be odd), and in the case when $N$ is odd, we give an analogous Berezin integral expression with respect to the volume form on $\R^{N+1}$. 
	
	\section{Preliminary definitions}
	\label{sec:prelim}
	
	In this section, we introduce a mix of conventions and definitions which simplify the statement of our main results. First, for any positive integer $N$, let $\underline{N}$ denote the set $\{1,\ldots,N\}$. Assuming positive integers $K\leq N$, let $\mft:\underline{K}\nearrow\underline{N}$ denote a strictly increasing function from $\underline{K}$ to $\underline{N}$, meaning
	\[
	1\leq\mft(1)<\mft(2)<\cdots<\mft(K)\leq N.
	\]
	It will be convenient to use these increasing functions to track indices used in denoting minors of matrices and elements of exterior algebras, among other things (often in place of, but sometimes in conjunction with, permutations). For example, given an $N\times N$ matrix $V$, $V_{\mft}$ might denote the $K\times K$ minor composed of the rows $\mft(1),\ldots,\mft(K)$, taken from the first $K$ columns of $V$. More conventions related to indexing and permutations (which are relevant to the proofs) are introduced in \autoref{sec:lemmas} but are not necessary for the statement of our main results. 
	
	\subsection{Wronskians}
	
	For any non-negative integer $l$, define the $l^{\rm{th}}$ modified differential operator $D^l$ by
	\[
	D^lf(x)=\frac{1}{l!}\frac{d^lf}{dx^l},
	\]
	with $D^0f(x)=f(x)$. Define the modified Wronskian, Wr$(\vec{f},x)$, of a family, $\vec{f}=\{f_n\}_{n=1}^L$, of $L$ many sufficiently differentiable functions by 
	\[
	{\rm{Wr}}(\vec{f},x)=\det\left[D^{l-1}f_n(x)\right]_{n,l=1}^L.
	\]
	We call this the \emph{modified} Wronskian because it differs from the typical Wronskian (used in the study of elementary differential equations to test for linear dependence of solutions) by a combinatorial factor of $\prod_{l=1}^L l!$. 
	
	A \emph{complete} $N$-family of monic polynomials is a collection $\vec{p}=\{p_n\}_{n=1}^N$ such that each $p_n$ is monic of degree $n-1$. Given $\mft:\underline{L}\nearrow \underline{N}$, denote $\vec{p}_\mft=\{p_{\mft(k)}\}_{k=1}^L$. Then the (modified) Wronskian of $\vec{p}_\mft$ is given by
	\[
	{\rm{Wr}}(\vec{p}_\mft,x)=\det\left[D^{l-1}p_{\mft(k)}(x)\right]_{k,l=1}^L.
	\]

	\subsection{The Berezin integral}
	\label{ssec:berezin}
	Let $\ep_1,\ldots,\ep_N$ be a basis for $\R^N$. For any injection $\mft:\underline{K}\to \underline{N}$, let $\ep_\mft\in \bigwedge^{K}(\R^N)$ denote
	\[
	\ep_\mft=\ep_{\mft(1)}\wedge\ep_{\mft(2)}\wedge\cdots\wedge\ep_{\mft(K)}.
	\]
	Then $\{\ep_\mft\,|\,\mft:\underline{K}\nearrow\underline{N}\}$ is a basis for $\bigwedge^K(\R^N)$. In particular, $\bigwedge^N(\R^N)$ is a one-dimensional subspace we call the \emph{determinantal line}, spanned by
	\[
	\ep_{\rmn{vol}}=\ep_{\rm{id}}=\ep_1\wedge\ep_2\wedge\cdots\wedge\ep_N,
	\]
	which we call the \emph{volume form} (in $\R^N$). For each $0<n\leq N$, define $\frac{\partial}{\partial \ep_n}:\bigwedge^K(\R^N)\to \bigwedge^{K-1}(\R^N)$ on basis elements by 
	\[
	\frac{\partial}{\partial \ep_n}\ep_{\mft}=\begin{cases}
		(-1)^{k}\ep_{\mft(1)}\wedge\cdots\wedge\ep_{\mft(k-1)}\wedge\ep_{\mft(k+1)}\wedge\cdots\wedge\ep_{\mft(K)} & \text{ if } k=\mft^{-1}(n) \\ 0 & \text{ otherwise}
	\end{cases},
	\]
	and then extend linearly. If $n\in \mft(\underline{K})$, meaning $\ep_n$ appears as a factor in $\ep_{\mft}$, then $\frac{\partial \ep_{\mft}}{\partial \ep_n}$ is the result of permuting $\ep_n$ to the front and then removing it, picking up a sign associated with changing the order in which the basis elements occur. If $\ep_\mft$ does not have $\ep_n$ as a factor, then $\frac{\partial \ep_{\mft}}{\partial \ep_n}=0$. Given an injection $\mathfrak{s}:\underline{L}\to \underline{N}$, we define the Berezin integral \cite{Berezin1966} (with respect to $\ep_{\mathfrak{s}}$) as a linear operator $\bigwedge(\R^N)\to \bigwedge(\R^N)$ given by
	\[
	\int\ep_{\mft}\,d\ep_{\mathfrak{s}}=\int \ep_{\mft}\,d\ep_{\mathfrak{s}(1)}\,d\ep_{\mathfrak{s}(2)}\cdots d\ep_{\mathfrak{s}(L)}=\frac{\partial}{\partial \ep_{\mathfrak{s}(L)}}\cdots \frac{\partial}{\partial \ep_{\mathfrak{s}(2)}}\frac{\partial}{\partial \ep_{\mathfrak{s}(1)}}\ep_{\mft}.
	\]
	Our main results are stated in terms of Berezin integrals with respect to the volume form $\ep_{\rmn{vol}}\in \bigwedge^N(\R^N)$. Note, if $\ep_{\mft}\in \bigwedge^K(\R^N)$ for any $K<N$, then 
	\[
	\int \ep_{\mft}\,d\ep_{\rmn{vol}}=0
	\]
	because $\ep_{\mft}$ is missing some $\ep_k$ as a factor. Thus, the Berezin integral with respect to $\ep_{\rmn{vol}}$ is a projection operator $\bigwedge(\R^N)\to \bigwedge^N(\R^N)\cong \R$. In particular, if $\sm\in S_N$, then
	\[
	\int \ep_{\sm}\,d\ep_{\rmn{vol}}=\sgn (\sm).
	\]

	\subsection{Exponentials of forms}

	For $\om\in \bigwedge(\R^N)$ and positive integer $m$, we write
	\[
	\om^{\wedge m}=\om\wedge\cdots\wedge \om,
	\]
	with $\om$ appearing as a factor $m$ times. By convention, $\om^{\wedge 0}=1$. We then define the exponential
	\[
	\exp(\om)=\sum_{m=0}^\I\frac{\om^{\wedge m}}{m!}.
	\]
	Moreover, suppose $\om=\om_1+\om_2+\cdots+\om_J$ where each $\om_j\in \bigwedge^{L_j}(\R^N)$ and each $L_j$ even, then (we say each $\om_j$ is a homogeneous even form of length $L_j$ and) it is easily verified 
	\[
	\exp({\om})=\exp({\om_1+\cdots+\om_J})=\exp({\om_1})\wedge\cdots\wedge \exp({\om_J}).
	\]
	
	In our main results, the forms we are exponentiating are non-homogeneous. However, we get a homogeneous form in the case when we only have one species of particle. In that case, exactly one summand in the exponential will live at the determinantal line. Assuming $\om\in \bigwedge^L(\R^N)$ with $LM=N$, we get
	\[
	\int \exp(\om)\,d\ep_{\rmn{vol}}=\int \sum_{m=0}^\I\frac{\om^{\wedge m}}{m!}\,d\ep_{\rmn{vol}}=\int \frac{\om^{\wedge M}}{M!}\,d\ep_{\rmn{vol}}={\rm{PF}}(\om),
	\]
	where PF$(\om)$ is the \emph{Hyperpfaffian} of $\om$, the real number coefficient on $\ep_{\rmn{vol}}$ in $\frac{\om^{\wedge M}}{M!}$. Thus, this Berezin integral is the appropriate generalization of the Hyperpfaffian. To avoid confusing this Berezin integral with other integrals which appear in our computations, we'll write
	\[
	{\rm{BE}}_{\rmn{vol}}(\om)=\int\exp(\om)\,d\ep_{\rmn{vol}},
	\]
	where the subscript on the left hand side indicates which form we are integrating with respect to.
	
	The partition function of an ensemble with a single species has been shown to have a Hyperpfaffian expression (for certain $\bt$) \cite{Sinclair2011}. As we generalize to ensembles with multiple species (and therefore non-homogeneous forms), we replace the Hyperpfaffian with the more general Berezin integral (of an exponential). 
	
	\section{Statement of results}
	\label{sec:results}
	
	Recall the setup from \autoref{sec:setup}. Let $\vec{p}$ be a complete $N$-family of monic polynomials. Define
	\[
	\gm_j=\sum_{\mft:\underline{L_j}\nearrow \underline{N}}\int_{\R}{\rm{Wr}}(\vec{p}_{\mft},x)\,d\mu_j(x)\,\,\ep_{\mft},
	\]
	and define
	\[
	\et_{j,k}=\sum_{\mft:\underline{L_j}\nearrow\underline{N}}\sum_{\mathfrak{s}:\underline{L_k}\nearrow\underline{N}}\int\int_{x<y}{\rm{Wr}}(\vec{p}_{\mft},x){\rm{Wr}}(\vec{p}_{\mathfrak{s}},y)\,d\mu_j(x)d\mu_k(y)\,\,\ep_{\mft}\wedge\ep_{\mathfrak{s}}.
	\]

	\begin{thm}\label{thm:alleven}  If all $L_j$ are even, then 
		\[
		Z_{N}={\rm{BE}}_{\rmn{vol}}\left(\sum_{j=1}^Jz_j\gm_j\right).
		\]
	\end{thm}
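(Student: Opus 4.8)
The plan is to compare the two sides as polynomials in the fugacities $z_1,\dots,z_J$, reduce \autoref{thm:alleven} to one identity for each admissible population vector $\vec M$, and then prove that identity by a generalized Laplace (de Bruijn--type) expansion of $\det V^{\vec L,\vec M}$. First, since all $L_j$ are even, the identity $\exp(\om_1+\cdots+\om_J)=\exp(\om_1)\wedge\cdots\wedge\exp(\om_J)$ for homogeneous even forms applies with $\om_j=z_j\gm_j\in\bigwedge^{L_j}(\R^{N})$, giving
\begin{align*}
{\rm{BE}}_{\rmn{vol}}\Big(\sum_{j=1}^{J}z_j\gm_j\Big)
&=\int\bigwedge_{j=1}^{J}\exp(z_j\gm_j)\,d\ep_{\rmn{vol}}\\
&=\sum_{\vec M\in(\Z_{\geq 0})^{J}}\frac{z_1^{M_1}\cdots z_J^{M_J}}{M_1!\cdots M_J!}\int\gm_1^{\wedge M_1}\wedge\cdots\wedge\gm_J^{\wedge M_J}\,d\ep_{\rmn{vol}}.
\end{align*}
Because $\gm_1^{\wedge M_1}\wedge\cdots\wedge\gm_J^{\wedge M_J}$ lies in $\bigwedge^{\vec L\cdot\vec M}(\R^{N})$, its Berezin integral against $\ep_{\rmn{vol}}$ vanishes unless $\vec L\cdot\vec M=N$; comparing the surviving monomials with $Z_N=\sum_{\vec L\cdot\vec M=N}z_1^{M_1}\cdots z_J^{M_J}Z_{\vec M}$ reduces the theorem to showing, for each $\vec M$ with $\vec L\cdot\vec M=N$, that $M_1!\cdots M_J!\,Z_{\vec M}=\int\gm_1^{\wedge M_1}\wedge\cdots\wedge\gm_J^{\wedge M_J}\,d\ep_{\rmn{vol}}$; cancelling these factorials against those in the definition of $Z_{\vec M}$ turns this into
\begin{equation*}
\int_{\R^{M_1}}\!\cdots\!\int_{\R^{M_J}}\!\big|\det V^{\vec L,\vec M}(\textbf{x})\big|\,d\mu_1^{M_1}(\textbf{x}^1)\cdots d\mu_J^{M_J}(\textbf{x}^J)=\int\gm_1^{\wedge M_1}\wedge\cdots\wedge\gm_J^{\wedge M_J}\,d\ep_{\rmn{vol}}.\tag{$\star$}
\end{equation*}

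Next I would dispose of the absolute value in $(\star)$. Since every $L_j$ is even, each exponent $L_j^{2}$ and $L_jL_k$ appearing in $\big|\det V^{\vec L,\vec M}(\textbf{x})\big|=\prod_j\prod_{m<n}|x_n^j-x_m^j|^{L_j^{2}}\prod_{j<k}\prod_{m,n}|x_n^k-x_m^j|^{L_jL_k}$ is even, so $\det V^{\vec L,\vec M}$ is a polynomial whose square equals the square of a manifestly nonnegative product; hence it equals that nonnegative product up to one global sign, which the construction of $V^{\vec L,\vec M}$ in \autoref{ssec:confluent} pins to be $+$ (for instance by specializing $\vec p$ to the monomial family, which leaves $\det V^{\vec L,\vec M}$ unchanged, and evaluating at a single configuration). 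Thus $\big|\det V^{\vec L,\vec M}(\textbf{x})\big|=\det V^{\vec L,\vec M}(\textbf{x})$ pointwise, and the left side of $(\star)$ equals $\int_{\R^{M_1}}\cdots\int_{\R^{M_J}}\det V^{\vec L,\vec M}(\textbf{x})\,d\mu_1^{M_1}(\textbf{x}^1)\cdots d\mu_J^{M_J}(\textbf{x}^J)$.

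Then comes the core computation. Recall from \autoref{ssec:confluent} that the $N$ columns of $V^{\vec L,\vec M}$ split into blocks, one block of $L_j$ consecutive columns per particle $(j,m)$, with $(n,l)$ entry $D^{l-1}p_n(x_m^j)$. Expanding the determinant simultaneously along all the blocks, taken in their natural order, gives
\[
\det V^{\vec L,\vec M}(\textbf{x})=\sum\sgn(\pi)\prod_{j=1}^{J}\prod_{m=1}^{M_j}{\rm{Wr}}\big(\vec p_{\mft_m^{(j)}},x_m^j\big),
\]
the sum running over all choices of strictly increasing $\mft_m^{(j)}\colon\underline{L_j}\nearrow\underline{N}$ whose images partition $\underline{N}$, with $\pi\in S_N$ the permutation concatenating those images block by block and each block minor equal to ${\rm{Wr}}(\vec p_{\mft_m^{(j)}},x_m^j)$. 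Integrating against $d\mu_1^{M_1}\cdots d\mu_J^{M_J}$ and applying Fubini --- legitimate because $Z_{\vec M}<\infty$ and each ${\rm{Wr}}(\vec p_{\mft_m^{(j)}},x_m^j)$ depends only on the single variable $x_m^j$ --- turns $\prod_{j,m}{\rm{Wr}}(\vec p_{\mft_m^{(j)}},x_m^j)$ into $\prod_{j,m}\int_{\R}{\rm{Wr}}(\vec p_{\mft_m^{(j)}},x)\,d\mu_j(x)$. On the other side, expanding $\gm_1^{\wedge M_1}\wedge\cdots\wedge\gm_J^{\wedge M_J}$ multilinearly indexes its terms by exactly the same tuples $(\mft_m^{(j)})$, with coefficient $\prod_{j,m}\int_{\R}{\rm{Wr}}(\vec p_{\mft_m^{(j)}},x)\,d\mu_j(x)$ and basis vector the wedge of the $\ep_{\mft_m^{(j)}}$ in block order; by the properties of the Berezin integral recorded in \autoref{ssec:berezin}, that wedge integrates against $\ep_{\rmn{vol}}$ to $0$ unless the images partition $\underline N$, and to $\sgn(\pi)$ for the very same $\pi$ otherwise. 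The two expansions therefore agree term by term, establishing $(\star)$ and with it the theorem.

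The hard part is the sign bookkeeping in this last step: one must verify that the signs produced by the generalized Laplace expansion of $\det V^{\vec L,\vec M}$ are exactly those produced by the Berezin integral. Both reduce to the single combinatorial fact that $\int\ep_{\mft_1}\wedge\cdots\wedge\ep_{\mft_r}\,d\ep_{\rmn{vol}}$ equals the sign of the concatenation permutation (and $0$ when the images overlap), which is precisely the sign attached to each term of a Laplace expansion once the column blocks are taken consecutive and in order --- the place where the indexing conventions of \autoref{sec:lemmas} do their work. Everything else (the reduction to $(\star)$, the disposal of the absolute value, the vanishing of off-degree terms, and Fubini) is routine.
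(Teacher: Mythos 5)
Your proof is correct, but it takes a genuinely different route from the paper's. The reduction to the single identity $(\star)$ for each admissible $\vec M$ is the same as the paper's (it is Lemma~\ref{alleven} combined with the degree argument of \autoref{ssec:isoalleven}); the difference is in how $(\star)$ is proved. You observe that when every $L_j$ is even, every exponent $L_j^2$ and $L_jL_k$ in the confluent Vandermonde product is even, so $\det V^{\vec L,\vec M}(\textbf{x})\geq 0$ globally and the absolute value can simply be erased; you then apply the generalized Laplace expansion over complementary $L_j\times L_j$ column blocks and ordinary Fubini on all of $\R^{M_1}\times\cdots\times\R^{M_J}$, since each block minor is a single-variable Wronskian. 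This is essentially Sinclair's 2012 argument, which the paper explicitly flags as the alternative (``functionally similar to the Laplace expansion\dots which Sinclair uses in his proofs''). The paper instead decomposes the domain into ordered simplices $\Dt_{K_e}(\sm)$ (\autoref{ssec:absdet}), rewrites the determinant via the exterior shuffle algebra (Lemma~\ref{af}), and invokes Chen's Lemma~\ref{chen} to convert shuffle products back into products of integrals. For \autoref{thm:alleven} alone your route is shorter and more elementary; what the paper's machinery buys is uniformity, since for odd charges the absolute value cannot be dropped globally, Fubini is unavailable on the ordered regions, and Chen's lemma is precisely the substitute --- the all-even case is reproved with the heavier tools to set up Theorems~\ref{thm:reven} and~\ref{thm:revenodd}. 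One small point of hygiene: $Z_{\vec M}<\infty$ by itself does not license termwise Fubini in your Laplace expansion; you need each factor $\int_\R |{\rm{Wr}}(\vec p_{\mft},x)|\,d\mu_j(x)$ to be finite, which is the same standing integrability assumption on $U$ and $\mu_j$ that the paper makes implicitly, so this is not a gap but is worth stating.
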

	
	The above theorem is Sinclair's (2012) for which we give a different proof and then the following generalization:
	
	\begin{thm}\label{thm:reven}If the first $r$ many $L_j$ are even and $N$ is even, then
		\[
		Z_N={\rm{BE}}_{\rmn{vol}}\left(\sum_{j=1}^rz_j\gm_j+\sum_{j=r+1}^{J}\sum_{k=r+1}^{J}z_jz_k\et_{j,k}\right).
		\]
	\end{thm}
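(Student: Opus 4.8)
The plan is to reduce Theorem~\ref{thm:reven} to Theorem~\ref{thm:alleven} by absorbing the odd-charge species into even-charge-like building blocks. The starting point is the expression $Z_N = \sum_{\vec{L}\cdot\vec{M}=N} z_1^{M_1}\cdots z_J^{M_J} Z_{\vec{M}}$, and the integral representation of each $Z_{\vec{M}}$ as an iterated integral of $\left|\det V^{\vec{L},\vec{M}}(\textbf{x})\right|$ against the product measures $d\mu_j^{M_j}$. As in the proof of Theorem~\ref{thm:alleven}, I would expand $\det V^{\vec{L},\vec{M}}$ along the exterior algebra: writing the rows of $V^{\vec{L},\vec{M}}$ in blocks (one block of $L_j$ consecutive rows per particle of charge $L_j$, consisting of the function together with its modified derivatives), the Laplace/Cauchy--Binet expansion of the determinant becomes a wedge product over particles of forms whose coefficients are exactly the Wronskians $\mathrm{Wr}(\vec{p}_\mft,x)$. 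The key point that makes the $\gamma_j$'s appear for even $L_j$ is that a homogeneous even form commutes past everything, so after integrating over the location of each even-charge particle we may collect all $M_j$ copies into $\gamma_j^{\wedge M_j}/M_j!$, and summing over $M_j$ with the fugacity weight $z_j$ produces $\exp(z_j\gamma_j)$ — this is precisely the content of Theorem~\ref{thm:alleven}.

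For the odd-charge species ($j>r$), the form $\gamma_j\in\bigwedge^{L_j}(\R^N)$ is odd, hence $\gamma_j\wedge\gamma_j=0$ and the naive exponential truncates. The resolution — and the heart of the argument — is to handle the odd particles \emph{in pairs}. The integrand $\left|\det V\right|$ involves an absolute value, and the standard device (going back to de~Bruijn, and used by Sinclair) is to symmetrize: integrating $\det V$ over all orderings of a pair of same-species variables and dividing out, one replaces $\int_{\R^2}$ by $2\int_{x<y}$ of the product of two Wronskians. This is exactly what defines $\eta_{j,k}$ when the two particles have (possibly different) odd charges $L_j,L_k$. Because $L_j+L_k$ is even, $\eta_{j,k}\in\bigwedge^{L_j+L_k}(\R^N)$ is an even form, so the bilinear combination $\sum_{j,k>r} z_jz_k\eta_{j,k}$ exponentiates freely and commutes with the $\gamma_j$ contributions from the even species. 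So the strategy is: (i) condition on $\vec{M}$; (ii) within each odd species, pair up the $M_j$ particles — which forces each $M_j$ even for a nonzero contribution after symmetrization, and more globally forces the total number of odd-charge particles to be even, which is where the hypothesis ``$N$ even'' is used (since $N$ even together with $L_1,\dots,L_r$ even forces $\sum_{j>r}L_jM_j$ even); (iii) assemble the paired contributions into powers of $\eta_{j,k}$ and the even contributions into powers of $\gamma_j$; (iv) sum over all $\vec{M}$ and recognize the result as $\mathrm{BE}_{\rmn{vol}}$ of the stated non-homogeneous form.

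There are two places I expect friction. The first, more bookkeeping than conceptual, is the pairing across \emph{distinct} odd species: if $M_j$ and $M_k$ are both odd for two different odd charges, no same-species pairing works, but a cross-species pairing does, and one must check that the combinatorics of choosing a perfect matching on the multiset of odd particles, summed with the fugacity weights, reproduces exactly $\exp\!\big(\sum_{j,k>r} z_jz_k\eta_{j,k}\big)$ with the correct symmetry factors (the double sum over both $j$ and $k$, rather than $j\le k$, together with the $1/m!$ in the exponential, should account for the $2^m m!$ ways of building an ordered sequence of unordered pairs — this needs to be verified carefully). The second, and the genuine obstacle, is justifying that the absolute value $\left|\det V^{\vec{L},\vec{M}}\right|$ can be replaced, after the symmetrization over orderings within and across species, by a signed quantity whose exterior-algebra expansion is clean. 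For the purely even case the sign of $\det V$ is controlled by the ordering of like particles alone; with odd charges present one must check that the product of signs coming from reordering all the row-blocks is compatible with restricting each pairwise integral to the region $\{x<y\}$, i.e.\ that the $\left|\cdot\right|$ really does unfold into the ordered-domain integrals defining $\eta_{j,k}$ without leftover sign discrepancies. I would isolate this as a lemma (likely the one referenced as forthcoming in \autoref{sec:lemmas}) and prove it by tracking the permutation that sorts the combined list of all particle locations, splitting it into the within-species sorts (already handled) and the cross-species interleaving (handled by the $x<y$ restriction in $\eta_{j,k}$).
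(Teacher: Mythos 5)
Your outline matches the paper's strategy --- pair the odd-charge particles into even forms $\et_{j,k}$, treat the even charges as in Theorem \ref{thm:alleven}, and use the hypothesis that $N$ is even only to guarantee an even number of odd-charge particles --- and you have correctly located the two danger points. But the second ``friction'' point you defer to a lemma is not a technicality to be isolated; it is the entire content of the proof, and the specific device you propose in its place does not work. Symmetrizing each pair separately, replacing $\int_{\R^2}$ by $2\int_{x<y}$, controls only the sign of the factor $(y-x)^{L_jL_k}$ \emph{within} that pair. The determinant $\det V^{\vec{L},\vec{M}}(\textbf{x})$ also contains factors $(w_n-w_m)^{L_jL_k}$ between odd particles lying in \emph{different} pairs, and those exponents are odd too, so the sign of the integrand depends on the global interleaving of all $K_o$ odd-particle locations, not on the within-pair orderings. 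Consequently the absolute value cannot be unfolded pair by pair. The paper instead decomposes $\R^{K_o}$ into the simplices $\Dt_{K_o}(\ta)$ on which all odd locations are totally ordered (\autoref{ssec:absdet}), permutes whole column blocks by $\ta$ so that the determinant is positive there, and only then pairs \emph{adjacent} variables in the sorted order. The price is that the resulting iterated integral over a single global simplex no longer factors by Fubini; restoring the factorization into the one- and two-variable integrals defining $\gm_j$ and $\et_{j,k}$ is exactly what Lemma \ref{af} (the exterior shuffle algebra) and Chen's Lemma \ref{chen} are for, and neither appears in your sketch.

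A smaller but real error: you assert that the pairing ``forces each $M_j$ even for a nonzero contribution.'' It does not --- cross-species pairs $\et_{j,k}$ with $j\neq k$ are allowed and necessary (your own example with $M_j$ and $M_k$ both odd), and in the paper the pairs are determined by adjacency in the sorted order of all odd particles, so a single population vector contributes through many different multisets of pairs, one for each shuffle class of orderings. The matching combinatorics you flag ($2^m m!$ versus the double sum over $j,k$ and the $1/K!$) is then settled in \autoref{ssec:allodd} by counting the permutations yielding the same pair multiset with a multinomial coefficient and choosing lexicographic representatives; that part of your plan is sound once the ordered-simplex machinery is in place.
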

	
	\begin{thm}\label{thm:revenodd} If the first $r$ many $L_j$ are even and $N$ is odd, then
		\[
		Z_N={\rm{BE}}_{\rmn{vol}_1}\left(\sum_{j=1}^rz_j\gm_j+\sum_{j=r+1}^{J}\sum_{k=r+1}^{J}z_jz_k\et_{j,k}+\sum_{j=r+1}^{J}z_j\gm_j\wedge \ep_{N+1}\right),
		\]
		where ${\rm{BE}}_{\rmn{vol}_1}$ includes the Berezin integral with respect to $\ep_{\rmn{vol}_1}=\ep_{\rmn{vol}}\wedge \ep_{N+1}\in \bigwedge^{N+1}(\R^{N+1})$. 
	\end{thm}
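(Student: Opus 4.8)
My plan is to follow the proof of \autoref{thm:reven} closely; the only genuinely new ingredient is the treatment of the single ``leftover'' odd-charge particle that is forced on us when $N$ is odd. I would first reduce the $(N{+}1)$-dimensional statement to an ordinary Berezin integral over $\R^N$. Observe that every summand of
\[
\Xi=\sum_{j=1}^rz_j\gm_j+\sum_{j=r+1}^J\sum_{k=r+1}^Jz_jz_k\et_{j,k}+\sum_{j=r+1}^Jz_j\gm_j\wedge\ep_{N+1}
\]
is homogeneous of \emph{even} length: $\gm_j$ with $j\le r$ has even length $L_j$, $\et_{j,k}$ has length $L_j+L_k$ (a sum of two odd integers), and $\gm_j\wedge\ep_{N+1}$ has length $L_j+1$. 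Since even forms are central in $\bigwedge(\R^{N+1})$, the exponential factors as a wedge of exponentials; moreover $(\gm_j\wedge\ep_{N+1})\wedge(\gm_k\wedge\ep_{N+1})=0$ for all $j,k>r$, so $\exp\bigl(\sum_{j>r}z_j\gm_j\wedge\ep_{N+1}\bigr)=1+\sum_{j>r}z_j\gm_j\wedge\ep_{N+1}$. Writing $\Xi''=\sum_{j\le r}z_j\gm_j+\sum_{j,k>r}z_jz_k\et_{j,k}$ and using centrality of $\exp(\Xi'')$,
\[
\exp(\Xi)=\exp(\Xi'')+\sum_{j=r+1}^Jz_j\,\bigl(\gm_j\wedge\exp(\Xi'')\bigr)\wedge\ep_{N+1}.
\]
The first term lies in $\bigwedge(\R^N)$ and contributes nothing to the $\ep_{\rmn{vol}_1}$-component; extracting the coefficient of $\ep_{\rmn{vol}_1}=\ep_{\rmn{vol}}\wedge\ep_{N+1}$ from the second costs no sign, since $\gm_j\wedge\exp(\Xi'')$ is a form on $\R^N$. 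Hence
\[
{\rm{BE}}_{\rmn{vol}_1}(\Xi)=\sum_{j=r+1}^Jz_j\int\gm_j\wedge\exp\Bigl(\sum_{i=1}^rz_i\gm_i+\sum_{i=r+1}^J\sum_{k=r+1}^Jz_iz_k\et_{i,k}\Bigr)\,d\ep_{\rmn{vol}},
\]
and it remains to identify the right-hand side with $Z_N=\sum_{\vec L\cdot\vec M=N}z_1^{M_1}\cdots z_J^{M_J}Z_{\vec M}$.

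Next I would condition on the population vector. Fix $\vec M$ with $\vec L\cdot\vec M=N$; since $L_1,\dots,L_r$ are even and $N$ is odd, $\sum_{j>r}M_j$ is odd, so $M_{j_0}\ge1$ for some $j_0>r$. Matching powers of $\vec z$ in the last display, the coefficient of $z_1^{M_1}\cdots z_J^{M_J}$ is a sum, over those $j_0>r$ with $M_{j_0}\ge1$, of the $\ep_{\rmn{vol}}$-coefficient of $\gm_{j_0}$ wedged with $\bigwedge_{j\le r}\gm_j^{\wedge M_j}/M_j!$ and with the part of $\prod_{j,k>r}\exp(z_jz_k\et_{j,k})$ carrying the monomial $z_{j_0}^{M_{j_0}-1}\prod_{j>r,\,j\ne j_0}z_j^{M_j}$ --- i.e. a signed sum over perfect matchings of the remaining $\sum_{j>r}M_j-1$ odd-charge particles (an even number, since $N$ is odd). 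It is enough to show this equals $Z_{\vec M}=\frac{1}{M_1!\cdots M_J!}\int\left|\det V^{\vec{L},\vec{M}}(\textbf{x})\right|\prod_{j=1}^Jd\mu_j^{M_j}(\textbf{x}^j)$.

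For this I would reuse the computation behind \autoref{thm:reven}. Restrict the integration over each odd species to the ordered domain $x_1^j<\cdots<x_{M_j}^j$ and drop the absolute value; Laplace-expand $\det V^{\vec L,\vec M}$ into one modified-Wronskian block per particle, indexed by a partition of $\ul N$ into increasing functions $\mft:\ul{L_j}\nearrow\ul N$, and integrate each block against its particle's measure. For the even species this reproduces the coefficients defining the $\gm_j$; for the odd species the ordered-domain sign structure forces the blocks to combine pairwise into the ordered double integrals defining the $\et_{j,k}$ --- with exactly one block left without a partner, which survives as a single integral and is precisely the remainder term of the classical odd-size de Bruijn identity. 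Recording that remainder in one extra dimension is exactly the replacement $\gm_{j_0}\rightsquigarrow\gm_{j_0}\wedge\ep_{N+1}$, and summing over which particle is the leftover reproduces the sum over $j_0$ above. The shuffle-algebra reindexing over increasing functions that organizes the signs in the proof of \autoref{thm:reven} carries over unchanged, with $\ep_{N+1}$ a passive spectator.

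The hard part will be the sign bookkeeping in this last step: one must verify that restricting the odd species to their ordered domains, removing the absolute value, and matching the Laplace expansion of $\det V^{\vec L,\vec M}$ against $\gm_{j_0}\wedge(\text{matchings})$ yields exactly the sign prescribed by $\ep_{\rmn{vol}}$ --- uniformly over which particle is designated the leftover, and consistently with the sign supplied by the extra generator $\ep_{N+1}$. Granting \autoref{thm:reven}, this amounts to a parity computation comparing the single unpaired Wronskian block against the $(N{+}1)$-dimensional volume form; it is delicate but introduces no new structural difficulty beyond what the even case already required.
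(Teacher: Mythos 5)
Your proposal is correct and follows essentially the same route as the paper: the expansion of $\exp(\Xi)$ using $(\gm_j\wedge\ep_{N+1})\wedge(\gm_k\wedge\ep_{N+1})=0$ is exactly the paper's argument for why only terms with precisely one $\gm_j$ survive the $\ep_{\rmn{vol}_1}$-projection, and the identification of the $z$-monomial coefficients with $Z_{\vec M}$ via ordered domains, Laplace expansion into Wronskian blocks, pairwise matching of the odd-charge blocks, and the single leftover block recorded in the extra dimension is the content of Lemmas \ref{oddodd} and \ref{supercede}. The sign bookkeeping you flag is handled in the paper by the augmented matrix $H_{\ta}^{\vec L,\vec M,1}$ together with Lemmas \ref{decomp} and \ref{af}, with no new difficulty beyond the even case, as you anticipate.
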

	
	For our methods, it is necessary to extend the basis by $\ep_{N+1}$ so that the new volume form $\ep_{\rmn{vol}_1}$ has even length $N+1$. More generally, we can write  
	\[
	\ep_{\rmn{vol}_k}=\ep_{\rmn{vol}}\wedge\xi_k=\ep_{\rmn{vol}}\wedge \ep_{N+1}\wedge \ep_{N+2}\wedge \cdots \wedge \ep_{N+k}.
	\]
	Then for any $\om\in\bigwedge(\R^N)\leq \bigwedge(\R^{N+k})$, we have
	\[
	\int\om\,d\ep_{\rmn{vol}}=\int\om\wedge \ep_{N+1}\wedge\cdots \wedge\ep_{N+k}\,\,d\ep_{\rmn{vol}}\,d\ep_{N+1}\cdots d\ep_{N+k}=\int\om\wedge \xi_{k}\,\,d\ep_{\rmn{vol}_k}.
	\]
	Thus, we can embed any Berezin integral computation in a higher dimension as desired.

	\begin{cor}In the single species case ($N$ indistinguishable particles of charge $L$), we get the known Hyperpfaffian expression (Sinclair 2011):
		\[
		Z_N={\rm{BE}}_{\rmn{vol}_k}(\om)={\rm{PF}}(\om),
		\]
		where $\om$ and $k$ depends on $N$ and $L$.
		\begin{enumerate}
			\item If $L$ is even, then $\om=\gm_1$ and ${\rm{BE}}_{\rmn{vol}_k}={\rm{BE}}_{\rmn{vol}}$.
			\item If $L$ is odd and $N$ is even, then $\om=\et_{1,1}$ and ${\rm{BE}}_{\rmn{vol}_k}={\rm{BE}}_{\rmn{vol}}$.
			\item If $L$ is odd and $N$ is odd, then $\om = \et_{1,1}+\gm_1\wedge \xi_L$ and ${\rm{BE}}_{\rmn{vol}_k}={\rm{BE}}_{\rmn{vol}_L}$.
		\end{enumerate}
	\end{cor}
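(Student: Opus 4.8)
The plan is to specialize Theorem~\ref{thm:alleven}, Theorem~\ref{thm:reven}, and Theorem~\ref{thm:revenodd} to the single-species setting $J=1$, $\vec{L}=(L)$, and then to recognize each resulting Berezin integral of an exponential as an honest Hyperpfaffian. When $J=1$ the only population vector with $\vec{L}\cdot\vec{M}=N$ is $\vec{M}=(N/L)$ (so there are $N/L$ particles; we assume $L\mid N$, as otherwise every term in sight vanishes and the claim is trivial), and the single fugacity is a global scalar which we normalize to $1$. The one input beyond the three theorems is the identity recorded in the ``Exponentials of forms'' subsection: if $\om\in\bigwedge^{L'}(\R^{N'})$ is homogeneous of even length $L'$ with $L'\mid N'$, then exactly one wedge power of $\om$ reaches the determinantal line and $\int\exp(\om)\,d\ep_{\rmn{vol}}={\rm{PF}}(\om)$. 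In each of the three cases below the form produced is homogeneous of even length, so this applies.

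First I would dispose of the two cases with $k=0$. If $L$ is even, Theorem~\ref{thm:alleven} with $J=1$ gives $Z_N={\rm{BE}}_{\rmn{vol}}(\gm_1)$; since $\gm_1\in\bigwedge^{L}(\R^N)$ is homogeneous of even length $L$ and $L\mid N$, the identity above yields $Z_N={\rm{PF}}(\gm_1)$. If $L$ is odd and $N$ is even, write $M=N/L$; the parity of $N$ forces $M$ even. Applying Theorem~\ref{thm:reven} with $r=0$, the $\gm$-sum is empty and $Z_N={\rm{BE}}_{\rmn{vol}}(\et_{1,1})$; now $\et_{1,1}\in\bigwedge^{2L}(\R^N)$ has even length $2L$ and $2L\mid N$ because $M$ is even, so $Z_N={\rm{PF}}(\et_{1,1})$.

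The substantive case is $L$ and $N$ both odd, where $M=N/L$ is odd. Theorem~\ref{thm:revenodd} with $r=0$ gives
\[
Z_N={\rm{BE}}_{\rmn{vol}_1}\!\left(\et_{1,1}+\gm_1\wedge\ep_{N+1}\right),
\]
a Berezin integral on $\bigwedge^{N+1}(\R^{N+1})$, which is not yet in Hyperpfaffian form because $2L\nmid N+1$ in general. I would inflate to $\R^{N+L}$ using the embedding identity displayed just before the corollary, now with $\R^{N+1}$ in place of $\R^{N}$: for $\mu\in\bigwedge(\R^{N+1})$ one has $\int\mu\,d\ep_{\rmn{vol}_1}=\int\mu\wedge\ep_{N+2}\wedge\cdots\wedge\ep_{N+L}\,d\ep_{\rmn{vol}_L}$. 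Because $\et_{1,1}$ (length $2L$) and $\gm_1\wedge\ep_{N+1}$ (length $L+1$, even) both have even length, commute, and the latter squares to zero (a repeated $\ep_{N+1}$), we get $\exp(\et_{1,1}+\gm_1\wedge\ep_{N+1})=\exp(\et_{1,1})+\exp(\et_{1,1})\wedge\gm_1\wedge\ep_{N+1}$. Wedging with $\ep_{N+2}\wedge\cdots\wedge\ep_{N+L}$ and integrating against $\ep_{\rmn{vol}_L}$, the $\exp(\et_{1,1})$ summand contributes nothing since it has no $\ep_{N+1}$ factor and $\frac{\partial}{\partial\ep_{N+1}}$ kills it, so
\[
Z_N=\int\exp(\et_{1,1})\wedge\gm_1\wedge\xi_L\,d\ep_{\rmn{vol}_L}.
\]
Running the same expansion in reverse for $\om:=\et_{1,1}+\gm_1\wedge\xi_L$ — here $\gm_1\wedge\xi_L$ has even length $2L$ and squares to zero since $\xi_L$ does — gives $\exp(\om)=\exp(\et_{1,1})+\exp(\et_{1,1})\wedge\gm_1\wedge\xi_L$, whose $\ep_{\rmn{vol}_L}$-Berezin integral is exactly the quantity just displayed; hence $Z_N={\rm{BE}}_{\rmn{vol}_L}(\om)$. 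Finally $\om$ is homogeneous of even length $2L$ (as $\gm_1$ and $\xi_L$ each have length $L$) and $2L\mid N+L$ because $N+L=L(M+1)$ with $M+1$ even, so the identity above applies on $\R^{N+L}$ and $Z_N={\rm{PF}}(\om)$.

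The main obstacle I anticipate is precisely this last case: Theorem~\ref{thm:revenodd} naturally outputs an expression over $\R^{N+1}$, whereas the homogeneous ``degree-divides-dimension'' form that matches Sinclair's 2011 Hyperpfaffian lives over $\R^{N+L}$, so one must pass between the two ambient spaces and verify that the newly introduced basis vectors $\ep_{N+2},\dots,\ep_{N+L}$ produce no spurious contribution. The sign bookkeeping in the wedge expansions, and the vanishing under the Berezin integral of the terms lacking an $\ep_{N+1}$ factor, are routine but must be carried out with care; everything else is a direct specialization of the main theorems.
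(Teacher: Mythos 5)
Your proposal is correct and follows exactly the route the paper intends: the corollary is stated there without an explicit proof, being a direct specialization of Theorems \ref{thm:alleven}, \ref{thm:reven}, and \ref{thm:revenodd} to $J=1$, with the remark about extending by $\xi_L$ rather than $\ep_{N+1}$ in case 3 so that $\om$ becomes a homogeneous $2L$-form. Your careful verification that passing from $\R^{N+1}$ to $\R^{N+L}$ introduces no spurious terms, and that $2L$ divides $N+L$ when $M$ is odd, fills in precisely the details the paper leaves implicit.
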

	
	Note, we extend by $\xi_L$ instead of just $\xi_1=\ep_{N+1}$ in case 3 only so that $\gm_1\wedge\xi_L$ is a $2L$-form and therefore $\om$ is homogeneous. Every choice of $k$ produces a different but equally valid Berezin integral expression. We obtain the (Pfaffian) de Bruijn integral identities for classical $\bt=1$ and $\bt=4$ when $L=1$ and $L=2$, respectively.

	To prove these theorems, we start by giving general methods (in the shuffle algebra) for manipulating iterated integrals of determinantal integrands in \autoref{sec:lemmas}. The conclusion of these methods is a generalization of the de Bruijn integral identities, given in \autoref{ssec:debruijn}. In \autoref{sec:canonical}, we apply these identities first to $Z_{\vec{M}}$, the partition function of the Canonical Ensemble with arbitrary but fixed population $\vec{M}$. In \autoref{sec:isocharge}, we sum over all possible population vectors $\vec{M}$ to obtain $Z_N$, the partition function of the Isocharge Grand Canonical Ensemble. 
	
	\subsection{Generalized de Bruijn identities}
	\label{ssec:debruijn}
	
	Let $N=L_1+\cdots+L_J$. Define $K_j=\sum_{k=1}^{j}L_k$. Let $B(\vec{x})$ be an $N\times N$ matrix whose entries are single variable integrable functions of variables $\vec{x}=(x_1,\ldots,x_J)$. Explicitly, the first $L_1$ many columns are functions of $x_1$, the second $L_2$ many columns are functions of $x_2$, and so on up through $x_J$. For $\mft:\ul{L_j}\nearrow\ul{N}$, let $B_\mft(x_j)$ denote the $L_j\times L_j$ minor of $B(\vec{x})$ given by
	\[
	B_\mft(x_j)=\left[B(\vec{x})_{\mft(l),n+K_{j-1}}\right]_{l,n=1}^{L_j},
	\]
	equivalently obtained from $B(\vec{x})$ by taking the rows $\mft(1),\ldots,\mft(L_j)$ from the $L_j$ many columns in the same variable $x_j$. Define
	\[
	\gm_j^B=\sum_{\mft:\ul{L_j}\nearrow \ul{N}}\int_{\R}\det B_\mft (x_j)\,dx_j\,\ep_{\mft},
	\]
	and define 
	\[
	\et_{j,k}^B=\sum_{\mft:\ul{L_j}\nearrow \ul{N}}\sum_{\mathfrak{s}:\ul{L_k}\nearrow \ul{N}}\int\int_{x_j<x_k}\det B_\mft (x_j)\cdot \det B_{\mathfrak{s}} (x_k)\,dx_j\,dx_k\,\ep_{\mft}\wedge \ep_{\mathfrak{s}}.
	\]
	\begin{thm}\label{thm:debruijngen} Suppose the first $r$ many $L_j$ are even, then
		\[
		\int_{-\I<x_1<\ldots<x_J<\I}\det B(\vec{x})\,dx_1\cdots dx_J=\int \om\,d\ep_{\rmn{vol}},
		\]
		where $\om$ is defined as follows:
		\begin{enumerate}
			\item If $N$ is even, then
			\[
			\om=\frac{1}{\left(r+\frac{J-r}{2}\right)!}\bigwedge_{j=1}^{r}\gm_j^B\wedge \bigwedge_{m=1}^{(J-r)/2}\et_{r+2m-1,r+2m}^B.
			\]
			\item If $N$ is odd, then
			\[
			\om=\frac{1}{\left(r+1+\frac{J-r-1}{2}\right)!}\bigwedge_{j=1}^{r}\gm_j^B\wedge \bigwedge_{m=1}^{(J-r-1)/2}\et_{r+2m-1,r+2m}^B\wedge \gm_J^B.
			\]
		\end{enumerate}
	\end{thm}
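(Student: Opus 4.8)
The plan is to expand $\det B(\vec x)$ using the generalized Laplace (Cauchy--Binet-type) expansion along the natural column blocks, which splits the $N$ columns into consecutive groups of sizes $L_1, L_2, \ldots, L_J$. For a choice of increasing index functions $\mft_1 \colon \underline{L_1} \nearrow \underline N$, \ldots, $\mft_J \colon \underline{L_J} \nearrow \underline N$ whose images partition $\underline N$, the expansion writes $\det B(\vec x)$ as a signed sum over such partitions of $\prod_j \det B_{\mft_j}(x_j)$; the sign is exactly the sign of the permutation that interleaves the blocks, which is precisely the sign appearing when one forms $\ep_{\mft_1} \wedge \cdots \wedge \ep_{\mft_J}$ in $\bigwedge^N(\R^N)$. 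This is the standard dictionary between Laplace expansion and the wedge product, and it immediately converts the iterated integral on the left-hand side into a sum of wedge products of the block-integrals, once we integrate.

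First I would treat the single-variable integrations. Each variable $x_j$ appears only in the $L_j$ columns of its block, so $\int_{x_1 < \cdots < x_J} \det B(\vec x)\,dx_1 \cdots dx_J$ factors, block by block, into iterated integrals of the form $\int_{x_{j_1} < x_{j_2} < \cdots} \prod \det B_{\mft}(x_j)\,dx$. The forms $\gm_j^B$ and $\et_{j,k}^B$ are built exactly to package, respectively, a single such integral $\int_\R \det B_\mft(x_j)\,dx_j$ (for one block) and an ordered-pair integral $\int_{x_j < x_k} \det B_\mft(x_j)\det B_{\mathfrak s}(x_k)\,dx_j\,dx_k$ (for two consecutive blocks), with the corresponding wedge of basis vectors attached. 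So the heart of the argument is a combinatorial identity about how to assemble the constrained iterated integral $\int_{x_1 < \cdots < x_J}$ from these one- and two-block pieces. Here I would use the key structural fact that for an \emph{even}-length form $\om_j$, wedging is commutative, whereas for the \emph{odd}-length blocks (indices $> r$) one cannot freely reorder — which is exactly why the odd blocks must be paired up two at a time into the even-length forms $\et^B_{r+2m-1, r+2m}$, and why an unpaired odd block (case $N$ odd) is left dangling as a final factor $\gm_J^B$.

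The main technical step — and the step I expect to be the principal obstacle — is verifying that $\bigwedge_{j=1}^r \gm_j^B \wedge \bigwedge_{m} \et^B_{r+2m-1,r+2m}$ (times the stated reciprocal-factorial) reproduces the \emph{ordered} domain $x_1 < x_2 < \cdots < x_J$ rather than some symmetrized version. The point is that $\gm_j^B$ integrates $x_j$ over all of $\R$ (no ordering constraint), and $\et^B_{j,k}$ enforces only $x_j < x_k$ within that pair; a priori the wedge of these pieces produces integrals over a union of orderings. The resolution is that, because each $\gm^B$ (for $j \le r$) has even length and because $\det B(\vec x)$ as an alternating-in-blocks object already builds in antisymmetry, summing over all interleavings of the blocks with the correct signs collapses the union of chambers $\{x_{\sigma(1)} < \cdots < x_{\sigma(J)}\}$ down to a single representative chamber, and the overcounting is corrected precisely by the factor $1/(r + \tfrac{J-r}{2})!$ (resp.\ $1/(r+1+\tfrac{J-r-1}{2})!$), which counts the number of ways to order the $r$ even-block factors together with the $(J-r)/2$ pair-factors. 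I would make this rigorous by induction on the number of blocks, peeling off either one even block (multiplying by $\gm^B$, using that it commutes past everything) or one pair of consecutive odd blocks (multiplying by $\et^B$), at each stage invoking Fubini and the de Bruijn-style identity $\int_{\R^2} = \int_{x<y} + \int_{y<x}$ to match the constrained and unconstrained regions; the case analysis on the parity of $N$ and on $r$ enters only in deciding whether the last block is even (absorbed into the $\gm^B$ product) or odd (either paired, or left as the terminal $\gm_J^B$ when $N$ is odd). Finally, applying $\int(-)\,d\ep_{\rmn{vol}}$ projects onto the determinantal line and, by the relation $\int \ep_\sm\,d\ep_{\rmn{vol}} = \sgn(\sm)$, recovers exactly the signed Laplace expansion we started from, closing the loop.
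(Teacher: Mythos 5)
Your overall skeleton---the block-Laplace expansion matched to wedge products of the $\ep_\mft$, the pairing of consecutive odd blocks into even-length forms, and the identity $\int_{\R^2}=\int_{x<y}+\int_{y<x}$ mediating between ordered and unordered domains---is the same machinery the paper runs, except that the paper packages it as an algebraic identity in the exterior algebra over the shuffle algebra (Lemma \ref{af}) followed by Chen's Lemma \ref{chen}, rather than as an induction on blocks. So the route is essentially the paper's; the question is whether your treatment of the domain of integration closes.

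It does not, and the gap sits exactly at the step you flag as the principal obstacle. You claim that summing over all interleavings of the blocks collapses the union of chambers $\{x_{\sigma(1)}<\cdots<x_{\sigma(J)}\}$ onto the single chamber $x_1<\cdots<x_J$, with overcount corrected by $\bigl(r+\frac{J-r}{2}\bigr)!$. But $\det B(\vec{x})$ is alternating in its rows and columns, not in the variables: exchanging $x_j$ with $x_k$ replaces the $L_j$ columns of one family of functions by the $L_k$ columns of an unrelated family, and since no resemblance between blocks is assumed, the chambers contribute unequal amounts. Concretely, take $J=2$, $L_1=L_2=2$, $r=2$, $B=\mathrm{diag}\bigl(f(x_1),f(x_1),g(x_2),g(x_2)\bigr)$ with $f=\mathbf{1}_{[0,1]}$, $g=\mathbf{1}_{[2,3]}$, so $\det B(x_1,x_2)=F(x_1)G(x_2)$ with $F,G\geq 0$ supported on $[0,1]$ and $[2,3]$. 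Laplace expansion gives $\int\gm_1^B\wedge\gm_2^B\,d\ep_{\rmn{vol}}=\int_{\R^2}\det B=\bigl(\int F\bigr)\bigl(\int G\bigr)$, while $\int_{x_1<x_2}\det B$ is also $\bigl(\int F\bigr)\bigl(\int G\bigr)$ because $\supp F$ lies entirely below $\supp G$; the two sides of the proposed identity therefore differ by the factor $\frac{1}{2}$. The counting is also off in general: the chambers compatible with the constraints carried by $\om$ (each $\et$-pair ordered, each $\gm$-variable free) are the linear extensions of that partial order, of which there are $J!/2^{(J-r)/2}$, not $\bigl(r+\frac{J-r}{2}\bigr)!$ (already $3\neq 2$ for $J=3$, $r=1$). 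What the Laplace--Chen mechanism actually delivers, with no factorial, is the identity for the integral over the partially ordered domain in which only the $\et$-pairs are constrained---equivalently, for the sum over all assignments of the variables to the blocks of single-chamber integrals. That symmetrized form is all that is used in proving Theorems \ref{thm:alleven}--\ref{thm:revenodd}, where the integral runs over all of $\R^N$ with $\bigl|\det V^{\vec{L},\vec{M}}\bigr|$ and the sum over chambers is present from the outset. Your induction would need either to symmetrize $B$ over its blocks or to enlarge the left-hand domain and drop the factorial; the collapse-of-chambers step as you state it cannot be repaired for a general $B$.
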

	
	For $1\leq j\leq r$, $L_j$ is even, and $\gm_j^B$ is an even $L_j$-form. For the $L_j$ which are odd, $\et_{j,k}^B$ combines minors of odd $L_j\times L_j$ dimensions with minors of odd $L_k\times L_k$ dimensions to produce an even $(L_j+L_k)$-form. In case 1, the requirement that $N$ be even means there are an even number of odd $L_j$ to be paired down into $(J-r)/2$ pairs. In case 2, there are an odd number of odd $L_j$, so $\gm_J^B$ remains as an odd $L_J$-form. 
	
	Compared to Theorems \ref{thm:reven} and \ref{thm:revenodd}, $\det B(\vec{x})$ takes the place of the Boltzmann factor $\Om_{\vec{M}}(\textbf{x})$, so that $\det B_\mft(x_j)$ and $\det B_{\mathfrak{s}}(x_k)$ take the places of ${\rm{Wr}}(\vec{p}_\mft,x)$ and ${\rm{Wr}}(\vec{p}_{\mathfrak{s}},y)$, respectively. Recall (from \autoref{sec:setup}), the partition function $Z_N$ of the Isocharge Grand Canonical Ensemble is a linear combination of the partition functions $Z_{\vec{M}}$ of the Canonical Ensembles, with the sum taken over all possible populations $\vec{M}$. For each $\vec{M}$, we get a different Boltzmann factor integrand $\Om_{\vec{M}}(\textbf{x})$. Moreover, the defining integral is taken over all of $\R^N$. The wedge products which appear in Theorem \ref{thm:debruijngen} should be thought of as just one of the summands which appears in the expansion of the exponentials which appear in Theorems \ref{thm:reven} and \ref{thm:revenodd}.
	
	Note, we assume the functions which make up $B(\vec{x})$ are suitably integrable so that all integrals which appear in $\gm_j^B$ and $\et_{j,k}^B$ are finite. However, we do not assume any resemblance between the $L_j$ many columns in $x_j$ and the $L_k$ many columns in $x_k$. Assuming some additional consistency, we obtain a Hyperpfaffian analogue of the de Bruijn integral identities.
	
	\begin{cor}\label{cor:debruijnpf} Suppose $L_1=\cdots=L_J=L$. Under the additional assumption that $\gm_j^B=\gm$ for all $j$, and $\et_{j,k}^B=\et$ for all $j,k$ (typically because the entries of $B(\vec{x})$ in one variable $x_j$ are the same as the entries in any other variable $x_k$), 
		\[
		\int_{-\I<x_1<\ldots<x_J<\I}\det B(\vec{x})\,dx_1\cdots dx_J={\rm{BE}}_{\rmn{vol}_k}(\om)={\rm{PF}}(\om),
		\]
		where $\om$ and $k$ depend on $M$ and $L$.
		\begin{enumerate}
			\item If $L$ is even, then $\om=\gm$ and ${\rm{BE}}_{\rmn{vol}_k}={\rm{BE}}_{\rmn{vol}}$.
			\item If $L$ is odd and $M$ is even, then $\om=\et$ and ${\rm{BE}}_{\rmn{vol}_k}={\rm{BE}}_{\rmn{vol}}$.
			\item If $L$ is odd and $M$ is odd, then $\om = \et+\gm\wedge \xi_L$ and ${\rm{BE}}_{\rmn{vol}_k}={\rm{BE}}_{\rmn{vol}_L}$.
		\end{enumerate}
	\end{cor}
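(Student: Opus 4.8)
\emph{Proof strategy.} The plan is to read Corollary~\ref{cor:debruijnpf} off from Theorem~\ref{thm:debruijngen} by specializing to the uniform situation $L_1=\cdots=L_J=L$ (so $M=J$ and $N=ML$), and then to recognize the output form of that theorem --- call it $\widetilde\om$ --- as (a scalar multiple of) the unique determinantal-line summand of $\exp(\om)$ for a suitable \emph{homogeneous even} form $\om$. Concretely: first I would invoke Theorem~\ref{thm:debruijngen} and collapse its two displayed wedge products using the hypotheses $\gm_j^B=\gm$ and $\et_{j,k}^B=\et$; then split according to the parities of $L$ and $M$; then in each case exhibit a homogeneous even $\om$ --- living in $\bigwedge(\R^N)$, or in $\bigwedge(\R^{N+L})$ after the dimension-raising embedding recorded just after Theorem~\ref{thm:revenodd} --- and verify that the only term of $\exp(\om)=\sum_m\om^{\wedge m}/m!$ sitting on the relevant determinantal line equals $\widetilde\om$; finally, since for homogeneous even $\om$ that determinantal-line term is by definition ${\rm{PF}}(\om)$, conclude that the iterated integral equals ${\rm{BE}}_{\rmn{vol}_k}(\om)={\rm{PF}}(\om)$.

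For the two even cases the last check is immediate. If $L$ is even then $N=ML$ is even and every $L_j$ is even, so Theorem~\ref{thm:debruijngen}(1) applies with $r=J=M$ and returns $\widetilde\om=\frac{1}{M!}\gm^{\wedge M}$; since $\gm$ is homogeneous of even length $L$ and $LM=N$, the only summand of $\exp(\gm)$ in $\bigwedge^N(\R^N)$ is $\gm^{\wedge M}/M!=\widetilde\om$, so the integral is ${\rm{BE}}_{\rmn{vol}}(\gm)={\rm{PF}}(\gm)$. If $L$ is odd and $M$ is even then $N$ is even and $r=0$, so Theorem~\ref{thm:debruijngen}(1) returns $\widetilde\om=\frac{1}{(M/2)!}\et^{\wedge M/2}$; now $\et$ is homogeneous of even length $2L$ with $2L\cdot(M/2)=N$, and the identical reasoning gives ${\rm{BE}}_{\rmn{vol}}(\et)={\rm{PF}}(\et)$.

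The real work is the case $L$ odd and $M$ odd, where $N=ML$ is odd, $r=0$, and Theorem~\ref{thm:debruijngen}(2) produces $\widetilde\om$ proportional to $\et^{\wedge(M-1)/2}\wedge\gm$, an $N$-form on $\R^N$. Here I would adjoin basis vectors $\ep_{N+1},\dots,\ep_{N+L}$, set $\xi_L=\ep_{N+1}\wedge\cdots\wedge\ep_{N+L}$ and $\om=\et+\gm\wedge\xi_L\in\bigwedge^{2L}(\R^{N+L})$ (homogeneous of even length), and push $\int\widetilde\om\,d\ep_{\rmn{vol}}$ into $\bigwedge(\R^{N+L})$ via $\int\ps\,d\ep_{\rmn{vol}}=\int\ps\wedge\xi_L\,d\ep_{\rmn{vol}_L}$. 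Then I would extract the $\bigwedge^{N+L}(\R^{N+L})$-component of $\exp(\om)$: since $\et$ and $\gm\wedge\xi_L$ are both even they commute under $\wedge$, so $\om^{\wedge n}$ --- with $n=(M+1)/2$ the unique exponent satisfying $2Ln=N+L$ --- expands by the binomial theorem, and the nilpotency $(\gm\wedge\xi_L)^{\wedge 2}=0$ (from $\xi_L\wedge\xi_L=0$, using $\xi_L\wedge\gm=(-1)^{L^2}\gm\wedge\xi_L=-\gm\wedge\xi_L$) together with $\et^{\wedge n}=0$ (its length $2Ln=N+L$ exceeds $\dim\R^N=N$) leaves only the term $n\,\et^{\wedge(n-1)}\wedge\gm\wedge\xi_L$, i.e.\ $\gm\wedge\xi_L$ appended to the $N$-form $\et^{\wedge(M-1)/2}\wedge\gm$ from step one. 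Reconciling this with $\widetilde\om$ and undoing the embedding then identifies the iterated integral with ${\rm{BE}}_{\rmn{vol}_L}(\om)={\rm{PF}}(\om)$. I expect the main obstacle to be exactly this last piece of bookkeeping --- the anticommutation signs on the adjoined basis vectors, the binomial coefficient $\binom{n}{n-1}=n$ set against the $1/n!$ from the exponential, and matching these against the normalizing factorial in Theorem~\ref{thm:debruijngen}(2); the rest --- the two even cases, the embedding identity, and the ${\rm{BE}}={\rm{PF}}$ step --- is purely formal.
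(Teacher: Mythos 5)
Your route --- specialize Theorem~\ref{thm:debruijngen} to $L_1=\cdots=L_J=L$ and then recognize the resulting $N$-form as the determinantal-line term of $\exp(\om)$ --- is exactly the derivation the paper intends (it offers no separate proof of this corollary), and your cases 1 and 2 are complete and correct: the factorials $\frac{1}{M!}$ and $\frac{1}{(M/2)!}$ coming out of case (1) of the theorem match the exponential's $\frac{1}{m!}$ on the nose.

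The gap is in case 3, and it is not the sign and binomial bookkeeping you defer --- that part of your computation is already right --- but the reconciliation itself, which fails as stated. Writing $n=(M+1)/2$, your expansion correctly gives
\[
{\rm{BE}}_{\rmn{vol}_L}(\et+\gm\wedge\xi_L)=\int\frac{n\,\et^{\wedge(n-1)}\wedge\gm\wedge\xi_L}{n!}\,d\ep_{\rmn{vol}_L}=\frac{1}{(n-1)!}\int\et^{\wedge(n-1)}\wedge\gm\,d\ep_{\rmn{vol}},
\]
whereas Theorem~\ref{thm:debruijngen}(2) with $r=0$ asserts that the iterated integral equals $\frac{1}{n!}\int\et^{\wedge(n-1)}\wedge\gm\,d\ep_{\rmn{vol}}$, because its normalizing factorial $\left(r+1+\frac{J-r-1}{2}\right)!=n!$ counts the unpaired trailing $\gm_J^B$ as a wedge factor. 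These differ by a factor of $n$, so quoting the theorem verbatim yields ${\rm{BE}}_{\rmn{vol}_L}(\om)=n\int_{x_1<\cdots<x_J}\det B(\vec{x})\,d\vec{x}$, not the corollary. The discrepancy lies in the printed constant of the theorem, not in the corollary: for $L=1$, $M=J=3$, $B=[p_i(x_j)]$, a direct check via Chen's lemma (the three shuffles in ${\rm{Sh}}(2,1)$ have signs summing to $1$) gives $\int\et\wedge\gm\,d\ep_{\rmn{vol}}=\int_{x_1<x_2<x_3}\det B$, so the correct normalization is $\frac{1}{(n-1)!}$, i.e.\ the factorial in case (2) of Theorem~\ref{thm:debruijngen} (and the matching $\frac{1}{K!}$ in Lemma~\ref{oddodd}, which should be $\frac{1}{(K-1)!}$) must exclude the unpaired $\gm$ from the count. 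To close case 3 you therefore cannot simply cite Theorem~\ref{thm:debruijngen}(2) as printed: you must either rederive its constant from Lemma~\ref{af} and Chen's lemma, observing that the $\frac{1}{K!}$ is absorbed only over the paired (even) factors, or prove case 3 directly from the shuffle-algebra lemmas.
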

	
	Further generalization in Theorem \ref{thm:debruijngen} is still possible if desired. Suppose instead the first $L_1$ columns of $B(\vec{x})$ are made up of functions, not necessarily single variable, of variables $x_1,\ldots,x_a$, and the next $L_2$ columns of $B(\vec{x})$ are made up of functions of variables $x_{a+1},\ldots,x_b$. If $L_1$ is even, then we replace $\gm_1^B$ with
	\[
	\gm_1^B=\sum_{\mft:\ul{L_1}\nearrow\ul{N}}\int_{-\I<x_1<\cdots<x_a<\I}\det B_{\mft}(x_1,\ldots,x_a)\,dx_1\cdots dx_a\,\ep_{\mft},
	\]
	which now features iterated integrals of the multivariate minors. If $L_1$ and $L_2$ are odd, then we replace $\et_{1,2}^B$ with 
	\[
	\et_{1,2}^B=\sum_{\mft:\ul{L_1}\nearrow \ul{N}}\sum_{\mathfrak{s}:\ul{L_2}\nearrow \ul{N}}\int_{-\I<x_1<\cdots<x_b<\I}\det B_\mft (x_1,\ldots,x_a)\cdot \det B_{\mathfrak{s}} (x_{a+1},\ldots,x_b)\,dx_1\cdots dx_b\,\ep_{\mft}\wedge \ep_{\mathfrak{s}}.
	\]
	In general, we integrate the minors with respect to whichever variables appear with respect to the same total order on the domain induced by the original integral of $\det B(\vec{x})$.

	\section{Algebraic lemmas}
	\label{sec:lemmas}
	
	For any injection $\mathfrak{t}:\ul{K}\to \ul{N}$, let $Q_{\mathfrak{t}}$ denote $Q_{\mathfrak{t}(1),\ldots,\mathfrak{t}(K)}$ whenever it is clear from context $Q$ admits $K$ many indices, and let $Q_\mft$ denote $\{Q_{\mft(k)}\}_{k=1}^K$ whenever it is clear from context $Q$ admits only one index. For any permutation $\sm\in S_K$, we can view $\sm:\ul{K}\to\ul{K}$ as a bijection and then write $Q_{\mathfrak{t}\circ \sm}$ to denote $Q_{\mathfrak{t}\circ\sm(1),\ldots,\mathfrak{t}\circ\sm(K)}$ or $\{Q_{\mft\circ\sm(k)}\}_{k=1}^K$ as appropriate in context.

	For example, when $V$ is a matrix, $V_\mft$ is a minor. We should think of $V_\mft$ as a single object with $K$ many indices (which indicate a choice of $K$ many rows $\mft(1),\ldots,\mft(K)$ from which our minor is constructed). Similarly, if $\om\in \bigwedge^K(\R^N)$, then $A_\mft$ might denote the coefficient of $\ep_\mft$ (equivalently, an entry in a $K$-dimensional hyper array). In contrast, if $\vec{f}=\{f_k\}_{k=1}^N$ is a family of functions, then $\vec{f}_\mft=\{f_{\mft(k)}\}_{k=1}^K$ is a subfamily of $K$ functions, each indexed by a single integer. In our statement of our main results, we use these increasing function subscripts in both ways, but it is clear from context how these subscripts should be applied differently to different objects.

	Let $\mathfrak{t}:\ul{K}\nearrow\ul{N}$ denote a strictly increasing function from $\ul{K}$ to $\ul{N}$. Note, every injection $\mathfrak{s}:\ul{K}\to \ul{N}$ can be written uniquely as $\mathfrak{s}=\mathfrak{t}\circ \sm$ for some $\mathfrak{t}:\ul{K}\nearrow\ul{N}$ and $\sm\in S_K$. For any $\mathfrak{t}:\ul{K}\nearrow\ul{N}$, there exists a unique complementary $\mft':\ul{N-K}\nearrow\ul{N}$ with $\mft(\ul{K})\cup\mft'(\ul{N-K})=\ul{N}$. Define $\sgn(\mft)$ to be the signature of the permutation $\sm\in S_N$ given by 
	\[
	\sm(k)=\begin{cases}
		\mft(k) & \text{ if } k\in \ul{K} \\
		\mft'(k-K) & \text{ if } k\in \ul{N}\SM\ul{K}
	\end{cases}.
	\]
	Equivalently, 
	\[
	\sgn(\mft)=\int \ep_\mft\wedge \ep_{\mft'}\,d\ep_{\rmn{vol}}.
	\]
	For any $\Ld=(\ld_1,\ldots,\ld_K)$ which partitions $N$ and $\mft:\ul{N}\to\ul{M}$, write $\mft=(\mathfrak{t}_1\vert\ldots\vert\mft_{K})$ to indicate a decomposition of $\mft$ in which $\mft_1$ is the restriction of $\mft$ to the first $\ld_1$ positive integers, and each $\mft_{k}$ is the restriction of $\mft$ to the next $\ld_k$ positive integers. For convenience, we will treat each $\mft_k$ as having domain $\ul{\ld_k}$ instead of the appropriate subset of $\ul{N}$ of size $\ld_k$. 
	
	Conversely, for any $\mft_1:\ul{\ld_1}\to\ul{M},\ldots,\mft_{K}:\ul{\ld_K}\to \ul{M}$, we can construct $(\mathfrak{t}_1\vert\ldots\vert\mft_{K})=\mft:\ul{N}\to\ul{M}$ by defining for each $n\in\ul{N}$, $\mft(n)=\mft_k\left(n-\sum_{j=1}^{k-1}\ld_j\right)$ where this $k$, which depends on $n$, is the largest $k$ for which the difference inside the parentheses is positive. As before, it will be convenient to identify the restrictions of this new $\mft$ with the original $\mft_1,\ldots,\mft_{K}$ even though the domains aren't exactly the same. 
	
	In general, this is just a bookkeeping device which gives us a convenient notation for a choice of indices from the codomain $\underline{M}$. For example, when $\sm:\underline{N}\to \underline{N}$ is a permutation, we can think of $\sm$ as sorting $N$ many possible indices into $K$ blocks of different sizes $\ld_1,\ldots,\ld_K$ as specified by the images of the $\sm_k$. 
	
	\subsection{Decomposition of the symmetric group}

	For any $\Ld=(\ld_1,\ldots,\ld_K)$ which partitions $N$, let ${\rm{H}}(\Ld)\subseteq S_N$ denote the \emph{Young subgroup}, meaning ${\rm{H}}(\Ld)\cong S_{\ld_1}\times\cdots\times S_{\ld_{K}}$. Viewing $\sm$ as a function from $\underline{N}$ to $\underline{N}$, we can write the decomposition (with respect to $\Ld$) as $\sm=(\sm_1|\ldots |\sm_K)$. Then we should think of these $\sm_k$ belonging to the appropriate $S_{\ld_k}$.
	
	Let Sh$(\Ld)\subseteq S_N$ denote the subset of \emph{shuffle permutations}. These are permutations which satisfy $\sm(i)<\sm(j)$ whenever 
	\[
	\ld_1+\cdots+\ld_k< i<j\leq \ld_1+\cdots+\ld_{k+1}.
	\]
	Each shuffle permutation represents a way to iteratively riffle shuffle $K$ stacks of $\ld_1,\ldots,\ld_K$ many cards into a single pile of $N$ cards while preserving the original ordering within each of the $K$ stacks. Equivalently, the shuffle permutations are the $\sm\in S_N$ for which each $\sm_k$ is a strictly increasing function from $\underline{\ld_k}$ to $\underline{N}$. 
	
	Let ${\rm{Sh}}^\circ(\Ld)\subseteq {\rm{Sh}}(\Ld)$ denote the subset of \emph{ordered} shuffle permutations. These are shuffle permutations which also satisfy 
	\[
	\sm(1)<\sm(\ld_1+1)<\sm(\ld_1+\ld_2+1)<\cdots<\sm(\ld_1+\cdots +\ld_{K-1}+1).
	\]
	Using the decomposition $\sm=(\sm_1|\ldots|\sm_K)$, we can conveniently rewrite the above condition as
	\[
	\sm_1(1)<\sm_2(1)<\cdots<\sm_K(1).
	\]
	Let Bl$(\Ld)\subseteq S_N$ denote the subset of \emph{block permutations}. A block permutation represents shuffling a deck of cards by first separating the deck into a pile of the first $\ld_1$ cards, a pile of the second $\ld_2$ cards, and so on, then reassembling the deck without interlacing the piles or shuffling within any of the piles. Using the decomposition $\sm=(\sm_1|\ldots|\sm_K)$, block permutations are permutations for which each $\sm_k:\underline{\ld_k}\to \underline{N}$ acts by $\sm_k(j)=(j-1)+\sm_k(1)$.
	
	Clearly, any block permutation is determined entirely by the action on the first element of each block (of $\ld_k$ elements), of which there are $K$ many. Given a block permutation $\sm$, define $\te_\sm\in S_K$ to be the unique permutation for which $\te_\sm(i)<\te_\sm(j)$ if and only if $\sm_i(1)<\sm_j(1)$. Heuristically, $\sm$ moves blocks of $\ld_k$ consecutive elements together. $\te_\sm$ is the unique action on the $K$ many blocks determined by $\sm$. The map $\sm\mapsto\te_\sm$ is bijective, so ${\rm{Bl}}(\Ld)\cong S_K$.
	
	Note, because the blocks have (possibly) different sizes $\ld_j$, block permutations do not (in general) preserve partitions. Define 
	\[
	\Ld^\sm=(\ld_{\te_\sm^{-1}(1)},\ldots,\ld_{\te_\sm^{-1}(K)}),
	\]
	obtained from $\Ld$ by reordering its entries according to $\te_\sm$. 
	
	In the sequel, we make use of the following Lemma which decomposes permutations in the symmetric group as products of ordered shuffles, block permutations, and young permutations (composed in the opposite order).
	
	\begin{lem} \label{decomp} Let $\Ld=(\ld_1,\ldots,\ld_K)$ be a partition of $N$. Given any $\ph \in S_N$, there exists unique permutations $\ta\in {\rm{H}}(\Ld)$, $\pi\in {\rm{Bl}}(\Ld)$, and $\sm\in {\rm{Sh}}^{\circ}(\Ld^\pi)$ so that $\ph=\sm\circ \pi\circ \ta$. 
	\end{lem}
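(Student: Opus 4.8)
The plan is to strip off the three factors from right to left in two stages, using the image set-partition of $\ph$ as the organizing data. Write $\ul{N}$ as the disjoint union of the consecutive blocks $\cB_1,\ldots,\cB_K$ of sizes $\ld_1,\ldots,\ld_K$ cut out by $\Ld$, and for $\ph\in S_N$ set $S_k=\ph(\cB_k)$, so that $\{S_1,\ldots,S_K\}$ partitions $\ul{N}$ with $|S_k|=\ld_k$. Stage~1 is the standard fact that every left coset $\ph\,{\rm H}(\Ld)$ contains exactly one shuffle permutation: I would take $\rh\in{\rm Sh}(\Ld)$ to be the permutation whose restriction to each $\cB_k$ is the increasing bijection $\cB_k\to S_k$, so that $\ta:=\rh^{-1}\ph$ preserves each $\cB_k$ setwise and hence lies in ${\rm H}(\Ld)$, giving $\ph=\rh\circ\ta$; and since any $h\in{\rm H}(\Ld)$ fixes each $\cB_k$ setwise, $\ph$ and $\rh$ share a left coset iff $\ph(\cB_k)=\rh(\cB_k)$ for all $k$, and a shuffle is determined by those image sets, so $\rh$ (and then $\ta$) is unique.

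Stage~2 factors $\rh\in{\rm Sh}(\Ld)$ further as $\rh=\sm\circ\pi$ with $\pi\in{\rm Bl}(\Ld)$ and $\sm\in{\rm Sh}^\circ(\Ld^\pi)$. Here I would let $\te\in S_K$ be the unique permutation sorting the image blocks by least element, $\te(i)<\te(j)\iff\min S_i<\min S_j$, and let $\pi\in{\rm Bl}(\Ld)$ be the block permutation with $\te_\pi=\te$ (using ${\rm Bl}(\Ld)\cong S_K$). Then $\pi$ carries $\cB_k$ order-preservingly onto the $\te(k)$-th consecutive interval $I_{\te(k)}$ of $\ul{N}$, where the intervals $I_1,\ldots,I_K$ have sizes $\ld_{\te^{-1}(1)},\ldots,\ld_{\te^{-1}(K)}$ --- exactly the block pattern of $\Ld^\pi$. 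Defining $\sm$ to restrict on each $I_i$ to the increasing bijection $I_i\to S_{\te^{-1}(i)}$, one verifies that $\sm\circ\pi$ and $\rh$ both restrict on $\cB_k$ to the increasing bijection onto $S_k$, so $\sm\circ\pi=\rh$; that each $\sm|_{I_i}$ is increasing, so $\sm\in{\rm Sh}(\Ld^\pi)$; and that $\sm_i(1)=\min S_{\te^{-1}(i)}$ increases with $i$ by the choice of $\te$, so $\sm\in{\rm Sh}^\circ(\Ld^\pi)$.

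Assembling the stages gives $\ph=\sm\circ\pi\circ\ta$. For uniqueness I would argue as follows: $\sm\circ\pi\in{\rm Sh}(\Ld)$ for \emph{any} $\pi\in{\rm Bl}(\Ld)$ and $\sm\in{\rm Sh}^\circ(\Ld^\pi)$, since its restriction to $\cB_k$ is a composite of increasing maps; hence in any decomposition $\ph=\sm'\circ\pi'\circ\ta'$ of the required form, $\sm'\circ\pi'$ is a shuffle in $\ph\,{\rm H}(\Ld)$, so Stage~1 forces $\sm'\circ\pi'=\rh$ and $\ta'=\ta$. Then, writing $I'_i$ for the $\Ld^{\pi'}$-intervals, $\sm'_i(1)=\min\bigl(\sm'(I'_i)\bigr)=\min S_{\te_{\pi'}^{-1}(i)}$, and the ordered condition on $\sm'$ says this increases with $i$, which forces $\te_{\pi'}$ to be the least-element sorting permutation $\te$; hence $\pi'=\pi$ and then $\sm'=\sm$.

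I expect the main obstacle to be purely organizational: three partitions are simultaneously in play --- $\Ld$ (for $\ta$), $\Ld^\pi$ (for $\sm$), and the intermediate consecutive intervals $I_i$ --- and one must keep straight that $\pi$ sends the $k$-th $\Ld$-block to the $\te(k)$-th interval, whose size $\ld_k$ matches the $\te(k)$-th entry $\ld_{\te^{-1}(\te(k))}=\ld_k$ of $\Ld^\pi$. The one genuine idea, rather than bookkeeping, is the observation that the ``ordered'' condition on $\sm$ is exactly the statement that the blocks have been pre-sorted by least element; once that is recognized, uniqueness of $\pi$ (equivalently of $\te\in S_K$) is immediate and the rest collapses to the elementary left-coset argument of Stage~1.
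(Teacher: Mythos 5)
Your proof is correct and takes essentially the same two-stage route as the paper's (Lemmas \ref{shufflelem} and \ref{orderedshuffle}): first extract the unique shuffle permutation in the coset $\ph\,{\rm{H}}(\Ld)$, then split that shuffle into an ordered shuffle composed after the block permutation obtained by sorting the blocks according to the least elements of their images. The only cosmetic difference is that you establish uniqueness of the shuffle coset representative directly (a shuffle is determined by the image sets of its blocks), whereas the paper gets it by counting $|S_N/{\rm{H}}(\Ld)|=|{\rm{Sh}}(\Ld)|$.
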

	For completeness, we include a proof of this lemma in \autoref{sec:appendix}.

	\subsection{Chen's lemma}
	
	For a set $X$ and a ring $R$, let $R\langle X\rangle$ denote the free unital algebra on $X$ over $R$. Given $u=u_1\cdots u_k$ and $u'=u_{k+1}\cdots u_n$, define an operation $\shuffle$ on $R\langle X\rangle$ as follows:
	\[
	u\shuffle u'=\sum_{\sm\in{\rm{Sh}}(k,n-k)}u_{\sm^{-1}(1)}\cdots u_{\sm^{-1}(n)}
	\]
	and by $e\shuffle e = e$ for the empty word $e \in R\langle X\rangle$. Denote by $R\langle X\rangle_{\shuffle}$ the algebra $R\langle X\rangle$ with (shuffle) multiplication $\shuffle$, which is called the Shuffle Algebra on $X$. The shuffle product was first introduced by Eilenberg and Mac Lane in \cite{Eilenberg1953}. 
	
	Let $\mathcal{H}$ be the Hilbert space $L^2(\R)$ of square integrable functions with respect to Lebesgue measure on $\R$, and suppose $H$ is a finite-dimensional subspace of $\mathcal{H}$ with basis $X$. We assume $H$ is large enough to contain all functions of interest to us.

	For any $\sm\in S_k$, let $\Dt_k(\sm)\subset(a,b)^k$ denote the region where $a<x_{\sm^{-1}(1)}<\cdots <x_{\sm^{-1}(k)}<b$. For each non-negative integer $k$, define a linear functional $\langle\cdot\rangle_k$ on the $k^{\rm{th}}$ graded component of $\R\langle X\rangle\cong T(V)$ by
	\[
	\langle f_1\otimes \cdots\otimes  f_k\rangle_k=\int_{\Dt_k(\id)}f_1(x_1)\cdots f_K(x_k)\,dx_1\cdots dx_k.
	\]
	Note, $T(V)$ denotes the tensor algebra to which $\R\langle X\rangle$ is isomorphic. Though not strictly necessary, we write $f_1\otimes \cdots\otimes f_k\in T(V)$ instead of $f_1\cdots f_k\in \R\langle X\rangle$ to avoid confusing concatenation with function multiplication.
	
	This collection $\{\langle\cdot\rangle_k\}_{k=0}^\I$ defines a functional $\langle\cdot\rangle$ on $\R\langle X\rangle$ whereby $\langle\cdot\rangle$ acts as $\langle\cdot\rangle_k$ on the $k^{\rm{th}}$ graded component of a non-homogeneous tensor. The following lemma, due to Chen \cite{Chen1953}, asserts that this operator $\langle\cdot\rangle$ is actually an algebra homomorphism from $T(V)_\shuffle$ to $\R$:
	
	\begin{lem}[Chen] \label{chen} If $f,g\in \R\langle X\rangle$, then $\langle f\shuffle g\rangle=\langle f\rangle\langle g\rangle$.
	\end{lem}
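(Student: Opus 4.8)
The plan is to reduce to pure tensors by multilinearity and then recognize both sides as the integral of a single function over a single region, split up in two different ways. Since $\langle\cdot\rangle$ is linear and $\shuffle$ is bilinear, it suffices to prove the claim when $f=f_1\otimes\cdots\otimes f_p$ and $g=g_1\otimes\cdots\otimes g_q$ are homogeneous pure tensors. Then by Fubini's theorem,
\[
\langle f\rangle\langle g\rangle=\int_{\Dt_p(\id)}\prod_{i=1}^{p}f_i(x_i)\,dx\cdot\int_{\Dt_q(\id)}\prod_{j=1}^{q}g_j(y_j)\,dy=\int_{\Dt_p(\id)\times\Dt_q(\id)}\prod_{i=1}^{p}f_i(x_i)\prod_{j=1}^{q}g_j(y_j)\,dx\,dy,
\]
an integral over the set of tuples $(x_1,\ldots,x_p,y_1,\ldots,y_q)\in(a,b)^{p+q}$ satisfying $x_1<\cdots<x_p$ and $y_1<\cdots<y_q$, with no order relation imposed between the $x$'s and the $y$'s.

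The key step is a geometric decomposition of this region. Writing $w_\ell=x_\ell$ for $\ell\leq p$ and $w_{p+j}=y_j$, the set $\bigcup_{i,j}\{w_i=w_{p+j}\}$ has Lebesgue measure zero, and every point off this null set admits a unique total order on $\{w_1,\ldots,w_{p+q}\}$ refining both $w_1<\cdots<w_p$ and $w_{p+1}<\cdots<w_{p+q}$. Such refining total orders are exactly the shuffle permutations $\sm\in{\rm{Sh}}(p,q)$, and the subregion where the combined order is $\sm$ is precisely $\Dt_{p+q}(\sm)$. On that subregion the integrand is $\prod_{\ell}u_\ell(w_\ell)$ with $u_\ell=f_\ell$ for $\ell\leq p$ and $u_{p+j}=g_j$; applying the measure-preserving change of variables $t_\ell=w_{\sm^{-1}(\ell)}$ (a coordinate permutation, hence Jacobian $\pm1$) turns this integral into $\int_{\Dt_{p+q}(\id)}\prod_{m}u_{\sm^{-1}(m)}(t_m)\,dt=\langle u_{\sm^{-1}(1)}\otimes\cdots\otimes u_{\sm^{-1}(p+q)}\rangle_{p+q}$, which is exactly the value of $\langle\cdot\rangle$ on the term indexed by $\sm$ in the expansion of $f\shuffle g$. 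Summing over $\sm\in{\rm{Sh}}(p,q)$ gives $\langle f\shuffle g\rangle=\langle f\rangle\langle g\rangle$.

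I would keep in reserve an alternative proof by induction on $p+q$ using the fundamental theorem of calculus: set $F(t)=\int_{a<x_1<\cdots<x_p<t}\prod_i f_i(x_i)\,dx$ and $G(t)=\int_{a<y_1<\cdots<y_q<t}\prod_j g_j(y_j)\,dy$, so that $F'(t)=f_p(t)F^-(t)$ and $G'(t)=g_q(t)G^-(t)$, where $F^-,G^-$ are the corresponding functions for the words with last letters deleted. The Leibniz rule $(FG)'=f_p(t)(F^-G)(t)+g_q(t)(FG^-)(t)$ mirrors the recursion $(u_1\cdots u_p)\shuffle(v_1\cdots v_q)=\bigl((u_1\cdots u_{p-1})\shuffle(v_1\cdots v_q)\bigr)u_p+\bigl((u_1\cdots u_p)\shuffle(v_1\cdots v_{q-1})\bigr)v_q$ for the shuffle product; integrating from $a$ to $b$ and invoking the inductive hypothesis on the two lower-degree shuffles closes the argument, with base case $\langle e\shuffle w\rangle=\langle w\rangle=\langle e\rangle\langle w\rangle$.

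The main obstacle is making the decomposition in the second paragraph fully rigorous: one must verify that the open simplices $\Dt_{p+q}(\sm)$ for $\sm\in{\rm{Sh}}(p,q)$ are pairwise disjoint and together exhaust $\Dt_p(\id)\times\Dt_q(\id)$ up to the null set $\bigcup_{i,j}\{x_i=y_j\}$, and that the relabeling genuinely carries the integrand to the intended shuffle term. Because we integrate a function over a region (not a differential form), no orientation or sign subtleties intervene, so once the combinatorial correspondence between ${\rm{Sh}}(p,q)$ and refining total orders is pinned down, the remainder is routine bookkeeping with the definitions of $\Dt_k(\sm)$, $\langle\cdot\rangle_k$, and $\shuffle$.
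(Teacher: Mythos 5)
Your proposal is correct and follows essentially the same route as the paper's proof: reduce to pure tensors by linearity, apply Fubini to write $\langle f\rangle\langle g\rangle$ as an integral over $\Dt_p(\id)\times\Dt_q(\id)$, decompose that region up to a null set into the simplices $\Dt_{p+q}(\sm)$ indexed by $\sm\in{\rm{Sh}}(p,q)$, and relabel variables to identify each piece with the corresponding shuffle term. The reserve argument via the Leibniz rule is a genuine alternative, but the main argument you give is the one in the paper.
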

	A major hurdle in computing the partition function $Z_{\vec{M}}$ is the absolute value inside the integral. We will remove the absolute value by decomposing the domain of integration into these totally ordered subsets. However, when we change the domain of integration, we lose the ability to use Fubini's Theorem. Chen's Lemma serves the role of Fubini's Theorem, provided we can demonstrate the integrand to have the appropriate form.
	
	\begin{proof}
		We can assume $f=f_1\otimes \cdots \otimes f_k$ is a pure tensor of length $k$ and $g=f_{k+1}\otimes\cdots\otimes f_{n}$ is a pure tensor of length $n-k$ (because $\langle\cdot\rangle$ is linear and $\shuffle$ distributes over addition). Then $\langle f\shuffle g\rangle=\langle f\shuffle g\rangle_n$ is an $n$-fold iterated integral over $\Dt_n(\id)$. Likewise, $\langle f\rangle\langle g\rangle=\langle f\rangle_k\langle g\rangle_{n-k}$ is the product of a $k$-fold iterated integral over $\Dt_k(\id)$ and an $(n-k)$-fold iterated integral over $\Dt_{n-k}(\id)$. Under the integrability criteria on $X$, Fubini's Theorem tells us this is the same as an $n$-fold iterated integral over the product space $\Dt_k(\id)\times \Dt_{n-k}(\id)$. 
		
		Key to the proof of Chen's Lemma is the observation that
		\[
		\Dt_k(\id)\times \Dt_{n-k}(\id)=Z\cup \bigcup_{\sm\in {\rm{Sh}}(k,n-k)}\Dt_n(\sm),
		\]
		where $Z\subset (a,b)^n$ has measure $0$. To this end, let $Y$ be the set of points in $(a,b)^n$ whose coordinates are all distinct, and then let $Z=((a,b)^n\SM Y)\cap (\Dt_k(\id)\times \Dt_{n-k}(\id))$. If $\vec{x}\in \left(\Dt_k(\id)\times \Dt_{n-k}(\id)\right)\SM Z\subset Y$, then its coordinates are all distinct and can be ordered according to some permutation, meaning $\vec{x}\in \Dt_n(\sm)$ for a unique $\sm\in S_n$. Since $\vec{x}\in \Dt_k(\id)\times \Dt_{n-k}(\id)$, this permutation is one which separately preserves the relative order of the first $k$ coordinates and the relative order of the remaining $n-k$ coordinates, meaning $\sm\in {\rm{Sh}}(k,n-k)$. 
		Thus,
		\begin{align*}
			\langle f\rangle\langle g\rangle&=\int_{\Dt_k(\id)\times \Dt_{n-k}(\id)}f_1(x_1)\cdots f_n(x_n)\,dx_1\cdots dx_n\\&=\sum_{\sm\in{\rm{Sh}}(k,n-k)}\int_{\Dt_n(\sm)}f_1(x_1)\cdots f_n(x_n)\,dx_1\cdots dx_n.
		\end{align*}
		By relabeling the variables as $x_j=y_{\sm(j)}$, we can rewrite this sum as
		\begin{align*}
			&=\sum_{\sm\in{\rm{Sh}}(k,n-k)}\int_{\Dt_n(\id)}f_{1}(y_{\sm(1)})\cdots f_{n}(y_{\sm(n)})\,dy_1\cdots dy_n\\&=\sum_{\sm\in{\rm{Sh}}(k,n-k)}\int_{\Dt_n(\id)}f_{\sm^{-1}(1)}(y_1)\cdots f_{\sm^{-1}(n)}(y_n)\,dy_1\cdots dy_n\\&=\langle f\shuffle g\rangle.
		\end{align*}
	\end{proof}
	
	Until \autoref{sec:circular}, we will only need $(a,b)=(-\I,\I)=\R$, but we have shown Chen's Lemma to hold for more general intervals. Also, we do not need our functions to be real-valued. We just need the codomain to be associative and commutative, provided our functions are appropriately integrable. In particular, we will use complex-valued functions on real domain $[0,2\pi)$ for the circular ensembles in \autoref{sec:circular}.

	\subsection{Exterior shuffle algebra}

	Let $\{Q_{\mft}^j\,|\,j\in\ul{J}, \mft:\ul{L_j}\to\ul{N}\}$ be a subset of an alphabet $I$, and let $F$ be a field of characteristic 0. We use single integer superscripts to emphasize there are different $Q$'s which admit different numbers of integer subscripts, chosen by $\mft$'s of different sizes. For each $\mft:\ul{L_j}\to \ul{N}$, define $A_{\mft}^j\in F\langle I\rangle$ by 
	\[
	A_{\mft}^j=\sum_{\ta\in S_{L_j}}\sgn(\ta)Q_{\mft\circ \ta}^j.
	\]
	We call this the \emph{antisymmetrization} of $Q_\mft^j$. We should think of this as analogous to taking a pure tensor in the tensor algebra and then adding to it all possible orderings of the basis vectors with signs. Let $V$ be a rank $N$ free module over $R=F\langle I\rangle_\shuffle$. Define $\af_j\in \bigwedge_RV$ by 
	\[
	\af_j=\sum_{\mft:\ul{L_j}\to\ul{N}}Q_\mft^j\,\ep_{\mft}=\sum_{\mft:\ul{L_j}\nearrow\ul{N}}A_\mft^j\,\ep_{\mft}.
	\]
	We call this an \emph{antisymmetrized} $L_j$-form. 
	
	Let $\vec{M}=(M_1,\ldots,M_J)$, and let $\vec{L}=(L_1,\ldots,L_J)$ such that each $L_j$ is even. Let $N=\vec{L}\cdot\vec{M}$, and let $\Ld=(L_1,\ldots,L_1,L_2,\ldots,L_2,\ldots,L_J,\ldots,L_J)=(\ld_1,\ld_2,\ldots,\ld_K)$ with each $L_j$ appearing $M_j$ times. Let $K=\sum_{j=1}^JM_j$. For $\sm\in S_N$, write $\sm=(\sm_1|\cdots|\sm_K)$ so that each $\sm_{k}$ is the restriction of $\sm:\ul{N}\to \ul{N}$ to a subset of size $\ld_k=L_{j}$ of which there are $M_{j}$ many. In particular, when $\sm\in{\rm{Sh}}(\Ld)$, we have $\sm_{k}:\ul{L_j}\nearrow \ul{N}$. 
	
	\begin{lem} \label{af}Under the above assumptions (particularly requiring each $L_j$ to be even) and definitions (such as how to obtain $\af_j$ from the $Q^j_{\mft}$),  we have
		\[
		\sum_{\sm\in S_N}\sgn(\sm)Q_{\sm_1}^1\cdots Q_{\sm_K}^J=\int\frac{\af_1^{\wedge_\shuffle M_1}\wedge_\shuffle\cdots\wedge_\shuffle\af_J^{\wedge_\shuffle M_J}}{K!}\,d\ep_{\rmn{vol}}.
		\]
		In particular, when $J=1$, the right hand side is ${\rm{PF}}_\shuffle(\af_1)$, where the subscript $\shuffle$ is added to emphasize the coefficients $A_\mft^j$ are in  $F\langle I\rangle_\shuffle$ in which multiplication of coefficients is done by $\shuffle$.
	\end{lem}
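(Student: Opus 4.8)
The plan is to work from the right-hand side: unfold the wedge--shuffle product and the Berezin integral into a multi-indexed sum, then collapse that sum back to the sum over $S_N$ on the left using Lemma~\ref{decomp}. First I would unfold the Berezin integral. Writing $\af_j=\sum_{\mft\c\ul{L_j}\nearrow\ul N}A^j_\mft\,\ep_\mft$ and expanding the $K$-fold wedge product $\af_1^{\wedge M_1}\wedge\cdots\wedge\af_J^{\wedge M_J}$, one obtains a sum over tuples $(\mathfrak{s}_1,\dots,\mathfrak{s}_K)$ of increasing functions $\mathfrak{s}_k\c\ul{\ld_k}\nearrow\ul N$ of terms $(A^1_{\mathfrak{s}_1}\shuffle\cdots\shuffle A^J_{\mathfrak{s}_K})\,\ep_{\mathfrak{s}_1}\wedge\cdots\wedge\ep_{\mathfrak{s}_K}$, the coefficient products being shuffle products because the coefficients lie in $R=F\langle I\rangle_\shuffle$. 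Since $\int\cdot\,d\ep_{\rmn{vol}}$ is the projection onto the determinantal line $\bigwedge^N(\R^N)$, such a term survives precisely when the images of $\mathfrak{s}_1,\dots,\mathfrak{s}_K$ partition $\ul N$, i.e., when $\sm:=(\mathfrak{s}_1|\cdots|\mathfrak{s}_K)\in{\rm{Sh}}(\Ld)$, and then $\int\ep_{\mathfrak{s}_1}\wedge\cdots\wedge\ep_{\mathfrak{s}_K}\,d\ep_{\rmn{vol}}=\sgn(\sm)$. Hence the right-hand side equals $\frac{1}{K!}\sum_{\sm\in{\rm{Sh}}(\Ld)}\sgn(\sm)\,\bigl(A^1_{\sm_1}\shuffle\cdots\shuffle A^J_{\sm_K}\bigr)$.

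Next I would unfold the antisymmetrizations via $A^j_\mft=\sum_{\ta\in S_{L_j}}\sgn(\ta)\,Q^j_{\mft\circ\ta}$, so that each $A^j_\mft$ becomes a combination of degree-one elements of $R$; the shuffle product of the $K$ of them then expands as a sum over all $K!$ orderings of the factors, and it is exactly this $K!$ that has to cancel the prefactor $\frac{1}{K!}$. The collapse is carried out by Lemma~\ref{decomp}: every $\ph\in S_N$ factors uniquely as $\ph=\sm\circ\pi\circ\ta$ with $\ta\in{\rm{H}}(\Ld)$, $\pi\in{\rm{Bl}}(\Ld)$, $\sm\in{\rm{Sh}}^\circ(\Ld^\pi)$, and $\sgn(\ph)=\sgn(\sm)\sgn(\pi)\sgn(\ta)$. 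The sum over $\ta=(\ta_1|\cdots|\ta_K)\in{\rm{H}}(\Ld)$ rebuilds each antisymmetrization $A^j_{\sm_k}$ out of its letters $Q^j_{\sm_k\circ\ta_k}$ by matching block restrictions of $\ph$; the sum over block permutations $\pi\in{\rm{Bl}}(\Ld)\cong S_K$ supplies exactly the $K!$ reorderings of the factors produced by the shuffle product, while simultaneously carrying $\Ld$ to $\Ld^\pi$; and the sum over ordered shuffles $\sm\in{\rm{Sh}}^\circ(\Ld^\pi)$, together with that reordering, reconstitutes the full ${\rm{Sh}}(\Ld)$-sum of the previous paragraph. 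Reading these identifications backwards converts $\frac{1}{K!}\sum_{\sm\in{\rm{Sh}}(\Ld)}\sgn(\sm)(A^1_{\sm_1}\shuffle\cdots\shuffle A^J_{\sm_K})$ into $\sum_{\ph\in S_N}\sgn(\ph)\,Q^1_{\ph_1}\cdots Q^J_{\ph_K}$. When $J=1$ the situation degenerates: $\Ld=(L,\dots,L)$ with $LM_1=N$, so $\af_1^{\wedge M_1}/K!$ is already homogeneous of top degree $N$, lies on the determinantal line, and the scalar it contributes is ${\rm{PF}}_\shuffle(\af_1)$ by definition.

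Each manipulation above is sign-transparent precisely because every $L_j$ is even. On one hand, the hypothesis makes each $\af_j$ a homogeneous even form, so the powers $\af_j^{\wedge M_j}$ commute, the wedge product is well defined, and the multinomial structure underlying the first paragraph is legitimate. On the other hand, a block transposition interchanges two blocks of even sizes $L_a$ and $L_b$, so it is an even permutation of the $L_a+L_b$ indices it moves --- its sign being $(-1)^{L_aL_b}=1$ --- whence $\sgn(\pi)=1$ for every $\pi\in{\rm{Bl}}(\Ld)$ and the passage $\Ld\leftrightarrow\Ld^\pi$ in the collapse never costs a sign. I expect this last part to be the technical heart of the argument: one must check that the three independent families in Lemma~\ref{decomp} --- within-block permutations, block permutations, and ordered interleavings of $\Ld^\pi$ --- reproduce every term of the $S_N$-sum exactly once and with the right sign, so that the $\frac{1}{K!}$ is precisely absorbed by the $K!$ orderings from the shuffle product; everything else is bookkeeping once this dictionary is fixed.
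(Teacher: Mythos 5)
Your proposal is correct and follows essentially the same route as the paper's proof: expand the wedge product, observe that the Berezin integral kills every term except those indexed by $\sm\in{\rm{Sh}}(\Ld)$ (contributing $\sgn(\sm)$), expand the antisymmetrizations, use the fact that shuffling $K$ single letters yields all $K!$ orderings, and invoke Lemma~\ref{decomp} together with $\sgn(\pi)=1$ for block permutations of even blocks to reassemble the sum over $S_N$. The only cosmetic difference is that the paper absorbs the $\frac{1}{K!}$ by passing to ordered shuffles ${\rm{Sh}}^\circ(\Ld)$ before expanding the $A^j_\mft$, whereas you cancel it against the $K!$ letter-orderings afterward; the underlying bijection is identical.
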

	
	Though this lemma holds for a more general collection of $Q_{\mft}^j$, we should think of the left hand side as being an $N\times N$ determinant. A single $Q_{\sm_k}^j$ is a product of $L_j$ many entries from the matrix. $A_{\sm_k}^j$, the antisymmetrization of $Q_{\sm_k}^j$, is the determinant of an $L_j\times L_j$ minor selected by $\sm_k$. In summary, this Lemma \ref{af} transforms any determinantal integrand into one for which Chen's Lemma will apply. This is functionally similar to the Laplace expansion of the determinant (over complimentary $L_j\times L_j$ minors) which Sinclair uses in his proofs.

	\begin{proof} Starting with the right hand side, note each factor  $\af_j$ is a sum of $A_\mft^j\ep_{\mft}$'s. If we expand the product of the sums, each summand will be a product of some $A_\mft^j\ep_{\mft}$'s. Any time $\mft:\underline{L_j}\nearrow\underline{N}$ and $\mathfrak{s}:\underline{L_k}\nearrow\underline{N}$ have overlapping ranges, $\ep_{\mft}\wedge\ep_\mathfrak{s}=0$. Thus, each nonzero summand corresponds to a permutation $(\mft_1|\ldots|\mft_K)=\sm\in S_N$. Since each $\mft_k$ is an increasing function, we have $\sm\in {\rm{Sh}}(\Ld)$. We use the Berezin integral $\int d\ep_{\rmn{vol}}$ because it sends  $\ep_\sm=\ep_{\mft_1}\wedge \cdots\wedge \ep_{\mft_K}$ to $\sgn(\sm)$ and picks up the coefficients $A_{\sm_1}^1\shuffle\cdots\shuffle A_{\sm_K}^J$. We can rewrite the right hand side as
		\[
		{\rm{RHS}}=\frac{1}{K!}\sum_{\sm\in{\rm{Sh}}(\Ld)}\sgn(\sm)A_{\sm_1}^1\shuffle\cdots\shuffle A_{\sm_K}^J.
		\]
		We eliminate the factorial denominator by requiring $\sm_{1}(1)<\sm_2(1)<\ldots<\sm_K(1)$. 
		\[
		{\rm{RHS}}=\sum_{\sm\in{\rm{Sh}}^\circ(\Ld)}\sgn(\sm)A_{\sm_1}^1\shuffle\cdots\shuffle A_{\sm_K}^J.
		\]
		Expanding each $A_{\sm_k}^{j}$ according to the definition, we get
		\[
		{\rm{RHS}}=\sum_{\sm\in{\rm{Sh}}^\circ(\Ld)}\sgn(\sm)\left(\sum_{\ta_1\in S_{L_1}}\sgn(\ta_1)Q_{\sm_1\circ \ta_1}^1\right)\shuffle\cdots\shuffle\left(\sum_{\ta_K\in S_{L_J}}\sgn(\ta_K)Q_{\sm_K\circ \ta_K}^J\right).
		\]
		Collect the $\ta_k$'s as a single element of ${\rm{H}}(\Ld)\cong (S_{L_1})^{M_1}\times\cdots\times (S_{L_J})^{M_J}$ with $\sgn(\ta)=\sgn(\ta_1)\cdots\sgn(\ta_K)$, so that 
		\[
		{\rm{RHS}}=\sum_{\sm\in{\rm{Sh}}^\circ(\Ld)}\sgn(\sm)\sum_{\ta\in {\rm{H}}(\Ld)}\sgn(\ta)Q_{\sm_1\circ\ta}^1\shuffle\cdots\shuffle Q_{\sm_K\circ\ta}^J.
		\]
		Next, we apply an identity of the $\shuffle$ operation. Note, we are shuffling individual letters together, not strings of letters, so the sum is over $\pi\in S_K={\rm{Sh}}(1,\ldots,1)$. The action on the subscripts can also be viewed as a permutation of the $K$ many ``blocks" of $N$ as prescribed by the partition $\Ld$. Thus,
		\begin{align*}
			{\rm{RHS}}&=\sum_{\sm\in{\rm{Sh}}^\circ(\Ld)}\sgn(\sm)\sum_{\ta\in {\rm{H}}(\Ld)}\sgn(\ta)\sum_{\pi\in S_{K}}Q_{\sm_{\pi(1)}\circ\ta}^1\cdots Q_{\sm_{\pi(K)}\circ\ta}^J\\&=\sum_{\ph\in S_N}\sgn(\ph)Q_{\ph_1}^1\cdots Q_{\ph_K}^J. \end{align*}
		Note, the equality in the last line is not an obvious one but follows from Lemma \ref{decomp}. In the context of Lemma \ref{af}, all of the block sizes $L_j$ are even, so $\sgn (\pi)=1$ for all $\pi \in {\rm{Bl}}(\Ld)$. Thus, $\sgn(\ph)=\sgn(\sm\circ\pi\circ \ta)=\sgn(\sm)\sgn(\ta)$. 
		
	\end{proof}

	\subsection{Confluent Vandermonde determinant}
	\label{ssec:confluent}
	
	In this subsection, we demonstrate that the joint probability density function can be written as a determinant (to which we can apply Lemma \ref{af}). Fix charge vector $\vec{L}$, population vector $\vec{M}$, and location vector $\textbf{x}$ as in \autoref{sec:setup}. Recall $N=\sum_{j=1}^JM_jL_j$. Let $\vec{f}=\left\{f_n\right\}_{n=1}^N$ be a family of $\max(L_1,\ldots,L_J)-1$ times differentiable functions. For each $j$, define the $N\times L_j$ matrix
	\[
	V^{L_j}(x)=\left[D^{l-1}f_n(x)\right]_{n,l=1}^{N,L_j}.
	\]
	For each $\textbf{x}^j\in \R^{M_j}$, define the $N\times M_jL_j$ matrix
	\[
	V^{L_j,M_j}(\textbf{x}^j)=\begin{bmatrix}
		V^{L_j}(x_1^j) & V^{L_j}(x_2^j) & \cdots & V^{L_j}(x_{M_j}^j)
	\end{bmatrix}.
	\]
	Finally, define the $N\times N$ matrix
	\[
	V^{\vec{L},\vec{M}}({\textbf{x}})=\begin{bmatrix}
		V^{L_1,M_1}(\textbf{x}^1) & V^{L_2,M_2}(\textbf{x}^2) & \cdots & V^{L_J,M_J}(\textbf{x}^J)
	\end{bmatrix},
	\]
	in which each variable $x_m^j$ appears in $L_j$ many consecutive columns, generated from $\vec{f}$ by taking derivatives. The Wronskians which appear in \autoref{sec:results} are merely the determinants of the univariate $L_j\times L_j$ minors of this matrix. 
	
	With the additional restriction that $\vec{f}$ be a complete $N$-family of monic polynomials, we call $V^{\vec{L},\vec{M}}({\textbf{x}})$ the confluent Vandermonde matrix (with respect to shape $\vec{L},\vec{M}$) in variables ${\textbf{x}}$. Under these conditions (on $\vec{f}$), it is well known \cite{Meray1899} that
	\[
	\det V^{\vec{L},\vec{M}}({\textbf{x}})=\prod_{j=1}^J\prod_{m<n}\left(x_n^j-x_m^j\right)^{ L_j^2}\times \prod_{j<k}\prod_{m=1}^{M_j}\prod_{n=1}^{M_k}\left(x_n^k-x_m^j\right)^{ L_jL_k}.
	\]
	Note, we can only construct whole numbers of columns for each variable. This is where our restrictions on $\vec{L}$ and $\bt$ come from. On the physical side, we only consider whole number charges for our particles. Using the above confluent Vandermonde determinant identity, we get
	\[
	\Om_{\vec{M}}({\textbf{x}})=W_{\vec{M}}(\textbf{x})\left|\det V^{\vec{L},\vec{M}}({\textbf{x}})\right|.
	\]
	Moreover, if we define the $N\times L_j$ matrices
	\[
	H^{L_j}(x)=\exp(-U(x))V^{L_j}(x)=\left[\exp(-U(x))\cdot D^{l-1}f_n(x)\right]_{n,l=1}^{N,L_j}
	\]
	and the combined $N\times N$ matrix $H^{\vec{L},\vec{M}}({\textbf{x}})$, then
	\[
	\left|\det H^{\vec{L},\vec{M}}({\textbf{x}})\right|=\prod_{j=1}^J\prod_{m=1}^{M_j}\exp\left(- L_jU(x_m^j)\right)\left|\det V^{\vec{L},\vec{M}}({\textbf{x}})\right|=W_{\vec{M}}(\textbf{x})\left|\det V^{\vec{L},\vec{M}}({\textbf{x}})\right|.
	\]
	Thus, integration of $V^{\vec{L},\vec{M}}({\textbf{x}})$ with respect to the $d\mu_j$'s is equivalent to integration of $H^{\vec{L},\vec{M}}({\textbf{x}})$ with respect to Lebesgue measure. For us to use Lemma \ref{af}, it is important the entire integrand $\Om_{\vec{M}}(\textbf{x})$ be determinantal, with the extra weight functions $W_{\vec{M}}(\textbf{x})$ incorporated into $H^{\vec{L},\vec{M}}({\textbf{x}})$.
	
	As an example, consider one charge 2 particle, one charge 3 particle, and three charge 1 particles with potential $U(x)=x^2$. This gives us $\vec{L}=(2,3,1)$, $\vec{M}=(1,1,3)$, and $N=8$. For simplicity, we will use the variables $\textbf{x}=(a,b,c_1,c_2,c_3)$. Let $\vec{f}=\{x^{n-1}\}_{n=1}^N$. Then the three columns corresponding to the charge 3 particle are
	\[
	V^3(b)=\begin{bmatrix}
		1 & 0 & 0 \\
		b & 1 & 0 \\
		b^2 & 2b & 1 \\
		b^3 & 3b^2 & 3b \\
		\vdots &  & \vdots \\
		b^7 & 7b^6 & 21b^5
	\end{bmatrix}.
	\]
	In the third column, we have not just the second derivative but also a denominator of $2!$. One consequence of these $l!$ denominators in $D^{l-1}$ is that we get 1's on this diagonal. We also get three columns corresponding to the three charge 1 particles
	\[
	V^{1,3}(c_1,c_2,c_3)=\begin{bmatrix}
		1 & 1 & 1 \\
		c_1 & c_2 & c_3 \\
		c_1^2 & c_2^2 & c_3^2 \\
		\vdots &  & \vdots \\
		c_1^7 & c_2^7 & c_3^7
	\end{bmatrix}.
	\]
	Together, the full $8\times 8$ confluent Vandermonde matrix is
	\[
	V^{\vec{L},\vec{M}}(\textbf{x})=\begin{bmatrix}
		1 & 0 & 1 & 0 & 0 & 1 & 1 & 1\\
		a & 1 & b & 1 & 0 & c_1 & c_2 & c_3\\
		a^2 & 2a & b^2 & 2b & 1 & c_1^2 & c_2^2 & c_3^2 \\
		a^3 & 3a^2 & b^3 & 3b^2 & 3b & c_1^3 & c_2^3 & c_3^3 \\
		\vdots & & & & & & & \vdots \\
		a^7 & 7a^6 & b^7 & 7b^6 & 21b^5 & c_1^7 & c_2^7 & c_3^7 
	\end{bmatrix}.
	\]
	We obtain $H^{\vec{L},\vec{M}}(\textbf{x})$ by multiplying the first two columns by $\exp(-a^2)$, the next three columns by $\exp(-b^2)$, and the last three columns by the appropriate $\exp(-c_j^2)$. This changes the determinant by
	\[
	W_{\vec{M}}(\textbf{x})=\exp(-2a^2)\exp(-3b^2)\exp(-c_1^2)\exp(-c_2^2)\exp(-c_3^2).
	\]

	\subsection{Absolute value of determinants}
	\label{ssec:absdet}
	
	Though our Boltzmann factor integrand $\Om_{\vec{M}}(\textbf{x})$ is recognizably determinantal, we still need to remove the absolute value before we can apply Lemma \ref{af} (and later Lemma \ref{chen}). This can be done by decomposing the domain of integration into subsets over which the sign of the determinant is constant. Namely, we use totally ordered subsets $\Dt_N(\sm)$ over which the differences in the confluent Vandermonde determinant never change signs. These smaller domains of integration are exactly the ones which allow us to apply (Chen's) Lemma \ref{chen}.
	
	As in \autoref{sec:results}, suppose $L_j$ is even for $1\leq j\leq r$. Let $ K_e=\sum_{j=1}^rM_j$ be the total number of particles with even charge, and let $ K_o=\sum_{j=r+1}^JM_j$ be the total number of particles with odd charge. Relabel
	\[
	y_1=x_1^{1}, \hspace{10mm} y_2=x_2^{1}, \hspace{10mm} \cdots \hspace{10mm} y_{M_1}=x_{M_1}^{1}, \hspace{10mm} y_{M_1+1}=x_{1}^{2} \hspace{10mm}\cdots \hspace{10mm} y_{K_e}=x_{M_r}^{r}
	\]
	so that $\vec{y}=(\textbf{x}^{1}, \textbf{x}^{2}, \ldots, \textbf{x}^{r})$ gives the locations of the particles of even charge. Similarly, relabel 
	\[
	w_1=x_1^{r+1},\hspace{10mm} \cdots \hspace{10mm} w_{M_{r+1}}=x_{M_{r+1}}^{r+1}, \hspace{10mm} w_{M_{r+1}+1}=x_{1}^{r+2} \hspace{10mm}\cdots \hspace{10mm} w_{K_e}=x_{M_J}^{J}
	\]
	so that $\vec{w}=(\textbf{x}^{r+1}, \textbf{x}^{r+2}, \ldots, \textbf{x}^{J})$ gives the locations of the particles of odd charge. Define $\ld_j^e$ to be the $L_k$ which corresponds to $y_j$ so that
	\[
	\Ld^e=(\ld_1^e,\ldots,\ld_{K_e}^e)=(L_{1},\ldots,L_{1},L_{2},\ldots,L_{2},\ldots,L_{r},\ldots,L_r)
	\]
	gives the list of even charges, with each $L_j$ appearing $M_j$ times. Similarly define $\ld_j^o$ for corresponding $w_j$ so that
	\[
	\Ld^o=(\ld_1^o,\ldots,\ld_{K_o}^o)=(L_{r+1},\ldots,L_{r+1},L_{r+2},\ldots,L_{r+2},\ldots,L_{J},\ldots,L_J)
	\]
	gives the list of the odd charges. We will treat all charges $\ld_j^e$ and $\ld_j^o$ as distinct until it is relevant to recall which charges are repeated (and how many times each).
	
	For each $(\sm,\ta)\in S_{K_e}\times S_{K_o}$, define $V_{\sm\ta}^{\vec{L},\vec{M}}({\textbf{x}})$ to be the matrix
	\[
	V^{\vec{L},\vec{M}}_{\sm,\ta}({\textbf{x}})=\begin{bmatrix}
		V^{\ld^e_{\sm^{-1}(1)}}(y_{\sm^{-1}(1)}) & \cdots & V^{\ld^e_{\sm^{-1}(K_e)}}(y_{\sm^{-1}(K_e)}) & V^{\ld^o_{\ta^{-1}(1)}}(w_{\ta^{-1}(1)}) & \cdots & V^{\ld^o_{\ta^{-1}(K_o)}}(w_{\ta^{-1}(K_o)})
	\end{bmatrix},
	\]
	obtained from $V^{\vec{L},\vec{M}}({\textbf{x}})$ by permuting the columns so that the columns with $y_{\sm^{-1}(1)}$ come first (of which there are $\ld^e_{\sm^{-1}(1)}$ many), then all of the columns with $y_{\sm^{-1}(2)}$ come next (of which there are $\ld^e_{\sm^{-1}(2)}$ many), and so on until the $y_j$ are exhausted, doing the same for the $w_j$.

	Using the same confluent Vandermonde determinant identity from \autoref{ssec:confluent}, we get
	\begin{align*}
		\det V^{\vec{L},\vec{M}}_{\sm,\ta}({\textbf{x}})&=\prod_{j<k}(y_{\sm^{-1}(k)}-y_{\sm^{-1}(j)})^{\ld^e_{\sm^{-1}(k)}\ld^e_{\sm^{-1}(j)}}\times \prod_{j=1}^{K_e}\prod_{k=1}^{K_o}(w_{\sm^{-1}(k)}-y_{\ta^{-1}(j)})^{\ld^o_{\sm^{-1}(k)}\ld^e_{\ta^{-1}(j)}}\\&\hspace{10mm}\times \prod_{j<k}(w_{\ta^{-1}(k)}-w_{\ta^{-1}(j)})^{\ld^o_{\ta^{-1}(k)}\ld^o_{\ta^{-1}(j)}}.
	\end{align*}
	Next, consider ${\textbf{x}}=(\vec{y},\vec{w})\in \Dt_{K_e}(\sm)\times \Dt_{K_o}(\ta)$ in which the even charged particles (located by $\vec{y}$) are ordered according to $\sm$ and the odd charged particles (located by $\vec{w}$) are ordered according to $\ta$. In particular, $w_{\ta^{-1}(k)}>w_{\ta^{-1}(j)}$ whenever $j<k$. Thus, all differences in the third product are positive. Additionally, each difference in the first and second products have even exponents $\ld^e_j$. Thus, 
	\[
	\left|\det V^{\vec{L},\vec{M}}_{\sm,\ta}({\textbf{x}})\right|=\det V^{\vec{L},\vec{M}}_{\sm,\ta}({\textbf{x}})
	\]
	on the domain $\Dt_{K_e}(\sm)\times \Dt_{K_o}(\ta)$. 
	
	Note, permuting the variables $\vec{y}$ involves permuting blocks of even numbers of columns at a time, leaving the determinant of $V^{\vec{L},\vec{M}}({\textbf{x}})$ unchanged. In contrast, permuting variables $\vec{w}$ involves permuting blocks of odd numbers of columns at a time, changing the determinant by $\sgn(\ta)$. Thus, 
	\[
	\left|\det V^{\vec{L},\vec{M}}({\textbf{x}})\right|=\left|\sgn(\ta)\det V^{\vec{L},\vec{M}}_{\sm,\ta}({\textbf{x}})\right|=\det V^{\vec{L},\vec{M}}_{\sm,\ta}({\textbf{x}})
	\]
	on the domain $\Dt_{K_e}(\sm)\times \Dt_{K_o}(\ta)$. Analogous results hold if we replace $V^{\vec{L},\vec{M}}({\textbf{x}})$ with $H^{\vec{L},\vec{M}}({\textbf{x}})$ (as defined in \autoref{ssec:confluent}).
	
	Recall the example from \autoref{ssec:confluent} in which we have a single particle of even charge 2. Then $K_e=1$, $\vec{y}=(a)$,  and $\Ld^e=(2)$. There is one particle of odd charge 3, and there are three particles of odd charge 1. Then $K_o=4$, $\vec{w}=(b,c_1,c_2,c_3)$, and $\Ld^o=(3,1,1,1)$. Let $\ta$ be the permutation which swaps $b$ with $c_1$. The new matrix (with permuted columns) is given by
	\[
	V^{\vec{L},\vec{M}}_{\id,\ta}(\textbf{x})=\begin{bmatrix}
		1 & 0 & 1 & 1 & 0 & 0  & 1 & 1\\
		a & 1 & c_1 & b & 1 & 0  & c_2 & c_3\\
		a^2 & 2a & c_1^2 & b^2 & 2b & 1  & c_2^2 & c_3^2 \\
		a^3 & 3a^2 & c_1^3 & b^3 & 3b^2 & 3b  & c_2^3 & c_3^3 \\
		\vdots & & & & & & & \vdots \\
		a^7 & 7a^6 & c_1^7 & b^7 & 7b^6 & 21b^5  & c_2^7 & c_3^7 
	\end{bmatrix}
	\]
	Note, swapping two variables requires more than just swapping two columns. We swap the entire three-column block $V^3(b)$ with the one-column block $V^1(c_1)$. 
	
	\section{Canonical ensemble}
	\label{sec:canonical}
	
	With the modification to the integrand outlined in \autoref{ssec:absdet}, we can now decompose $Z_{\vec{M}}$ into integrals without absolute value, provided we divide the domain of integration appropriately. Explicitly, 
	\begin{align*}
		Z_{\vec{M}}&=\frac{1}{M_1!M_2!\cdots M_J!}\int_{\R^{K_e}}\int_{\R^{K_o}}\Om_{\vec{M}}({\textbf{x}})\,dy_1\cdots dy_{K_e}\,dw_{1}\cdots dw_{K_o}\\&=\frac{1}{M_1!M_2!\cdots M_J!}\sum_{\sm\in S_{K_e}}\sum_{\ta\in S_{K_o}}\int_{\Dt_{K_e}(\sm)}\int_{\Dt_{K_o}(\ta)}\left|\det V^{\vec{L},\vec{M}}({\textbf{x}})\right|W_{\vec{M}}({\textbf{x}})\,dy_1\cdots dy_{K_e}\,dw_{1}\cdots dw_{K_o}
		\\&=\frac{1}{M_1!M_2!\cdots M_J!}\sum_{\sm\in S_{K_e}}\sum_{\ta\in S_{K_o}}\int_{\Dt_{K_e}(\sm)}\int_{\Dt_{K_o}(\ta)}\det H^{\vec{L},\vec{M}}_{\sm,\ta}({\textbf{x}})\,dy_1\cdots dy_{K_e}\,dw_{1}\cdots dw_{K_o},
	\end{align*}
	summing over all totally ordered subsets $\Dt_{K_e}(\sm)\subset \R^{K_e}$ and $\Dt_{K_o}(\ta)\subset \R^{K_o}$.

	\subsection{When all $L_j$ are even}
	\label{ssec:alleven}
	
	For $\pi=(\pi_1|\ldots|\pi_{K_e})\in S_N$ and complete $N$-family of monic polynomials $\vec{p}=\{p_n\}_{n=1}^N$, define 
	\[
	Q_{\pi_j}^k(x)=\prod_{l=1}^{\ld^e_{k}}\exp(-U(x))\cdot D^{l-1}p_{\pi_j(l)}(x),
	\]
	so that we can write
	\[
	\det H^{\vec{L},\vec{M}}_\sm(\vec{y})=\sum_{\pi\in S_{N}}\sgn(\pi)Q_{\pi_1}^{\sm^{-1}(1)}(y_{\sm^{-1}(1)})\cdots  Q_{\pi_{K_e}}^{\sm^{-1}(K_e)}(y_{\sm^{-1}(K_e)}).
	\]
	Note, each $Q^k_{\pi_j}$ is a product of $\ld_k^e$ many matrix entries which we have grouped together. These entries are taken from the $\ld_k^e$ many derivative columns for the one variable $y_k$. Writing the determinant this way allows to invoke Lemma \ref{af} which gives us
	\[
	\det H^{\vec{L},\vec{M}}_\sm(\vec{y})=\int\frac{\af_{\sm^{-1}(1)}\wedge_\shuffle\cdots\wedge_\shuffle \af_{\sm^{-1}(K_e)}}{K_e!}\,d\ep_{\rmn{vol}},
	\]
	where each $\af_k$ is defined by
	\[
	\af_k(x)=\sum_{\mft:\ul{\ld^e_k}\nearrow\ul{N}}A_\mft^k(x)\,\,\ep_{\mft}=\sum_{\mft:\ul{\ld^e_k}\nearrow\ul{N}}\sum_{\ta\in S_{\ld^e_k}}\sgn(\ta)Q_{\mft\circ \ta}^k(x)\,\,\ep_{\mft}=\sum_{\mft:\ul{\ld^e_k}\nearrow\ul{N}}\exp(-\ld^e_kU(x)){\rm{Wr}}(\vec{p}_\mft,x)\,\,\ep_{\mft}.
	\]
	As mentioned in \autoref{ssec:confluent}, these Wronskians are the determinants of the univariate $L_j\times L_j$ minors of the confluent Vandermonde matrix. Recall from \autoref{sec:results},
	\[
	\gm_j=\sum_{\mft:\ul{L_j}\nearrow\ul{N}}\int_\R {\rm{Wr}}(\vec{p}_\mft,x)\,d\mu_j(x)\,\ep_{\mft},
	\]
	with $d\mu_j(x)=\exp(-L_jU(x))\,dx$. Starting from 
	\[
	Z_{\vec{M}}=\frac{1}{M_1!M_2!\cdots M_J!}\sum_{\sm\in S_{K_e}}\int_{\Dt_{K_e}(\sm)}\det H^{\vec{L},\vec{M}}_{\sm}(\vec{y})\,dy_1\cdots dy_{K_e},
	\]
	relabeling the variables $x_j=y_{\sm^{-1}(j)}$ produces
	\[
	Z_{\vec{M}}=\frac{1}{M_1!M_2!\cdots M_J!}\sum_{\sm\in S_{K_e}}\int_{\Dt_{K_e}(\id)}\det H^{\vec{L},\vec{M}}_{\sm}(\vec{x})\,dx_1\cdots dx_{K_e}.
	\]
	By Lemma \ref{af}, the determinant can be rewritten in the exterior shuffle algebra. Thus, 
	\[
	Z_{\vec{M}}=\frac{1}{M_1!M_2!\cdots M_J!}\sum_{\sm\in S_{K_e}}\left\langle \int\frac{\af_{\sm^{-1}(1)}\wedge_\shuffle\cdots\wedge_\shuffle \af_{\sm^{-1}(K_e)}}{K_e!}\,d\ep_{\rmn{vol}} \right\rangle.
	\]
	Applying (Chen's) Lemma \ref{chen} sends integration of shuffle products to ordinary products of integrals. Thus,
	\[
	Z_{\vec{M}}=\frac{1}{M_1!M_2!\cdots M_J!}\sum_{\sm\in S_{K_e}}\int\frac{\gm_1^{\wedge M_1}\wedge\cdots\wedge\gm_J^{\wedge M_J}}{K_e!}\,d\ep_{\rmn{vol}}.
	\]
	Note, there exist $M_j$ many $k$ for which $\ld_k^e=L_j$, so each factor $\gm_j$ appears $M_j$ times. This happens independent of $\sm\in S_{K_e}$, of which there are $|S_{K_e}|=K_e!$ many. Thus,
	\[
	Z_{\vec{M}}=\frac{1}{M_1!M_2!\cdots M_J!}\int\gm_1^{\wedge M_1}\wedge\cdots\wedge\gm_J^{\wedge M_J}\,d\ep_{\rmn{vol}}.
	\]
	We have now proven the following lemma:
	
	\begin{lem} \label{alleven}If all $L_j$ are even, then
		\[
		Z_{\vec{M}}=\int\frac{\gm_1^{\wedge M_1}}{M_1!}\wedge\cdots\wedge\frac{\gm_J^{\wedge M_J}}{M_J!}\,d\ep_{\rmn{vol}}.
		\]
	\end{lem}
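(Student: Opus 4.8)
The plan is to specialize the domain decomposition of \autoref{ssec:absdet} to the all-even case (so that $K_o=0$) and then push the resulting sign-definite iterated integral successively through Lemma \ref{af} and Lemma \ref{chen}. First I would start from the expression for $Z_{\vec M}$ recorded at the beginning of \autoref{sec:canonical}, which, because every $L_j$ is even and hence $K_o=0$ and $K:=K_e=\sum_j M_j$, becomes
\[
Z_{\vec M}=\frac{1}{M_1!\cdots M_J!}\sum_{\sm\in S_{K}}\int_{\Dt_{K}(\sm)}\det H^{\vec L,\vec M}_{\sm}(\vec y)\,dy_1\cdots dy_{K};
\]
the absolute value has already been removed on each ordered cell $\Dt_K(\sm)$ since every exponent $L_j^2$ and $L_jL_k$ in the confluent Vandermonde determinant of \autoref{ssec:confluent} is even. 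Relabeling $x_j=y_{\sm^{-1}(j)}$ transports every summand onto the common simplex $\Dt_K(\id)$, at the cost of summing $\det H^{\vec L,\vec M}_{\sm}$ over $\sm\in S_K$.

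Next I would rewrite the integrand determinantally in block form, grouping the $L_j$ consecutive derivative columns carried by the variable in slot $k$ into a single product $Q^k_{\pi_j}(x)=\prod_l \exp(-U(x))\,D^{l-1}p_{\pi_j(l)}(x)$, so that $\det H^{\vec L,\vec M}_{\sm}=\sum_{\pi\in S_N}\sgn(\pi)\,Q^{\sm^{-1}(1)}_{\pi_1}\cdots Q^{\sm^{-1}(K)}_{\pi_K}$. Since each block size equals some even $L_j$, Lemma \ref{af} applies directly and converts this into $\int \frac{1}{K!}\af_{\sm^{-1}(1)}\wedge_\shuffle\cdots\wedge_\shuffle\af_{\sm^{-1}(K)}\,d\ep_{\rmn{vol}}$, where I would identify the antisymmetrization $A^k_\mft$ with a univariate $L_j\times L_j$ minor of the confluent Vandermonde matrix to get $\af_k(x)=\sum_{\mft:\ul{\ld^e_k}\nearrow\ul N}\exp(-\ld^e_k U(x))\,{\rm{Wr}}(\vec p_\mft,x)\,\ep_\mft$.

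Then I would invoke Lemma \ref{chen}: the iterated integral over $\Dt_K(\id)$ is precisely the Chen functional $\langle\cdot\rangle$, and because $\langle\cdot\rangle$ is a homomorphism out of the shuffle algebra it carries each shuffle-wedge to the ordinary wedge of the single-variable integrals $\int_\R \af_k$, with $\int_\R \af_k=\gm_j$ whenever $\ld^e_k=L_j$. As exactly $M_j$ of the $K$ slots carry charge $L_j$ independently of $\sm$, every one of the $K!$ terms of the $\sm$-sum contributes $\int \frac{1}{K!}\gm_1^{\wedge M_1}\wedge\cdots\wedge\gm_J^{\wedge M_J}\,d\ep_{\rmn{vol}}$; the $K!$ cancels the $1/K!$ coming from Lemma \ref{af}, and absorbing $1/(M_1!\cdots M_J!)$ into the wedge powers yields the stated formula.

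The main obstacle is the passage through Lemma \ref{af}, which rests on the symmetric-group factorization of Lemma \ref{decomp} together with the observation — valid only because every block has even size $L_j$ — that all block permutations have signature $1$, so that $\sgn(\ph)=\sgn(\sm)\sgn(\ta)$ collapses cleanly and no stray signs survive. A secondary bookkeeping point is tracking the map $k\mapsto\ld^e_k$ that assigns a physical charge to each integration slot, so that the multiplicities $M_j$ emerge correctly after Lemma \ref{chen}; once Lemma \ref{af} is in hand, the remaining manipulations are purely formal.
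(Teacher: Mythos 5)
Your proposal is correct and follows essentially the same route as the paper: specialize to $K_o=0$, decompose over the ordered cells $\Dt_{K}(\sm)$, relabel onto $\Dt_K(\id)$, apply Lemma \ref{af} (using that even block sizes make all block permutations have signature $1$), identify the antisymmetrizations with Wronskian minors, and finish with Chen's Lemma, cancelling the $K!$ against the $1/K!$. No gaps.
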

	
	\subsection{Even number of odd charges}
	\label{ssec:evenodd}
	
	In this subsection, we assume all $L_j$ are odd, but the total number of particles $K_o=\sum_{j=1}^JM_j=2K$ is even. This happens, for example, when total charge $N$ is even. Recall from \autoref{sec:results}, 
	\[
	\et_{j,k}=\sum_{\mft:\underline{L_j}\nearrow\underline{N}}\sum_{\mathfrak{s}:\underline{L_k}\nearrow\underline{N}}\int\int_{x<y}{\rm{Wr}}(\vec{p}_{\mft},x){\rm{Wr}}(\vec{p}_{\mathfrak{s}},y)\,d\mu_j(x)d\mu_k(y)\,\ep_{\mft}\wedge\ep_{\mathfrak{s}}.
	\]
	Following the same argument as in \autoref{ssec:alleven}, we can apply our Lemma \ref{af} provided there exists antisymmetrized \emph{even} forms $\af_{j,k}$ such that each $\et_{j,k}$ is obtained from $\af_{j,k}$ by applying our functional $\langle\cdot \rangle$ to the coefficients of $\af_{j,k}$. To this end, define $\af_{j,k}$ by
	\[
	\af_{j,k}=\sum_{\mft:\ul{\ld_j^o+\ld^o_k}\nearrow \ul{N}}A_{\mft}^{j,k}\ep_\mft=\sum_{\mft:\ul{\ld_j^o+\ld^o_k}\nearrow \ul{N}}\,\sum_{\pi\in S_{\ld_j^o+\ld^o_k}}\sgn(\pi)Q_{\mft\circ \pi}^{j,k}\,\ep_\mft,
	\]
	where
	\[
	Q_{\mft\circ\pi}^{j,k}(x,y)=\prod_{l=1}^{\ld_j^o}\exp(-U(x))\cdot D^{l-1}p_{\mft\circ \pi(l)}(x)\prod_{l=1}^{\ld^o_k}\exp(-U(y))\cdot D^{l-1}p_{\mft\circ \pi(\ld_j^o+l)}(y),
	\]
	then $\af_{j,k}$ is an antisymmetrized $(\ld_j^o+\ld^o_k)$-form by construction. As before, we should think of $Q_{\mft}^{j,k}$ as taking one entry from each of the $\ld^o_j$ many derivative columns for one variable $x$ and then one entry from each of the $\ld^o_k$ many derivative columns for the next variable $y$. Pairing an odd number of entries with another odd number of entries produces an even form. It remains to be shown that the antisymmetrizations $A_\mft^{j,k}$ are composed of complementary Wronskian minors. 
	
	For $\mft:\ul{\ld_j^o+\ld^o_k}\nearrow \ul{N}$, let $H_{\ta,\mft}^{\vec{L},\vec{M}}(w_j,w_k)$ denote the $(\ld_j^o+\ld^o_k)\times (\ld_j^o+\ld^o_k)$ minor of $H_{\ta}^{\vec{L},\vec{M}}(\vec{w})$ comprised of rows $\mft(1),\ldots,\mft(\ld_j^o+\ld^o_k)$ taken from the $\ld_j^o$ columns in $w_j$ and the $\ld^o_k$ columns in $w_k$. When viewed as a two variable function, $A_{\mft}^{j,k}$ is the determinant of this minor. Explicitly, 
	\[
	A_{\mft}^{j,k}(w_j,w_k)=\det H_{\ta,\mft}^{\vec{L},\vec{M}}(w_j,w_k).
	\]
	For $\mft_1:\ul{\ld_j^o}\nearrow \ul{\ld_j^o+\ld^o_k}$, let $H_{\ta,\mft_1}^{\vec{L},\vec{M}}(w_j)$ denote the $\ld_j^o\times \ld_j^o$ minor of $H_{\ta,\mft}^{\vec{L},\vec{M}}(w_j,w_k)$ comprised of rows $\mft_1(1),\ldots,\mft_1(\ld_j^o)$ taken from the $\ld_j^o$ columns in $w_j$. Similarly, for $\mft_2:\ul{\ld^o_k}\nearrow \ul{\ld_j^o+\ld^o_k}$, let $H_{\ta,\mft_2}^{\vec{L},\vec{M}}(w_k)$ denote a $\ld^o_k\times \ld^o_k$ minor in $w_k$. By the Laplace expansion of the determinant,
	\begin{align*}
		\det H_{\ta,\mft}^{\vec{L},\vec{M}}(w_j,w_k)\,\ep_\mft&=\sum_{\mft_1:\ul{\ld_j^o}\nearrow \ul{\ld_j^o+\ld^o_k}}\sum_{\mft_2:\ul{\ld^o_k}\nearrow \ul{\ld_j^o+\ld^o_k}}\det H_{\ta,\mft_1}^{\vec{L},\vec{M}}(w_j) \det H_{\ta,\mft_2}^{\vec{L},\vec{M}}(w_k)\cdot\sgn(\mft_1,\mft_2)\,\ep_\mft\\&=\sum_{\mft_1:\ul{\ld_j^o}\nearrow \ul{\ld_j^o+\ld^o_k}}\sum_{\mft_2:\ul{\ld^o_k}\nearrow \ul{\ld_j^o+\ld^o_k}}\det H_{\ta,\mft_1}^{\vec{L},\vec{M}}(w_j) \det H_{\ta,\mft_2}^{\vec{L},\vec{M}}(w_k)\cdot \ep_{\mft_1}\wedge \ep_{\mft_2}\\&=\sum_{\mft_1:\ul{\ld_j^o}\nearrow \ul{\ld_j^o+\ld^o_k}}\sum_{\mft_2:\ul{\ld^o_k}\nearrow \ul{\ld_j^o+\ld^o_k}}\exp(-\ld^o_jU(w_j)){\rm{Wr}}(\vec{p}_{\mft_1},w_j)\exp(-\ld^o_kU(w_k)){\rm{Wr}}(\vec{p}_{\mft_2},w_k)\cdot \ep_{\mft_1}\wedge \ep_{\mft_2}.
	\end{align*}
	Thus, $\et_{j,k}$ is the result of applying $\langle\cdot\rangle_2$ to the two-variable coefficients of $\af_{j,k}$ as desired. 
	
	Proceeding as we did in \autoref{ssec:alleven} (applying Lemma \ref{af} in the third line below and Lemma \ref{chen} in the last line), 
	\begin{align*}
		Z_{\vec{M}}&=\frac{1}{M_1!M_2!\cdots M_J!}\sum_{\ta\in S_{K_o}}\int_{\Dt_{K_o}(\ta)}\det H^{\vec{L},\vec{M}}_{\ta}(\vec{w})\,dw_1\cdots dw_{K_o}\\&=\frac{1}{M_1!M_2!\cdots M_J!}\sum_{\ta\in S_{K_o}}\int_{\Dt_{K_o}(\id)}\det H^{\vec{L},\vec{M}}_{\ta}(\vec{x})\,dx_1\cdots dx_{K_e}\\&=\frac{1}{M_1!M_2!\cdots M_J!}\sum_{\ta\in S_{K_o}}\left\langle \int\frac{\af_{\ta^{-1}(1),\ta^{-1}(2)}\wedge_\shuffle\cdots\wedge_\shuffle \af_{\ta^{-1}(K_o-1),\ta^{-1}(K_o)}}{K!}\,d\ep_{\rmn{vol}} \right\rangle\\&=\frac{1}{M_1!M_2!\cdots M_J!}\sum_{\ta\in S_{K_o}}\int\frac{1}{K!}\bigwedge_{j=1}^J\bigwedge_{k=1}^J\et_{j,k}^{\wedge M_{\ta}^{j,k}}\,d\ep_{\rmn{vol}},
	\end{align*}
	where $M_{\ta}^{j,k}$ is the number of times $\ld^o_{\ta^{-1}(2n-1)}=L_j$ while $\ld^o_{\ta^{-1}(2n)}=L_k$, and $K=K_o/2=\sum_{j,k}M_\ta^{j,k}$ is the total number of factors in the wedge product. Note, these $M_\ta^{j,k}$ exponents depend on $\ta$, so we're not able to drop the sum. However, the $M_j$ many particles with the same charge $L_j$ are indistinguishable. Restricting to shuffle permutations removes the redundancy in permuting variables which have the same $L_j$. We have now proven another lemma:
	
	\begin{lem} \label{evenodd} If all $L_j$ are odd, but total charge $N$ is even, then
		\[
		Z_{\vec{M}}=\sum_{\ta\in {\rm{Sh}}(M_1,\ldots,M_J)}\int\frac{1}{K!}\bigwedge_{j=1}^J\bigwedge_{k=1}^J\et_{j,k}^{\wedge M_{\ta}^{j,k}}\,d\ep_{\rmn{vol}}.
		\]
	\end{lem}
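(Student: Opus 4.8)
The plan is to run the argument of \autoref{ssec:alleven} almost verbatim, the one genuinely new ingredient being that odd-length column blocks cannot be fed to Lemma \ref{af} directly and must first be \emph{paired} into even-length blocks. First I would record that, since every $L_j$ is odd, $N=\sum_j M_jL_j\equiv\sum_j M_j=K_o\pmod 2$, so the hypothesis that $N$ is even is the same as $K_o=2K$ being even; moreover here $r=0$, so $K_e=0$ and the decomposition at the start of \autoref{sec:canonical} collapses to
\[
Z_{\vec M}=\frac{1}{M_1!\cdots M_J!}\sum_{\ta\in S_{K_o}}\int_{\Dt_{K_o}(\ta)}\det H^{\vec L,\vec M}_{\ta}(\vec w)\,dw_1\cdots dw_{K_o}.
\]
The replacement of $\bigl|\det V^{\vec L,\vec M}\bigr|$ by the signed determinant $\det H^{\vec L,\vec M}_{\ta}$ on $\Dt_{K_o}(\ta)$ is exactly \autoref{ssec:absdet}: permuting the $w$-variables moves odd-size column blocks and so contributes $\sgn(\ta)$, while on $\Dt_{K_o}(\ta)$ every Vandermonde difference is positive, and the two signs cancel. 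Relabelling $x_j=w_{\ta^{-1}(j)}$ moves every integral to the standard simplex $\Dt_{K_o}(\id)$.

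Next I would carry out the algebraic core. Group the $K_o=2K$ variables into the $K$ consecutive pairs $(x_1,x_2),(x_3,x_4),\dots$ and, for each ordered pair of species $(j,k)$, form the \emph{even} $(\ld^o_j+\ld^o_k)$-form $\af_{j,k}$ from the two-variable products $Q^{j,k}_{\mft\circ\pi}$ as displayed above. The point to check --- and it is the crux --- is that its coefficient $A^{j,k}_{\mft}(w_j,w_k)=\det H^{\vec L,\vec M}_{\ta,\mft}(w_j,w_k)$, expanded by Laplace along the $\ld^o_j$ columns in $w_j$ versus the $\ld^o_k$ columns in $w_k$, breaks into $\sum\det H^{\vec L,\vec M}_{\ta,\mft_1}(w_j)\det H^{\vec L,\vec M}_{\ta,\mft_2}(w_k)\,\ep_{\mft_1}\wedge\ep_{\mft_2}$, i.e.\ into products of Wronskian minors with the Laplace sign $\sgn(\mft_1,\mft_2)$ absorbed into $\ep_{\mft_1}\wedge\ep_{\mft_2}$; hence applying the Chen functional $\langle\cdot\rangle_2$ to the coefficients of $\af_{j,k}$ returns precisely $\et_{j,k}$. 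Because each $\ld^o_j+\ld^o_k$ is even, Lemma \ref{af} now applies with the $K$ pairs playing the role of its $K$ blocks, rewriting
\[
\det H^{\vec L,\vec M}_{\ta}(\vec x)=\int\frac{\af_{\ta^{-1}(1),\ta^{-1}(2)}\wedge_\shuffle\cdots\wedge_\shuffle\af_{\ta^{-1}(K_o-1),\ta^{-1}(K_o)}}{K!}\,d\ep_{\rmn{vol}}.
\]

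Then I would apply Chen's Lemma \ref{chen} to the length-two coefficient words: since $\langle\cdot\rangle$ is a shuffle-algebra homomorphism it commutes with the ($R$-linear) Berezin integral, sending each $\af_{j,k}$ to $\et_{j,k}$ and $\wedge_\shuffle$ to $\wedge$, so that
\[
Z_{\vec M}=\frac{1}{M_1!\cdots M_J!}\sum_{\ta\in S_{K_o}}\int\frac{1}{K!}\bigwedge_{j=1}^J\bigwedge_{k=1}^J\et_{j,k}^{\wedge M_{\ta}^{j,k}}\,d\ep_{\rmn{vol}},
\]
where $M_{\ta}^{j,k}$ counts the pairs $m$ with $\ld^o_{\ta^{-1}(2m-1)}=L_j$ and $\ld^o_{\ta^{-1}(2m)}=L_k$. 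Finally, since each $\et_{j,k}$ is an even form the wedge product is commutative and the integrand depends on $\ta$ only through the species coloring of the $K$ ordered pairs; the $M_j$ particles of charge $L_j$ being indistinguishable, permuting $\ta$ among the $M_j$ slots of a fixed species leaves the integrand unchanged, so the $S_{K_o}$-sum is $M_1!\cdots M_J!$ times the sum over the coset representatives ${\rm{Sh}}(M_1,\dots,M_J)$, and this cancels the denominator to give the stated identity.

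The step I expect to be the main obstacle is the Laplace-expansion identification of $\et_{j,k}$ with the $\langle\cdot\rangle_2$-image of $\af_{j,k}$: one must verify that pairing an odd-size block with an odd-size block reproduces the ordered double integral $\int\!\!\int_{x<y}{\rm{Wr}}(\vec p_{\mft_1},x){\rm{Wr}}(\vec p_{\mft_2},y)\,d\mu_j\,d\mu_k$ with the correct sign --- that is, that the Laplace sign $\sgn(\mft_1,\mft_2)$ is exactly the reordering sign built into $\ep_{\mft_1}\wedge\ep_{\mft_2}$ --- and, secondarily, that the collapse of $S_{K_o}$ onto ${\rm{Sh}}(M_1,\dots,M_J)$ is uniformly $M_1!\cdots M_J!$-to-one with no exceptional contribution from pairs whose two members lie in the same species.
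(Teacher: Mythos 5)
Your proposal is correct and follows essentially the same route as the paper: decompose over ordered domains via \autoref{ssec:absdet}, pair consecutive odd blocks into even $(\ld_j^o+\ld_k^o)$-forms $\af_{j,k}$ whose coefficients Laplace-expand into complementary Wronskian minors (with $\sgn(\mft_1,\mft_2)$ absorbed into $\ep_{\mft_1}\wedge\ep_{\mft_2}$), apply Lemma \ref{af} and then Chen's Lemma \ref{chen}, and finally collapse the $S_{K_o}$-sum onto ${\rm{Sh}}(M_1,\ldots,M_J)$ to cancel $M_1!\cdots M_J!$. The two points you flag as potential obstacles are exactly the ones the paper verifies explicitly, and your parity observation that $N\equiv K_o\pmod 2$ when all $L_j$ are odd is a correct sharpening of the paper's hypothesis.
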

	Recall the example from \autoref{ssec:confluent} and \autoref{ssec:absdet}. Modify this example by replacing the even charge 2 particle with an odd charge 3 particle, leaving the other particle of odd charge 3 and three particles of odd charge 1. The three columns in variables $a$ and $b$ produce $3\times 3$ Wronskian minors, while the remaining columns in the variables $c_1,c_2,c_3$ produce $1\times 1$ Wronskian minors. 
	
	Under the identity permutation, we pair the three columns in variable $a$ with the three columns in variable $b$ to produce $\et_{3,3}$. Pairing the one column in variable $c_1$ with the one column in variable $c_2$ produces $\et_{1,1}$. 
	
	Under the permutation $\ta$ which previously swapped the three columns in variable $b$ with the one column in variable $c_1$, we pair variable $a$ (charge 3) with $c_1$ (charge 1), and we pair $b$ (charge 3) with $c_2$ (charge 1). After integrating out all the variables, the result is two copies of $\et_{3,1}$.
	
	The permutation which swaps $a$ with $b$ produces the same $\et_{3,3}$ as the identity permutation. To avoid this redundancy, we consider only shuffle permutations. The permutation which moves $c_1$ to the front (ordering the variables as $c_1, a, b, c_2, c_3$) produces $\et_{1,3}$ followed by the distinct $\et_{3,1}$. 
	
	Note, in this example as stated, the last variable $c_3$ is unpaired because we have an odd number of variables. In the next subsection, we demonstrate how to fix this. Amending an extra column to the confluent Vandermonde matrix allows us to pair the last variable with a placeholder. Once integrated, this last ``pair" produces the single Wronskian form $\gm_j$ instead of the double Wronskian form $\et_{j,k}$.

	\subsection{Odd number of odd charges}

	In this subsection, we again assume all $L_j$ are odd, but the total number of particles $K_o=\sum_{j=1}^JM_j=2K-1$ is odd. This happens, for example, when total charge $N$ is odd. Start by modifying $H_{\ta}^{\vec{L},\vec{M}}(\vec{w})$ as follows:
	\[
	H_{\ta}^{\vec{L},\vec{M},1}(\vec{w})=\begin{bmatrix}
		H_{\ta}^{\vec{L},\vec{M}}(\vec{w}) & 0 \\ 0 & 1
	\end{bmatrix}
	\]
	In the previous subsection, we constructed even forms $\af_{j,k}$ and subsequent $\et_{j,k}$ by pairing an $L_j$ charge with an adjacent $L_k$ charge. We also showed taking determinants of appropriate minors produces an antisymmetrized form $\af_{j,k}$ (for which our Lemma \ref{af} will apply). We will do this pairing again for the first $2K-2$ variables, which makes $K-1$ pairs. Explicitly, simply define $Q_\mft^{j,k}, A_\mft^{j,k}, \af_{j,k},$ and $\et_{j,k}$ as before.

	For convenience, let $L=\ld^o_{\ta^{-1}(K_o)}$ be the charge corresponding to the last variable $w_{\ta^{-1}(K_o)}$. Construct the final $\af_{\ta^{-1}(K_o)}$ by taking $(L+1)\times (L+1)$ minors from the last $L+1$ columns of $H_{\ta}^{\vec{L},\vec{M},1}(\vec{w})$. These minors have non-zero determinant only when the last row is chosen. Thus, valid minors are entirely determined by a choice of only $L$ many other rows, and
	\[
	\af_{\ta^{-1}(K_o)}=\sum_{\mft:\ul{L}\nearrow \ul{N}}\sum_{\pi\in S_L}\sgn(\pi)Q_{\mft\circ \pi}^{\ta^{-1}(K_o)}\,\ep_{\mft}\wedge \ep_{N+1}
	\]
	is an antisymmetrized $(L+1)$-form when
	\[
	Q_{\mft\circ \pi}^{\ta^{-1}(K_o)}(x)=\prod_{l=1}^{L}\exp(-U(x))\cdot D^{l-1}p_{\pi(l)}(x).
	\]
	As in \autoref{ssec:alleven}, applying $\langle\cdot\rangle_1$ to $\af_{\ta^{-1}(K_o)}$ produces $\gm_{\ta^{-1}(K_o)}\wedge \ep_{N+1}$. Then the expression for $Z_{\vec{M}}$ changes only slightly to include this new (unpaired, single Wronskian) factor:
	
	\begin{lem} \label{oddodd} If all $L_j$ are odd, and the total charge $N$ is odd, then
		\[
		Z_{\vec{M}}=\sum_{\ta\in {\rm{Sh}}(M_1,\ldots,M_J)}\int\frac{1}{K!}\bigwedge_{j=1}^J\bigwedge_{k=1}^J\et_{j,k}^{\wedge M_{\ta}^{j,k}}\wedge \gm_{\ta^{-1}(K_o)} \,d\ep_{\rmn{vol}},
		\]
		where $K=(K_o+1)/2=1+\sum_{j,k}M_\ta^{j,k}$ is the total number of factors in the wedge product.
	\end{lem}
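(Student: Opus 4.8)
The plan is to reduce to the computation of \autoref{ssec:evenodd} by first enlarging the matrix by one row and one column, so that the column grouping consists entirely of blocks of even length. Since all $L_j$ are odd and $K_o=\sum_j M_j=2K-1$, we have $K_e=0$, and the decomposition at the start of \autoref{sec:canonical} reads
\[
Z_{\vec{M}}=\frac{1}{M_1!M_2!\cdots M_J!}\sum_{\ta\in S_{K_o}}\int_{\Dt_{K_o}(\ta)}\det H^{\vec{L},\vec{M}}_{\ta}(\vec{w})\,dw_1\cdots dw_{K_o}.
\]
First I would replace $H^{\vec{L},\vec{M}}_{\ta}(\vec{w})$ by the $(N+1)\times(N+1)$ matrix $H^{\vec{L},\vec{M},1}_{\ta}(\vec{w})$ of the excerpt; Laplace expansion along the appended column gives $\det H^{\vec{L},\vec{M},1}_{\ta}(\vec{w})=\det H^{\vec{L},\vec{M}}_{\ta}(\vec{w})$, so nothing is lost. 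Then I would group the columns of the enlarged matrix into the $K-1$ adjacent pairs of variables $(w_{\ta^{-1}(2n-1)},w_{\ta^{-1}(2n)})$, each contributing $\ld^o_{\ta^{-1}(2n-1)}+\ld^o_{\ta^{-1}(2n)}$ columns (an even number), together with a last block formed from the $L=\ld^o_{\ta^{-1}(K_o)}$ derivative columns of the final variable plus the appended column, of even length $L+1$.

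Next I would construct the antisymmetrized forms. For the pairs, define $\af_{\ta^{-1}(2n-1),\ta^{-1}(2n)}$, $Q^{j,k}_{\mft\circ\pi}$, $A^{j,k}_{\mft}$, and $\et_{j,k}$ exactly as in \autoref{ssec:evenodd}; the Laplace expansion carried out there shows each coefficient of $\af_{j,k}$ is a sum over complementary Wronskian minors of $\exp(-\ld^o_j U(w_j))\,{\rm{Wr}}(\vec{p}_{\mft_1},w_j)\exp(-\ld^o_k U(w_k))\,{\rm{Wr}}(\vec{p}_{\mft_2},w_k)$, so that applying $\langle\cdot\rangle_2$ to these coefficients yields $\et_{j,k}$. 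For the last block, define $\af_{\ta^{-1}(K_o)}$ and $Q^{\ta^{-1}(K_o)}_{\mft\circ\pi}$ as in the excerpt; since any $(L+1)\times(L+1)$ minor drawn from this block vanishes unless the last row (hence the corner entry $1$) is used, expanding along the appended column identifies $\af_{\ta^{-1}(K_o)}$ with the antisymmetrized $(L+1)$-form coming from that block, whose coefficients are $\exp(-LU(x))\,{\rm{Wr}}(\vec{p}_{\mft},x)$, and applying $\langle\cdot\rangle_1$ to them returns $\gm_{\ta^{-1}(K_o)}\wedge\ep_{N+1}$.

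With every block length even, Lemma \ref{af} applies verbatim to the enlarged determinant and, after the relabeling $x_i=w_{\ta^{-1}(i)}$ that sends $\Dt_{K_o}(\ta)$ to $\Dt_{K_o}(\id)$, expresses it as
\[
\det H^{\vec{L},\vec{M},1}_{\ta}(\vec{x})=\int\frac{\af_{\ta^{-1}(1),\ta^{-1}(2)}\wedge_\shuffle\cdots\wedge_\shuffle\af_{\ta^{-1}(2K-3),\ta^{-1}(2K-2)}\wedge_\shuffle\af_{\ta^{-1}(K_o)}}{K!}\,d\ep_{\rmn{vol}_1}.
\]
Applying the functional $\langle\cdot\rangle$ and Chen's Lemma \ref{chen} as in \autoref{ssec:evenodd} sends the shuffle product to an ordinary product of iterated integrals over consecutive $x<y$ pairs, turning each $\af_{j,k}$ into $\et_{j,k}$ and $\af_{\ta^{-1}(K_o)}$ into $\gm_{\ta^{-1}(K_o)}\wedge\ep_{N+1}$. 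Collecting equal factors (the pair $(\ta^{-1}(2n-1),\ta^{-1}(2n))$ contributes $\et_{j,k}$ for the $M^{j,k}_{\ta}$ many $n$ with $\ld^o_{\ta^{-1}(2n-1)}=L_j$ and $\ld^o_{\ta^{-1}(2n)}=L_k$) gives
\[
Z_{\vec{M}}=\frac{1}{M_1!M_2!\cdots M_J!}\sum_{\ta\in S_{K_o}}\int\frac{1}{K!}\bigwedge_{j=1}^J\bigwedge_{k=1}^J\et_{j,k}^{\wedge M^{j,k}_{\ta}}\wedge\gm_{\ta^{-1}(K_o)}\wedge\ep_{N+1}\,d\ep_{\rmn{vol}_1}.
\]
Finally, the embedding identity $\int\om\wedge\ep_{N+1}\,d\ep_{\rmn{vol}_1}=\int\om\,d\ep_{\rmn{vol}}$ from \autoref{sec:results} removes the auxiliary $\ep_{N+1}$, and the passage from $\frac{1}{M_1!\cdots M_J!}\sum_{\ta\in S_{K_o}}$ to $\sum_{\ta\in{\rm{Sh}}(M_1,\ldots,M_J)}$ is justified, exactly as in \autoref{ssec:evenodd}, by the indistinguishability of particles of equal charge.

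I expect the only genuinely delicate point to be the bookkeeping around the augmentation: one must confirm that placing $\ep_{N+1}$ at the far end of the column grouping is compatible with the shuffle/block decomposition of Lemma \ref{decomp} used inside Lemma \ref{af} — which works precisely because $L+1$ is even, so $\sgn(\pi)=1$ for every block permutation $\pi$, just as in the even-odd case — that the relabeling $x_i=w_{\ta^{-1}(i)}$ correctly ranges over all choices of the unpaired species as $\ta$ runs through $S_{K_o}$, and that the vanishing of the last-block minors whenever the final row is omitted leaves no spurious sign on $\gm_{\ta^{-1}(K_o)}$. Beyond these checks, the argument is a verbatim repetition of \autoref{ssec:evenodd} with one extra, even-length factor carried through.
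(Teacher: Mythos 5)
Your proposal is correct and follows essentially the same route as the paper: augmenting $H_{\ta}^{\vec{L},\vec{M}}(\vec{w})$ by one row and one column so the final variable's $L$ columns join the appended column into an even $(L+1)$-block, pairing the first $2K-2$ variables exactly as in the even-case subsection, and then applying Lemma \ref{af} and Chen's Lemma before reducing the sum to shuffle permutations. The details you flag as delicate (the evenness of $L+1$ making block signs trivial, the vanishing of last-block minors omitting the final row, and the removal of $\ep_{N+1}$ via the embedding identity) are precisely the points the paper relies on, so nothing is missing.
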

	
	Recall the example at the end of the previous subsection with two particles of charge 3 and three particles of charge 1. Under the identity permutation, we pair a charge 3 with a charge 3, pair a charge 1 with a charge 1, and leave a charge 1 unpaired. This produces $\et_{3,3}\wedge \et_{1,1}\wedge \gm_1$. Under the permutation $\ta$ which swapped the second charge 3 with the first charge 1, we got $\et_{3,1}\wedge\et_{3,1}\wedge \gm_1$. 
	
	Consider instead the permutation which puts all of the charge 1 particles before the charge 3 particles. We pair a charge 1 with a charge 1, pair the last charge 1 with a charge 3, and leave a charge 3 unpaired. This produces $\et_{1,1}\wedge \et_{1,3}\wedge \gm_3$.

	\subsection{Arbitrary charge vector}
	\label{ssec:arbitrary}
	
	Finally, we allow any mix of odd and even charges. Recall (from the beginning of \autoref{sec:canonical}),
	\[
	Z_{\vec{M}}=\frac{1}{M_1!M_2!\cdots M_J!}\sum_{\sm\in S_{K_e}}\sum_{\ta\in S_{K_o}}\int_{\Dt_{K_e}(\sm)}\int_{\Dt_{K_o}(\ta)}\det H^{\vec{L},\vec{M}}_{\sm,\ta}({\textbf{x}})\,dy_1\cdots dy_{K_e}\,dw_{1}\cdots dw_{K_o}.
	\]
	Let $N_e=\sum_{j=1}^{r}L_jM_j$ be the total charge of the even charges, and let $N_o=\sum_{j=r+1}^JL_jM_j$ be the total charge of the odd charges. By the Laplace expansion of the determinant, 
	\[
	\det H^{\vec{L},\vec{M}}_{\sm,\ta}(\vec{y},\vec{w})\,\ep_{\rmn{vol}}=\sum_{\mft:\ul{N_e}\nearrow\ul{N}}\sum_{\mathfrak{s}:\ul{N_o}\nearrow\ul{N}}\det H^{\vec{L},\vec{M}}_{\sm,\mft}(\vec{y})\det H^{\vec{L},\vec{M}}_{\ta,\mathfrak{s}}(\vec{w})\cdot\ep_\mft\wedge \ep_{\mathfrak{s}},
	\]
	where $\det H^{\vec{L},\vec{M}}_{\sm,\mft}(\vec{y})$ is an $N_e\times N_e$ minor taken only from columns in the (even charge) variables $\vec{y}$, and $\det H^{\vec{L},\vec{M}}_{\sm,\mathfrak{s}}(\vec{w})$ is an $N_o\times N_o$ minor taken only from columns in the (odd charge) variables $\vec{w}$. Note, $\ep_{\mft}\wedge \ep_\mathfrak{s}=0$ whenever these minors are not complimentary. 
	
	With the variables separated in this way, we can apply Lemma \ref{alleven} to the determinant in the even charges, and we can apply either Lemma \ref{evenodd} or Lemma \ref{oddodd} to the determinant in the odd charges. For the even charges, we have
	
	\[
	\frac{1}{M_1!\cdots M_r!}\sum_{\sm\in S_{K_e}}\int_{\Dt_{K_e}(\sm)}\det H^{\vec{L},\vec{M}}_{\sm,\mft}(\vec{y})\,dy_1\cdots dy_{K_e}\cdot\ep_\mft=\frac{{\gm^{\mft}_{1}}^{\wedge M_1}}{M_1!}\wedge \cdots \wedge \frac{{\gm^{\mft}_{r}}^{\wedge M_r}}{M_r!},
	\]
	where $\gm_j^\mft$ is subtly different from $\gm_j$ because $H^{\vec{L},\vec{M}}_{\sm,\mft}(\vec{y})$ is already an $N_e\times N_e$ minor chosen by $\mft$. Explicitly,
	\[
	\gm_j^\mft=\sum_{\mft_j:\ul{L_j}\nearrow \ul{N_e}}\int_{\R}{\rm{Wr}}(\vec{p}_{\mft\circ \mft_j},x)\,d\mu_j(x)\,\ep_{\mft_j}.
	\]
	Taking the sum over all $\mft:\ul{N_e}\nearrow \ul{N}$ gives us back the original $\gm_j$ forms
	\[
	\sum_{\mft:\ul{N_e}\nearrow \ul{N}}\frac{{\gm^{\mft}_{1}}^{\wedge M_1}}{M_1!}\wedge \cdots \wedge \frac{{\gm^{\mft}_{r}}^{\wedge M_r}}{M_r!}=\frac{\gm_{1}^{\wedge M_1}}{M_1!}\wedge \cdots \wedge \frac{\gm_{r}^{\wedge M_r}}{M_r!}.
	\]
	It is straightforward to check an analogous result holds for the determinant in the odd charges. The following lemma supersedes Lemmas \ref{alleven}, \ref{evenodd}, and \ref{oddodd}:
	
	\begin{lem} \label{supercede} Suppose $L_j$ is even for $1\leq j\leq r$, then when $N$ is even,
		\[
		Z_{\vec{M}}=\int\frac{\gm_{1}^{\wedge M_1}}{M_1!}\wedge \cdots \wedge \frac{\gm_{r}^{\wedge M_r}}{M_r!}\wedge \sum_{\ta\in {\rm{Sh}}(M_{r+1},\ldots,M_J)}\frac{1}{K!}\bigwedge_{j=1}^J\bigwedge_{k=1}^J\et_{j,k}^{\wedge M_{\ta}^{j,k}}\,d\ep_{\rmn{vol}},
		\]
		and when $N$ is odd,
		\[
		Z_{\vec{M}}=\int\frac{\gm_{1}^{\wedge M_1}}{M_1!}\wedge \cdots \wedge \frac{\gm_{r}^{\wedge M_r}}{M_r!}\wedge \sum_{\ta\in {\rm{Sh}}(M_{r+1},\ldots,M_J)}\frac{1}{K!}\bigwedge_{j=1}^J\bigwedge_{k=1}^J\et_{j,k}^{\wedge M_{\ta}^{j,k}}\wedge \gm_{\ta^{-1}(K_o)}\,d\ep_{\rmn{vol}}.
		\]
	\end{lem}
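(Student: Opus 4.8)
The plan is to reduce the statement to the three special cases already in hand---Lemma \ref{alleven}, Lemma \ref{evenodd}, and Lemma \ref{oddodd}---by splitting the columns of the weighted confluent Vandermonde matrix into the block belonging to the even-charge variables and the block belonging to the odd-charge variables, and treating each block with the lemma that applies to it. First I would start from the decomposition recorded at the opening of \autoref{sec:canonical},
\[
Z_{\vec{M}}=\frac{1}{M_1!\cdots M_J!}\sum_{\sm\in S_{K_e}}\sum_{\ta\in S_{K_o}}\int_{\Dt_{K_e}(\sm)}\int_{\Dt_{K_o}(\ta)}\det H^{\vec{L},\vec{M}}_{\sm,\ta}(\textbf{x})\,d\vec{y}\,d\vec{w},
\]
where on each totally ordered piece $\Dt_{K_e}(\sm)\times\Dt_{K_o}(\ta)$ the absolute value has already been removed as in \autoref{ssec:absdet}. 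Applying the Laplace expansion of the determinant along the first $N_e$ columns (those in the even variables $\vec{y}$) against the last $N_o$ columns (those in the odd variables $\vec{w}$), and using that $\ep_{\mft}\wedge\ep_{\mathfrak{s}}=0$ unless the increasing functions $\mft:\ul{N_e}\nearrow\ul{N}$ and $\mathfrak{s}:\ul{N_o}\nearrow\ul{N}$ have complementary ranges, turns $\det H^{\vec{L},\vec{M}}_{\sm,\ta}(\vec{y},\vec{w})\,\ep_{\rmn{vol}}$ into $\sum_{\mft}\sum_{\mathfrak{s}}\det H^{\vec{L},\vec{M}}_{\sm,\mft}(\vec{y})\,\det H^{\vec{L},\vec{M}}_{\ta,\mathfrak{s}}(\vec{w})\,\ep_{\mft}\wedge\ep_{\mathfrak{s}}$. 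Since $\vec{y}$ and $\vec{w}$ are disjoint tuples and the domain is a product, Fubini's theorem lets the $\vec{y}$-integral and the $\vec{w}$-integral be carried out independently inside each summand.

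The second step handles the even block. For a fixed $\mft:\ul{N_e}\nearrow\ul{N}$, the quantity $\det H^{\vec{L},\vec{M}}_{\sm,\mft}(\vec{y})$ is an $N_e\times N_e$ minor assembled only from columns in the $K_e$ even-charge variables, all of whose charges $L_1,\dots,L_r$ are even; hence the argument of \autoref{ssec:alleven} applies verbatim with $\ul{N}$ replaced by $\ul{N_e}$: group the $L_j$ derivative columns of each variable into products, invoke Lemma \ref{af} to rewrite the minor as a Berezin integral of a shuffle wedge of antisymmetrized even forms, relabel the ordered domain to $\Dt_{K_e}(\id)$, and apply Chen's Lemma \ref{chen} to convert shuffle products of integrals into ordinary products. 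This produces
\[
\frac{1}{M_1!\cdots M_r!}\sum_{\sm\in S_{K_e}}\int_{\Dt_{K_e}(\sm)}\det H^{\vec{L},\vec{M}}_{\sm,\mft}(\vec{y})\,d\vec{y}\;\ep_{\mft}=\frac{(\gm^{\mft}_1)^{\wedge M_1}}{M_1!}\wedge\cdots\wedge\frac{(\gm^{\mft}_r)^{\wedge M_r}}{M_r!},
\]
with $\gm^{\mft}_j=\sum_{\mft_j:\ul{L_j}\nearrow\ul{N_e}}\int_{\R}{\rm{Wr}}(\vec{p}_{\mft\circ\mft_j},x)\,d\mu_j(x)\,\ep_{\mft_j}$. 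Summing over all $\mft:\ul{N_e}\nearrow\ul{N}$ and using that every nonzero wedge monomial in the $\ep_{\mft_j}$'s determines its underlying set of $N_e$ indices, hence a unique $\mft$, reassembles the honest forms: $\sum_{\mft}\frac{(\gm^{\mft}_1)^{\wedge M_1}}{M_1!}\wedge\cdots\wedge\frac{(\gm^{\mft}_r)^{\wedge M_r}}{M_r!}=\frac{\gm_1^{\wedge M_1}}{M_1!}\wedge\cdots\wedge\frac{\gm_r^{\wedge M_r}}{M_r!}$.

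The third step handles the odd block in exactly the same way. All of $L_{r+1},\dots,L_J$ are odd, and since $N_e$ is even we have the parity identity $K_o\equiv N_o\equiv N\pmod 2$; thus for a fixed $\mathfrak{s}:\ul{N_o}\nearrow\ul{N}$ the minor $\det H^{\vec{L},\vec{M}}_{\ta,\mathfrak{s}}(\vec{w})$ is precisely the integrand treated in \autoref{ssec:evenodd} when $K_o$ is even, and the one treated in the odd-count case (through the augmented matrix $H^{\vec{L},\vec{M},1}$) when $K_o$ is odd. Running those arguments with $\ul{N}$ replaced by $\ul{N_o}$ and summing over $\mathfrak{s}$ gives, respectively, $\sum_{\ta\in{\rm{Sh}}(M_{r+1},\dots,M_J)}\frac{1}{K!}\bigwedge_{j,k}\et_{j,k}^{\wedge M_{\ta}^{j,k}}$, or the same expression wedged with $\gm_{\ta^{-1}(K_o)}$. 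Substituting the even-block and odd-block outputs back into the split double sum, noting that the surviving factors $\ep_{\mft}\wedge\ep_{\mathfrak{s}}$ recombine upon summation into the full volume form, and collecting everything under one $\int(\cdot)\,d\ep_{\rmn{vol}}$ then yields the two displayed formulas, according to the parity of $N$.

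The step I expect to be the main obstacle is the combinatorial bookkeeping rather than anything conceptual: one must verify that the three sources of numerical factors---the $1/(M_1!\cdots M_J!)$ from indistinguishability, the $1/K!$ from the wedge power, and the passage from sums over the full symmetric groups $S_{K_e},S_{K_o}$ to sums over the shuffle classes ${\rm{Sh}}(\cdot)$---cancel exactly as claimed, and that the reassembly identities over $\mft$ and over $\mathfrak{s}$ genuinely hold, i.e. that no cross terms between distinct row-selections survive, precisely because $\ep_{\mft}\wedge\ep_{\mathfrak{s}}$ vanishes off the complementary locus. The one-line parity check $K_o\equiv N\pmod 2$ is what aligns the ``$K_o$ even / $K_o$ odd'' dichotomy of Lemmas \ref{evenodd}--\ref{oddodd} with the ``$N$ even / $N$ odd'' dichotomy in the statement, and it is worth recording explicitly.
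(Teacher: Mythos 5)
Your proposal is correct and follows essentially the same route as the paper: Laplace-expand $\det H^{\vec{L},\vec{M}}_{\sm,\ta}$ over the complementary $N_e\times N_e$ and $N_o\times N_o$ blocks, apply Lemma \ref{alleven} to the even block and Lemma \ref{evenodd} or \ref{oddodd} to the odd block, and reassemble the restricted forms $\gm_j^{\mft}$ by summing over $\mft$. Your explicit parity check $K_o\equiv N_o\equiv N\pmod 2$ is a worthwhile addition that the paper leaves implicit.
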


	\section{Isocharge grand canonical ensemble}
	\label{sec:isocharge}

	Recall from \autoref{sec:setup}, we want to compute
	\[
	Z_N=\sum_{\vec{L}\cdot \vec{M}=N}z_1^{M_1}z_2^{M_2}\cdots z_{J}^{M_J}Z_{\vec{M}}.
	\]

	\subsection{When all $L_j$ are even}
	\label{ssec:isoalleven}
	
	Starting from Lemma \ref{alleven}, we have
	\[
	Z_N=\sum_{\vec{L}\cdot\vec{M}=N}z_1^{M_1}\cdots z_J^{M_J}\int\frac{\gm_1^{\wedge M_1}}{M_1!}\wedge\cdots\wedge\frac{\gm_J^{\wedge M_J}}{M_J!}\,d\ep_{\rmn{vol}}.
	\]
	Recall from \autoref{ssec:berezin}, the Berezin integral is a projection onto the highest exterior power $\bigwedge^N(\R^N)$. If each $\gm_j$ is an $L_j$-form, then the wedge product above is an $\vec{L}\cdot \vec{M}$-form. If we extend the sum over all $\vec{M}$, the Berezin integral will eliminate any summands for which $\vec{L}\cdot\vec{M}\neq N$. Thus,
	\begin{align*}
		Z_N&=\int\sum_{M_1=0}^{\I}\cdots\sum_{M_J=0}^\I\frac{(z_1\gm_1)^{\wedge M_1}}{M_1!}\wedge\cdots\wedge\frac{(z_J\gm_J)^{\wedge M_J}}{M_J!}\,d\ep_{\rmn{vol}}
		\\&=\int\bigwedge_{j=1}^J\sum_{M=1}^\I\frac{(z_j\gm_j)^{\wedge M}}{M!}\,d\ep_{\rmn{vol}}\\&=\int\exp(z_1\gm_1)\wedge\cdots\wedge\exp(z_J\gm_J)\,d\ep_{\rmn{vol}}\\&={\rm{BE}}_{\rmn{vol}}(z_1\gm_1+\cdots+z_J\gm_J).
	\end{align*}
	In the last line, we replace the product of these exponentials with the exponential of the sum, which we can do because our forms are even and therefore commute. This completes the proof of Theorem \ref{thm:alleven}.
	
	\subsection{When all $L_j$ are odd}
	\label{ssec:allodd}
	
	Let's start by assuming there are no even species, and the total charge $N$ is even. Recall Lemma \ref{evenodd} which gives us
	\[
	Z_N=\sum_{\vec{L}\cdot \vec{M}=N}z_1^{M_1}\cdots z_{J}^{M_J}\sum_{\ta\in {\rm{Sh}}(M_1,\ldots,M_J)}\int\frac{1}{K!}\bigwedge_{j=1}^J\bigwedge_{k=1}^J\et_{j,k}^{\wedge M_{\ta}^{j,k}}\,d\ep_{\rmn{vol}}.
	\]
	For $\vec{M}$ fixed and $\ta\in {\rm{Sh}}(M_1,\ldots,M_J)$, the number of other permutations which produce the same pairs $(j,k)$ is
	\[
	K!\prod_{j=1}^J\prod_{k=1}^J\frac{1}{M_{\ta}^{j,k}!}.
	\]
	This is just a multinomial coefficient, recalling $K$ is the sum of the $M_\ta^{j,k}$. Of these permutations, there exists a unique representative which orders the pairs $(j,k)$ lexicographically. Let $\mathcal{L}_{\vec{M}}$ be the set of these representatives, then
	\[
	Z_N=\int\sum_{M_1=0}^{\I}\cdots\sum_{M_J=0}^\I \sum_{\ta\in\mathcal{L}_{\vec{M}}}\bigwedge_{j=1}^J\bigwedge_{k=1}^J\frac{(z_jz_k\et_{j,k})^{\wedge M_{\ta}^{j,k}}}{M_{\ta}^{j,k}!}\,d\ep_{\rmn{vol}}.
	\]
	Next, we condition on population vectors $\vec{M}$ which produce the same $K$, the number of $(j,k)$ pairs, and so
	\[
	Z_N=\int\sum_{K=0}^{\I}\sum_{M_1+\cdots+M_J=2K}\sum_{\ta\in\mathcal{L}_{\vec{M}}}\bigwedge_{j=1}^J\bigwedge_{k=1}^J\frac{(z_jz_k\et_{j,k})^{\wedge M_{\ta}^{j,k}}}{M_{\ta}^{j,k}!}\,d\ep_{\rmn{vol}}.
	\]
	Collecting these together, we get the $K^{\rm{th}}$ power of the sum over all possible pairs $(j,k)$. Thus,
	\begin{align*}
		Z_N&=\int\sum_{K=0}^\I\frac{1}{K!}\left(\sum_{j=1}^J\sum_{k=1}^Jz_jz_k\et_{j,k}\right)^{\wedge K}\,d\ep_{\rmn{vol}}\\&={\rm{BE}}_{\rmn{vol}}\left(\sum_{j=1}^J\sum_{k=1}^Jz_jz_k\et_{j,k}\right).
	\end{align*}
	In the case with total charge $N$ odd, we can go through the same steps starting from Lemma \ref{oddodd}. This produces
	\[
	Z_N={\rm{BE}}_{\rmn{vol}_1}\left(\sum_{j=1}^J\sum_{k=1}^Jz_jz_k\et_{j,k}+\sum_{j=1}^Jz_j\gm_j\wedge \ep_{N+1}\right).
	\]
	Note, when $N$ is odd, every $Z_{\vec{M}}$ has exactly one $\gm_j$ for each $\ta$ (see Lemma \ref{oddodd}). The $\ep_{N+1}$ shown attached to each of the $\gm_j$ above ensures this is the case when we exponentiate and take the Berezin integral. First, $\ep_{N+1}\wedge \ep_{N+1}=0$, so $\gm_j\wedge \ep_{N+1}\wedge \gm_k\wedge\ep_{N+1}=0$. Thus, terms in the expansion of the exponential with more than one $\gm_j$ are annihilated. Because the Berezin integral with respect to $d\ep_{\rmn{vol}_1}$ projects onto the highest exterior power $\bigwedge^{N+1}(\R^{N+1})$, terms in the expansion of the exponential with no $\gm_j$ are missing the basis vector $\ep_{N+1}$ and are annihilated by the Berezin integral (with respect to $d\ep_{\rmn{vol}_1}$). Thus, we only get summands (in the expansion of the exponential) with exactly one $\gm_j$, as in Lemma \ref{oddodd}.

	\subsection{Full generalization}

	Recall Lemma \ref{supercede}, in which the $\gm_1,\ldots,\gm_r$ corresponding to the even charges are already factored out. Summing over all possible $M_1,\ldots,M_r$, we can factor out an $\exp(z_1\gm_1+\cdots z_r\gm_r)$ as in \autoref{ssec:isoalleven}. From what remains, we obtain the exponential of the sum of the $\et_{j,k}$, possibly with an extra set of $\gm_j\wedge\ep_{N+1}$ forms. For $N$ even, 
	\begin{align*}
		Z_N&=\int\left[\bigwedge_{j=1}^r\sum_{M=1}^\I\frac{(z_j\gm_j)^{\wedge M}}{M!}\right]\wedge\sum_{K=0}^\I\frac{1}{K!}\left(\sum_{j=1}^J\sum_{k=1}^Jz_jz_k\et_{j,k}\right)^{\wedge K} \,d\ep_{\rmn{vol}}\\&=\int\exp \left(\sum_{j=1}^rz_j\gm_j\right)\wedge\exp\left(\sum_{j=r+1}^{J}\sum_{k=r+1}^{J}z_jz_k\et_{j,k}\right)\,\ep_{\rmn{vol}}\\&={\rm{BE}}_{\rmn{vol}}\left(\sum_{j=1}^rz_j\gm_j+\sum_{j=r+1}^{J}\sum_{k=r+1}^{J}z_jz_k\et_{j,k}\right).
	\end{align*}
	This concludes the proof of Theorem \ref{thm:reven} and, with a slight modification, Theorem \ref{thm:revenodd}.

	\section{Circular ensembles}
	\label{sec:circular}
	
	Returning to the original setup in \autoref{sec:setup}, consider instead charged particles on the unit circle. Substitute all instances of $\R$ with $[0,2\pi)$ so that 
	\[
	{\textbf{x}}=(\textbf{x}^1,\textbf{x}^2,\ldots,\textbf{x}^J)\in [0,2\pi)^{M_1}\times [0,2\pi)^{M_2}\times \cdots \times [0,2\pi)^{M_J}
	\]
	gives the locations of the particles around the unit circle with each $x_n^j\in [0,2\pi)$ corresponding to an angle. Assuming logarithmic interaction between the particles, the energy contributed by interaction between two particles of charge $L_j$ and $L_k$ at angles $x_n^j$ and $x_m^k$ respectively is given by $-L_jL_k\log \left|e^{ix_n^k}-e^{ix_m^j}\right|$. Thus, at inverse temperature $\bt$, the total potential energy of the system is given by
	\[
	E_{\vec{M}}({\textbf{x}})=-\bt\sum_{j=1}^JL_j^2\sum_{m<n}\log \left|e^{ix_n^j}-e^{ix_m^j}\right|-\bt\sum_{j<k}L_jL_k\sum_{m=1}^{M_j}\sum_{n=1}^{M_k}\log \left|e^{ix_n^k}-e^{ix_m^j}\right|,
	\]
	with Boltzmann factor
	\[
	\Om_{\vec{M}}({\textbf{x}})=\exp(-E_{\vec{M}}({\textbf{x}}))=\prod_{j=1}^J\prod_{m<n}\left|e^{ix_n^j}-e^{ix_m^j}\right|^{\bt L_j^2}\times\prod_{j<k}\prod_{m=1}^{M_j}\prod_{n=1}^{M_k}\left|e^{ix_n^k}-e^{ix_m^j}\right|^{\bt L_jL_k}.
	\]
	We will write ${\textbf{y}}=\exp(i{\textbf{x}})$ to mean the vector with entries of the form $e^{ix_{m}^j}$. Then the probability of finding the system in a state corresponding to a location vector ${\textbf{x}}$ is given by the joint probability density function
	\[
	\rh_{\vec{M}}({\textbf{x}})=\frac{\Om_{\vec{M}}({\textbf{x}})}{Z_{\vec{M}}M_1!M_2!\cdots M_J!}=\frac{\left|\det V^{\vec{L},\vec{M}}({\textbf{y}})\right|}{Z_{\vec{M}}M_1!M_2!\cdots M_J!},
	\]
	with partition function 
	\[
	Z_{\vec{M}}=\frac{1}{M_1!M_2!\cdots M_J!}\int_{[0,2\pi)^{M_1}}\cdots\int_{[0,2\pi)^{M_J}}\left|\det V^{\vec{L},\vec{M}}({\textbf{y}})\right|\,d\nu^{M_1}(\textbf{x}^1)\,d\nu^{M_2}(\textbf{x}^2)\cdots d\nu^{M_J}(\textbf{x}^J),
	\]
	where $\nu^{M_j}$ is Lebesque measure on $[0,2\pi)^{M_j}$. Using the same modifications as before in the linear case, we can assume $\bt=1$ for computational purposes. Note, the same confluent Vandermonde determinant gives us the same product of differences (with exponents) as before, even with the new complex variables $\textbf{y}$ in place of the real variables $\textbf{x}$.
	
	\subsection{Complex modulus}

	We will be able to apply our same algebraic lemmas to our determinantal integrand once we resolve the absolute value. As observed in \cite{Mehta2004}, the absolute difference can be decomposed as
	\begin{align*}
		\left|e^{ix_n^k}-e^{ix_m^j}\right|&=-ie^{-i\left(x_n^k+x_m^j\right)/2}\left(e^{ix_n^k}-e^{ix_m^j}\right)\sgn(x_n^k-x_m^j)\\&=-ie^{-i\left(x_n^k+x_m^j\right)/2}\left(e^{ix_n^k}-e^{ix_m^j}\right)\frac{\left(x_n^k-x_m^j\right)}{\left|x_n^k-x_m^j\right|}.
	\end{align*}
	Next, we define 
	\[
	d\mu_j(x)=\left(-ie^{-ix}\right)^{L_jT/2}\,dx,
	\]
	where
	\[
	T=-L_j+\sum_{k=1}^JL_kM_k,
	\]
	so that we can bring the (complex valued) weight functions $(-ie^{-ix_m^j})^{T/2}$ inside the matrix $V^{\vec{L},\vec{M}}({\textbf{y}})$ the same way we did in \autoref{ssec:confluent}. Explicitly, construct $H^{\vec{L},\vec{H}}({\textbf{y}})$ by multiplying each column with the variable $x_m^j$ by  $(-ie^{-ix_m^j})^{T/2}$. Note, there will be $L_j$ many columns for each $x_m^j$. Finally, we can write the $(x_n^k-x_m^j)^{L_jL_k}/|x_n^k-x_m^j|^{L_jL_k}$ factors as separate confluent Vandermonde determinants in ${\textbf{x}}$. 
	
	We can use the same procedure of separating the odd species from the even species and decomposing the integral over ordered subsets as in \autoref{ssec:absdet}. Thus,
	\begin{align*}
		Z_{\vec{M}}&=\frac{1}{M_1!M_2!\cdots M_J!}\int_{[0,2\pi)^{K_e}}\int_{[0,2\pi)^{K_o}}\left|\det V^{\vec{L},\vec{M}}({\textbf{y}})\right|\,dy_1\cdots dy_{K_e}\,dw_{1}\cdots dw_{K_o}\\&=\frac{1}{M_1!M_2!\cdots M_J!}\sum_{\sm\in S_{K_e}}\sum_{\ta\in S_{K_o}}\int_{\Dt_{K_e}(\sm)}\int_{\Dt_{K_o}(\ta)}\det H^{\vec{L},\vec{M}}({\textbf{y}})\frac{\det V^{\vec{L},\vec{M}}({\textbf{x}})}{\left|\det V^{\vec{L},\vec{M}}({\textbf{x}})\right|}dy_1\cdots dy_{K_e}\,dw_{1}\cdots dw_{K_o}.
	\end{align*}
	Observe
	\[
	\det H^{\vec{L},\vec{M}}(\textbf{y})\frac{\det V^{\vec{L},\vec{M}}({\textbf{x}})}{\left|\det V^{\vec{L},\vec{M}}({\textbf{x}})\right|}=\sgn(\ta)\det H_{\sm,\ta}^{\vec{L},\vec{M}}({\textbf{y}})\frac{\sgn(\ta)\det V_{\sm,\ta}^{\vec{L},\vec{M}}({\textbf{x}})}{\left|\sgn(\ta)\det V_{\sm,\ta}^{\vec{L},\vec{M}}({\textbf{x}})\right|}=\sgn(\ta)^2\det H_{\sm,\ta}^{\vec{L},\vec{M}}({\textbf{y}})
	\]
	for ${\textbf{y}}\in \Dt_{K_e}(\sm)\times \Dt_{K_o}(\ta)$. Thus,
	\[
	Z_{\vec{M}}=\frac{1}{M_1!M_2!\cdots M_J!}\sum_{\sm\in S_{K_e}}\sum_{\ta\in S_{K_o}}\int_{\Dt_{K_e}(\sm)}\int_{\Dt_{K_o}(\ta)}\det H_{\sm,\ta}^{\vec{L},\vec{M}}({\textbf{y}})\,dy_1\cdots dy_{K_e}\,dw_{1}\cdots dw_{K_o}.
	\]
	
	\subsection{Statement of results}

	Proceeding through the same methods presented in \autoref{sec:canonical} and \autoref{sec:isocharge}, we get the same theorems from \autoref{sec:results} with slight modification to our forms $\gm_j$ and $\et_{j,k}$. Given a complete $N$-family of monic polynomials, define
	\[
	\gm_{j}=\sum_{\mft:\ul{L_j}\nearrow\ul{N}}\left[\int_{0}^{2\pi}{\rm{Wr}}(\vec{p}_\mft,e^{ix})\,d\mu_j(x)\right]\,\ep_{\mft},
	\]
	and define
	\[
	\et_{j,k}=\sum_{\mft:\underline{L_j}\nearrow\underline{N}}\sum_{\mathfrak{s}:\underline{L_k}\nearrow\underline{N}}\left[\int\int_{0<x<y<2\pi}{\rm{Wr}}(\vec{p}_{\mft},e^{ix}){\rm{Wr}}(\vec{p}_{\mathfrak{s}},e^{iy})\,d\mu_j(x)d\mu_k(y)\right]\ep_{\mft}\wedge\ep_{\mathfrak{s}},
	\]
	where $d\mu_j(x)=(-ie^{-ix})^{L_jT/2}\,dx$.

	\appendix
	
	\section{Appendix}
	\label{sec:appendix}
	Let $\Ld=(\ld_1,\ldots,\ld_K)$ be a partition of $N$. Recall from \autoref{ssec:confluent} the definitions of the \emph{Young subgroup} ${\rm{H}}(\Ld)\subseteq S_N$, the subset of \emph{block permutations} ${\rm{Bl}}(\Ld)\subset S_N$, the subset of \emph{shuffle permutations} ${\rm{Sh}}(\Ld)$, and the subset of \emph{ordered shuffle permutations} ${\rm{Sh}}^\circ (\Ld)$. 
	
	Here we provide a proof of Lemma \ref{decomp}.
	\newtheorem*{lem:decomp}{Lemma \ref{decomp}}
	\begin{lem:decomp}
		Let $\Ld=(\ld_1,\ldots,\ld_K)$ be a partition of $N$. Given any $\ph \in S_N$, there exists unique permutations $\ta\in {\rm{H}}(\Ld)$, $\pi\in {\rm{Bl}}(\Ld)$, and $\sm\in {\rm{Sh}}^{\circ}(\Ld^\pi)$ so that $\ph=\sm\circ \pi\circ \ta$. 
	\end{lem:decomp}
	
	Conducive to this proof, it will be convenient to give alternate definitions for the different subsets of permutations. First, define $s_j=\sum_{k=1}^{j-1}\ld_k$ to be the partial sums of the $\ld_k$, up to but not including $\ld_j$ so that $s_1=0$, $s_2=\ld_1$, $s_3=\ld_1+\ld_2$, and so on. We alternatively define the Young subgroup ${\rm{H}}(\Ld)$ to be the $\sm \in S_N$ such that 
	\[
	s_k+1\leq \sm(s_k+j)\leq s_{k+1}
	\]
	for all $1\leq k\leq K$ and $1\leq j\leq \ld_k$. We define the block permutations ${\rm{Bl}}(\Ld)$ to be the $\sm\in S_N$ such that
	\[
	\sm(s_k+j)+1=\sm(s_k+j+1)
	\]
	for all $1\leq k\leq K$ and $1\leq j\leq \ld_k$. Recall the definition of $\te_\sm\in S_K$ from \autoref{ssec:confluent}. This is the unique permutation such that
	\[
	\sm(s_{\te_\sm^{-1}(1)}+1)<\sm(s_{\te_\sm^{-1}(2)}+1)<\cdots<\sm(s_{\te_\sm^{-1}(K)}+1).
	\]
	The shuffle permutations ${\rm{Sh}}(\Ld)$ are the $\sm\in S_N$ such that 
	\[
	\sm(s_k+i)<\sm(s_k+j)
	\]
	for all $1\leq k\leq K$ and $1\leq i<j\leq \ld_k$. The ordered shuffle permutations ${\rm{Sh}}^\circ(\Ld)$ additionally satisfy
	\[
	\sm(s_j+1)<\sm(s_k+1)
	\]
	for all $1\leq j<k\leq K$. 
	
	Demonstrably, $|{\rm{H}}(\Ld)|=\prod_{k=1}^K|S_{\ld_k}|=\ld_1!\cdots\ld_K!$, and $|{\rm{Bl}}(\Ld)|=|S_K|=K!$. Heuristically, a shuffle permutation $\sm\in {\rm{Sh}}(\Ld)$ is constructed by choosing from $N$ positions the location of the first $\ld_1$ elements, then the next $\ld_2$ elements, and so on until all elements are exhausted. It is straightforward to see $|{\rm{Sh}}(\Ld)|$ is the multinomial coefficient
	\[
	\left| {\rm{Sh}}(\Ld) \right| =\binom{N}{\ld_1,\ldots,\ld_K}=\frac{N!}{\ld_1!\cdots \ld_K!}.
	\]
	We prove Lemma \ref{decomp} in two steps. First, we show the decomposition of an arbitrary permutation into a product of a shuffle permutation after a permutation belonging to the Young subgroup. Second, we show this shuffle permutation can be further decomposed into a 
	product of an ordered shuffle permutation after a block permutation. 
	
	\begin{lem}\label{shufflelem} Given any $\ph \in S_N$, there exists unique $\ta\in {\rm{H}}(\Ld)$ and $\rh\in {\rm{Sh}}(\Ld)$ so that $\ph=\rh\circ \ta$. 
	\end{lem}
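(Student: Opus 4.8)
I would prove this as the statement that the shuffle permutations form a complete, non‑redundant set of representatives for the left cosets $\ph\,{\rm{H}}(\Ld)$, the representative being obtained by sorting the images of $\ph$ within each block of $\Ld$. Throughout I would work with the alternate, block‑wise descriptions recalled just above: writing $B_k=\{s_k+1,\ldots,s_{k+1}\}$ for the $k$-th block (with $s_j=\sum_{i<j}\ld_i$), a permutation $\ta$ lies in ${\rm{H}}(\Ld)$ exactly when it fixes each $B_k$ setwise, and $\rh$ lies in ${\rm{Sh}}(\Ld)$ exactly when it is strictly increasing on each $B_k$. The whole argument is then local to each block.

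For existence, fix $\ph\in S_N$. For each $k$ the integers $\ph(s_k+1),\ldots,\ph(s_k+\ld_k)$ are distinct, so each has a well‑defined rank $r_k(i)\in\ul{\ld_k}$ within that set. I would define $\ta$ by $\ta(s_k+i)=s_k+r_k(i)$ on every block; since each $r_k$ is a bijection of $\ul{\ld_k}$, this gives $\ta\in{\rm{H}}(\Ld)$. Setting $\rh=\ph\circ\ta^{-1}$, one checks $\rh(s_k+j)=\ph(s_k+r_k^{-1}(j))$ is the $j$-th smallest of the block's $\ph$-values, hence strictly increasing in $j$; so $\rh\in{\rm{Sh}}(\Ld)$ and $\ph=\rh\circ\ta$.

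For uniqueness I would take the counting shortcut: the multiplication map ${\rm{Sh}}(\Ld)\times{\rm{H}}(\Ld)\to S_N$, $(\rh,\ta)\mapsto\rh\circ\ta$, is surjective by the previous step, and its domain has size $|{\rm{Sh}}(\Ld)|\cdot|{\rm{H}}(\Ld)|=\bigl(N!/(\ld_1!\cdots\ld_K!)\bigr)\cdot(\ld_1!\cdots\ld_K!)=N!=|S_N|$, so a surjection between equinumerous finite sets is a bijection. (Alternatively, uniqueness falls out of the existence argument directly: if $\ph=\rh\circ\ta$ with $\rh$ increasing on each block and $\ta$ block‑preserving, then $\ta(s_k+i)$ is forced to be $s_k$ plus the rank of $\ph(s_k+i)$ among the $\ph$-values on $B_k$, which pins down $\ta$ and hence $\rh=\ph\circ\ta^{-1}$.) There is no genuine obstacle here; the only thing to handle with care is the bookkeeping — keeping the composition order $\rh\circ\ta$ consistent with the block‑wise characterizations, and confirming that "sort within each block" puts $\ph\circ\ta^{-1}$, and not some other composite, into ${\rm{Sh}}(\Ld)$.
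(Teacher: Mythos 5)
Your proof is correct and follows essentially the same route as the paper: construct $\ta$ by sorting $\ph$'s values within each block (your $\ta$ is the inverse of the paper's, which defines $\ta(s_k+j)=\ph^{-1}(a_j^k)$ for $a_j^k$ the $j$-th smallest block value), and then get uniqueness from the count $|{\rm{Sh}}(\Ld)|\cdot|{\rm{H}}(\Ld)|=N!$, which the paper phrases equivalently as each coset of ${\rm{H}}(\Ld)$ containing exactly one shuffle. No gaps.
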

	
	\begin{proof}Consider the collection of right cosets $S_N/{\rm{H}}(\Ld)$. For each $1\leq k\leq K$ and $1\leq j\leq \ld_k$, let $a_j^k$ be the $j^{\rm{th}}$ smallest element of the set $\{\ph(s_k+1),\ph(s_k+2),\ldots, \ph(s_k+\ld_k)\}$, and define a permutation $\ta\in S_N$ by
		\[
		\ta(s_k+j)=\ph^{-1}(a_j^k)
		\]
		for $1\leq k\leq K$ and $1\leq j\leq \ld_k$. Then $\ta\in {\rm{H}}(\Ld)$, and
		\[
		\ph\circ \ta(s_k+i)<\ph\circ \ta(s_k+j)
		\]
		whenever $1\leq i<j\leq \ld_k$. Thus, every coset $T\in S_N/{\rm{H}}(\Ld)$ contains at least one shuffle permutation. Note, 
		\[
		|S_N/{\rm{H}}(\Ld)|=\frac{N!}{\ld_1!\cdots \ld_K!}=|{\rm{Sh}}(\Ld)|. 
		\]
		Thus, each coset contains a unique shuffle permutation. Define $\rh\in {\rm{Sh}}(\Ld)$ to be this unique shuffle permutation in the coset to which $\ph$ belongs. Then $\ph=\rh\circ \ta$ as desired.
	\end{proof}
	
	\begin{lem}\label{orderedshuffle} Given any $\rh\in {\rm{Sh}}(\Ld)$, there exists unique permutations $\pi\in {\rm{Bl}}(\Ld)$ and $\sm\in{\rm{Sh}}^\circ (\Ld^\pi)$ so that $\rh=\sm\circ \pi$.
	\end{lem}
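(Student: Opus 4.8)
The plan is to recover $\pi$ first, exploiting the fact (noted in \autoref{ssec:confluent}) that a block permutation is completely determined by its induced element of $S_K$, and then to read off $\sm=\rh\circ\pi^{-1}$ and check that it lands in ${\rm{Sh}}^\circ(\Ld^\pi)$. For the bookkeeping it is convenient to extend the definition of $\te_\sm$ from block permutations to an arbitrary $\sm\in S_N$: since $\sm(s_1+1),\ldots,\sm(s_K+1)$ are always distinct, they induce a unique $\te_\sm\in S_K$ with $\sm(s_{\te_\sm^{-1}(1)}+1)<\cdots<\sm(s_{\te_\sm^{-1}(K)}+1)$. I will also write $t_m=\sum_{l<m}\ld_{\te_\pi^{-1}(l)}$ for the partial sums attached to $\Ld^\pi$, so that the $m$-th $\Ld^\pi$-block is $\{t_m+1,\ldots,t_m+\ld_{\te_\pi^{-1}(m)}\}$, and $\pi$, being a block permutation, maps the $k$-th $\Ld$-block affinely and increasingly onto the $\te_\pi(k)$-th $\Ld^\pi$-block, i.e. $\pi(s_k+j)=t_{\te_\pi(k)}+j$.

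For uniqueness, suppose $\rh=\sm\circ\pi$ with $\pi\in{\rm{Bl}}(\Ld)$ and $\sm\in{\rm{Sh}}^\circ(\Ld^\pi)$. Then $\rh(s_k+1)=\sm(\pi(s_k+1))=\sm(t_{\te_\pi(k)}+1)$, and the ordered-shuffle condition on $\sm$ forces $\sm(t_1+1)<\cdots<\sm(t_K+1)$, so $\rh(s_i+1)<\rh(s_j+1)$ iff $\te_\pi(i)<\te_\pi(j)$; that is, $\te_\pi=\te_\rh$. Since $\pi\mapsto\te_\pi$ is a bijection ${\rm{Bl}}(\Ld)\to S_K$, this pins down $\pi$, and then $\sm=\rh\circ\pi^{-1}$ is forced as well.

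For existence I would simply run this construction forward: let $\pi\in{\rm{Bl}}(\Ld)$ be the unique block permutation with $\te_\pi=\te_\rh$, and set $\sm=\rh\circ\pi^{-1}$. Since $\pi^{-1}$ sends $t_m+j$ to $s_{\te_\pi^{-1}(m)}+j$, we get $\sm(t_m+j)=\rh(s_{\te_\pi^{-1}(m)}+j)$; as $j$ ranges over the $m$-th $\Ld^\pi$-block this sweeps the $\te_\pi^{-1}(m)$-th $\Ld$-block, on which $\rh\in{\rm{Sh}}(\Ld)$ is increasing, so $\sm\in{\rm{Sh}}(\Ld^\pi)$. Moreover $\sm(t_m+1)=\rh(s_{\te_\pi^{-1}(m)}+1)=\rh(s_{\te_\rh^{-1}(m)}+1)$ is increasing in $m$ by the defining property of $\te_\rh$, so in fact $\sm\in{\rm{Sh}}^\circ(\Ld^\pi)$; and $\sm\circ\pi=\rh$ by construction.

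The only real work is the index bookkeeping — keeping the partial sums $s_k$ of $\Ld$ separate from the $t_m$ of $\Ld^\pi$, and confirming the formula $\pi(s_k+j)=t_{\te_\pi(k)}+j$ for how a block permutation relocates a block. Once those conventions are fixed, each assertion above is immediate, so I do not expect a genuine conceptual obstacle: the content is carried entirely by the bijection ${\rm{Bl}}(\Ld)\cong S_K$ together with the fact that block permutations act by increasing affine maps on blocks.
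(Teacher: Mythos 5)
Your proposal is correct and follows essentially the same route as the paper: identify $\pi$ as the unique block permutation whose induced element of $S_K$ records the relative order of $\rh(s_1+1),\ldots,\rh(s_K+1)$ (the paper's $\af$, your $\te_\rh$), set $\sm=\rh\circ\pi^{-1}$, and verify membership in ${\rm{Sh}}^\circ(\Ld^\pi)$ via the identity $\sm(t_m+j)=\rh(s_{\te_\pi^{-1}(m)}+j)$, with uniqueness forced by the bijection ${\rm{Bl}}(\Ld)\cong S_K$. The only difference is cosmetic (you argue uniqueness before existence, and your explicit formula $\pi(s_k+j)=t_{\te_\pi(k)}+j$ is in fact cleaner than the paper's stated one).
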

	\begin{proof}For each $1\leq k\leq K$, define $b_k$ to be the $k^{\rm{th}}$  smallest element of the set 
		\[
		\{\rh(s_1+1),\rh(s_2+1),\ldots,\rh(s_K+1)\}.
		\]
		Let $\af\in S_K$ be the permutation satisfying 
		\[
		\rh(s_k+1)=b_{\af(k)}
		\]
		for $1\leq k\leq K$. Define $\pi\in {\rm{Bl}}(\Ld)$ by
		\[
		\pi(s_k+j)=\rh^{-1}(b_{\af(k)})+j-1
		\]
		for $1\leq k\leq K$ and $1\leq j\leq \ld_k$. Then $\af=\te_\pi$. Let $\sm=\rh\circ \pi^{-1}$. We want to show $\sm\in {\rm{Sh}}^\circ(\Ld^\pi)$. 
		
		Let $\mu=\Ld^\pi$, and let $t_j=\sum_{k=1}^{j-1}\mu_k$. Suppose $1\leq k\leq K$ and $1\leq j\leq \mu_k$. Observe that
		\begin{align*}
			\sm(t_k+j)&=\rh\circ \pi^{-1}(t_k+1+j-1)\\&=\rh\circ \pi^{-1}(\pi(s_{\af^{-1}(k)}+1)+j-1)
			\\&=\rh\circ \pi^{-1}(\pi(s_{\af^{-1}(k)}+j))\\&=\rh(s_{\af^{-1}(k)}+j).
		\end{align*}
		Thus, 
		\[
		\sm(t_k+j)=\rh(s_{\af^{-1}(k)}+j).
		\]
		If $1\leq i<j\leq \mu_k$, then $\rh(s_{\af^{-1}(k)}+i)<\rh(s_{\af^{-1}(k)}+j)$ since $\rh\in {\rm{Sh}}(\Ld)$. Thus, $\sm(t_k+i)<\sm(t_k+j)$, and $\sm\in {\rm{Sh}}(\Ld^\pi)$. 
		
		Next, if $i<j$, then
		\[
		\rh(s_{\af^{-1}(i)}+1)=b_i<b_j=\rh(s_{\af^{-1}(j)}+1).
		\]
		Thus, $\sm(t_i+1)<\sm(t_j+1)$, and $\sm\in {\rm{Sh}}^\circ (\Ld^\pi)$. 
		
		It remains to show that this decomposition $\rh=\sm\circ \pi$ is unique. Suppose $\rh=\sm'\circ \pi'$ for some $\pi'\in {\rm{Bl}}(\Ld)$ and $\sm'\in {\rm{Sh}}^\circ (\Ld^{\pi'})$. As before, let $\af\in S_K$ be the permutation satisfying
		\[
		\rh(s_k+1)=b_{\af(k)}
		\]
		for $1\leq k\leq K$. Define $c_k$ to be the $k^{\rm{th}}$ smallest element of the set $\{\pi'(s_1+1),\pi'(s_2+1),\ldots,\pi'(s_K+1)\}$. Since $\sm'\in {\rm{Sh}}^\circ (\Ld^{\pi'})$, we have $\sm'(c_k)=b_k$ for each $1\leq k\leq K$, and thus $\te_{\pi'}=\af=\te_{\pi}$. Since $\pi$ is completely determined by $\te_{\pi}$, we have $\pi=\pi'$. Thus, $\sm'=\pi^{-1}\circ \rh=\sm$ as desired.
	\end{proof}
	
	Finally, Lemma \ref{decomp} follows immediately from applying Lemma \ref{orderedshuffle} after Lemma \ref{shufflelem}.

	\bibliography{pfmlg}

\begin{thebibliography}{10}

\bibitem{adler}
M.~Adler, P.~J. Forrester, T.~Nagao, and P.~Van~Moerbeke.
\newblock Classical skew orthogonal polynomials and random matrices.
\newblock {\em J Stat Phys}, 99(1-2):141--170, 2000.

\bibitem{Andreief1886}
C.~Andr{\'e}ief.
\newblock Note sur une relation entre les int{\'e}grales d{\'e}finies des
  produits des fonctions.
\newblock {\em M{\'e}m Soc Sci Phys Bordeaux}, 2:1--14, 1886.

\bibitem{Aomoto1987}
K.~Aomoto.
\newblock {On the complex {S}elberg integral}.
\newblock {\em Q. J. Math.}, 38(4):385--399, 1987.

\bibitem{Berezin1966}
F.~A. Berezin.
\newblock {\em Method of second quantization}, volume~24 of {\em Pure and
  applied physics}.
\newblock Academic Press, 1966.

\bibitem{Borodin2009}
A.~Borodin and C.~Sinclair.
\newblock The ginibre ensemble of real random matrices and its scaling limits.
\newblock {\em Commun Math Phys}, 291(1):177–224, 2009.

\bibitem{Cayley1843}
A.~Cayley.
\newblock On the theory of determinants.
\newblock {\em Trans. Cambridge Phil. Soc.}, 8:1--16, 1843.

\bibitem{Chen1953}
K.~T. Chen.
\newblock Iterated integrals and exponential homomorphisms.
\newblock {\em Proc Lond Math Soc (3)}, s3-4(1):502--512, 1953.

\bibitem{deBruijn1955}
N.G. {de Bruijn}.
\newblock On some multiple integrals involving determinants.
\newblock {\em J Indian Math Soc (N S )}, 19:133--151, 1955.

\bibitem{Dumitriu2002}
I.~Dumitriu and A.~Edelman.
\newblock Matrix models for beta ensembles.
\newblock {\em J Math Phys}, 43(11):5830--5847, 2002.

\bibitem{Dyson1962}
F.~J. Dyson.
\newblock Statistical theory of the energy levels of complex systems. {I-III}.
\newblock {\em J Math Phys}, 3:140--175, 1962.

\bibitem{Dyson1963}
F.~J. Dyson and M.~L. Mehta.
\newblock Statistical theory of the energy levels of complex systems. {IV}.
\newblock {\em J Math Phys}, 3:701--712, 1963.

\bibitem{Eilenberg1953}
S.~Eilenberg and S.~Mac~Lane.
\newblock On the groups of {H(pi,n)}.
\newblock {\em Ann Math}, 1953.

\bibitem{Evans2009}
S.~Evans and A.~Gottlieb.
\newblock Hyperdeterminantal point processes.
\newblock {\em Metrika}, 69:85--99, 03 2009.

\bibitem{Forrester2010}
P.~Forrester.
\newblock {\em Log-gases and random matrices}.
\newblock Princeton University Press, 2010.

\bibitem{Forrester1984Exact}
P.~J. Forrester.
\newblock An exactly solvable two component classical {C}oulomb system.
\newblock {\em J Aust Math Soc}, 26(2):119--128, 1984.

\bibitem{Forrester1984Int}
P.~J. {Forrester}.
\newblock {Interpretation of an exactly solvable two-component plasma}.
\newblock {\em J Stat Phys}, 35(1-2):77--87, 1984.

\bibitem{Forrester2021}
P.~J. Forrester and S.-H. Li.
\newblock Classical skew orthogonal polynomials in a two-component log-gas with
  charges +1 and +2.
\newblock {\em Adv Math}, 383:107678, 2021.

\bibitem{Forrester2007}
P.~J. Forrester and T.~Nagao.
\newblock Eigenvalue statistics of the real {G}inibre ensemble.
\newblock {\em Phys Rev Lett}, 99:050603, 2007.

\bibitem{Luque2002}
J.~Luque and J.~Thibon.
\newblock Pfaffian and {H}afnian identities in shuffle algebras.
\newblock {\em Adv Appl Math}, 29:620--646, 2002.

\bibitem{Luque2003}
J.~Luque and J.~Thibon.
\newblock Hankel hyperdeterminants and {S}elberg integrals.
\newblock {\em J Phys A}, 36:5267--5292, 2003.

\bibitem{Luque2004}
J.~Luque and J.~Thibon.
\newblock Hyperdeterminantal calculations of {S}elberg's and {A}omoto's
  integrals.
\newblock {\em Mol Phys}, 102(11-12):1351--1359, 2004.

\bibitem{mahoux}
G.~Mahoux and M.~L. Mehta.
\newblock A method of integration over matrix variables: {IV}.
\newblock {\em J Phys I}, 79(3):327--340, 1991.

\bibitem{Mehta2004}
M.~L. Mehta.
\newblock {\em Random matrices}.
\newblock Elsevier/Academic Press, 2004.

\bibitem{Meray1899}
C.~Meray.
\newblock Sur un d{\'e}terminant dont celui de vandermonde n'est qu'un
  particulier.
\newblock {\em Rev Math Sp{\'e}c}, pages 217--219, 1899.

\bibitem{Ramirez2011}
J.~Ramírez, B.~Rider, and B~Virág.
\newblock Beta ensembles, stochastic airy spectrum, and a diffusion.
\newblock {\em J Amer Math Soc}, 24(4):919–944, 2011.

\bibitem{Rider2010}
B.~Rider, C.~Sinclair, and Y.~Xu.
\newblock A solvable mixed charge ensemble on the line: global results.
\newblock {\em Probab Theory Rel}, 155:127--164, 2010.

\bibitem{Selberg1944}
A.~Selberg.
\newblock Bemerkninger om et multipelt integral.
\newblock {\em Nor. Mat. Tidssk.}, 26:71--78, 1944.

\bibitem{Shum2014}
C.~Shum and C.~Sinclair.
\newblock A solvable two-charge ensemble on the circle, April 2014.
\newblock arXiv:1404.5290.

\bibitem{Sinclair2011}
C.~Sinclair.
\newblock Ensemble averages when beta is a square integer.
\newblock {\em Monatsh Math}, 166:121--144, 2011.

\bibitem{Sinclair2012}
C.~Sinclair.
\newblock The partition function of multicomponent log-gases.
\newblock {\em J Stat Phys}, 45(16):165002, 2012.

\bibitem{wid}
C.~A. Tracy and H.~Widom.
\newblock Correlation functions, cluster functions, and spacing distributions
  for random matrices.
\newblock {\em J Stat Phys}, 92:809--835, 1988.

\bibitem{wig}
E.~Wigner.
\newblock Characteristic vectors of bordered matrices with infinite dimension.
\newblock {\em Ann Math}, 62(3):548--564, 1955.

\bibitem{wish}
J.~Wishart.
\newblock The generalized product moment distribution in samples from a normal
  multivariate population.
\newblock {\em {Biometrika}}, 20 A(1/2):32--52, 1928.

\end{thebibliography}

\end{document}